\newcommand{\levytbp}{\mu_{\tbp}}
\newcommand{\tbp}{\mathrm{3BP}}
\newcommand{\levync}{\nu}
\newcommand{\levyonepop}{\mu}
\newcommand{\ratemeasproposed}{\nu_{\mathrm{prop}}}
\newcommand{\allhyper}{\bm{\zeta}} 
\newcommand{\concinternal}{c}
\newcommand{\discountinternal}{\sigma}
\newcommand{\rateinternal}{\sigma}
\newcommand{\corrinternal}{\gamma}
\newcommand{\mass}{\alpha}
\newcommand{\conc}[1]{\concinternal_{#1}}
\newcommand{\discount}[1]{\discountinternal_{#1}}
\newcommand{\rate}[1]{\rateinternal_{#1}}
\newcommand{\corr}[1]{\corrinternal_{#1}}
\newcommand{\numpops}{P} 
\newcommand{\popidx}{p} 
\newcommand{\notpopidx}{\neg \popidx} 
\newcommand{\samplesize}[1]{N_{#1}}
\newcommand{\genericsamplesize}{\samplesize{\popidx}}
\newcommand{\samplesizenosubscript}{N}
\newcommand{\occurrence}[1]{k_{#1}}
\newcommand{\followupidx}[1]{m_{#1}}
\newcommand{\vecoccurrence}{\bm{k}}
\newcommand{\zerovec}{\bm{0}}
\newcommand{\unitidx}{n}
\newcommand{\futureunitidx}{m}
\newcommand{\vecsamplesize}{\bm{N}}
\newcommand{\futuresamplesize}[1]{M_{#1}}
\newcommand{\genericfuturesamplesize}{\futuresamplesize{\popidx}}
\newcommand{\vecfuturesamplesize}{\bm{M}}
\newcommand{\variantfreqplain}{\theta} 
\newcommand{\numvariants}[1]{J_{#1}}
\newcommand{\genericnumvariants}{\numvariants{\vecsamplesize}}
\newcommand{\variantidx}{\ell}
\newcommand{\variantlabel}[1]{\psi_{#1}}
\newcommand{\genericvariantlabel}{\variantlabel{\variantidx}}
\newcommand{\variantcount}[3]{x_{#1,#2,#3}}
\newcommand{\genericvariantcount}{\variantcount{\popidx}{\unitidx}{\variantidx}}
\newcommand{\genericfuturevariantcount}{\variantcount{\popidx}{\genericsamplesize+\futureunitidx}{\variantidx}}
\newcommand{\measurevariantcount}[2]{X_{#1,#2}}
\newcommand{\genericmeasurevariantcount}{\measurevariantcount{\popidx}{\unitidx}}
\newcommand{\pilotdata}{\measurevariantcount{1:2}{\vecone:\vecsamplesize}}
\newcommand{\vecvariantfreq}[1]{\bm{\variantfreqplain}_{#1}}
\newcommand{\genericvecvariantfreq}{\vecvariantfreq{\variantidx}}
\newcommand{\variantfreq}[2]{\variantfreqplain_{#1, #2}}
\newcommand{\variantfreqperpop}[1]{\variantfreqplain_{#1}} 
\newcommand{\genericvariantfreq}{\variantfreq{\popidx}{\variantidx}}
\newcommand{\news}[2]{U_{ #1}^{(#2)}}
\newcommand{\obsnewsNoargs}{u}
\newcommand{\obsnews}[2]{\obsnewsNoargs_{ #1}^{(#2)}}
\newcommand{\genericnews}{\news{\vecsamplesize}{\vecfuturesamplesize}}
\newcommand{\genericfreqnews}{\news{\vecsamplesize}{\vecfuturesamplesize, \vecoccurrence}}
\newcommand{\newsparam}[2]{\lambda_{#1}^{(#2)}}
\newcommand{\genericfreqnewsparam}{\newsparam{\vecsamplesize}{\vecfuturesamplesize, \vecoccurrence}}
\newcommand{\vecone}{\textbf{1}} 
\newcommand{\dirac}[1]{\delta_{#1}}
\newcommand{\bernoullirv}[1]{\textrm{Bernoulli}(#1)}
\newcommand{\pois}{\mathrm{Poisson}}
\newcommand{\randommeas}{\Theta}
\newcommand{\de}{\mathrm{d}}
\newcommand{\BeP}{\mathrm{BeP}}
\newcommand{\mBeP}{\mathrm{mBeP}} 
\newcommand{\PPP}{\mathrm{PPP}}
\newcommand{\NN}{\mathbb{N}}
\newcommand{\train}{\text{train}}
\newcommand{\propt}{\rho}
\newcommand{\boundsgeneric}{\Phi}
\newcommand{\poissonize}{\boundsgeneric}
\newcommand{\lowerbound}{\boundsgeneric_{\mathrm{lower}}}
\newcommand{\altlowerbound}{\boundsgeneric_{\mathrm{lower}}'} 
\newcommand{\upperbound}{\boundsgeneric_{\mathrm{upper}}}
\newcommand{\lowerboundproject}{\boundsgeneric_{\mathrm{lower}}^{\mathrm{project}}}
\newcommand{\upperboundproject}{\boundsgeneric_{\mathrm{upper}}^{\mathrm{project}}}
\newcommand{\lowerboundproport}{\boundsgeneric_{\mathrm{lower}}^{\mathrm{proport}}}
\newcommand{\upperboundproport}{\boundsgeneric_{\mathrm{upper}}^{\mathrm{proport}}}
\newcommand{\smallpositive}{\delta}
\newcommand{\constantgeneric}{C}
\newcommand{\constantupper}[1]{\constantgeneric_{\mathrm{upper},#1,\smallpositive}}
\newcommand{\constantlower}[1]{\constantgeneric_{\mathrm{lower},#1}}
\newcommand{\constantlowerwithotherrate}[1]{\constantlower{#1}'}
\newcommand{\constantlowerwithrateprime}[1]{\constantlower{#1}''}
\definecolor{SkyBlue}{RGB}{14, 118, 188}
\definecolor{BrightRed}{RGB}{223,82, 78}
\DeclareMathOperator{\BER}{Bernoulli}
\DeclareMathOperator{\Pois}{Poisson}
\def\Bernoulli{\mathrm{Bernoulli}}
\def\Poisson{\mathrm{Poisson}}
\def\Beta{\mathrm{Beta}}
\def\btheta{\boldsymbol{\theta}}
\def\bTheta{\boldsymbol{\Theta}}
\def\R{\mathbb{R}}
\def\E{\mathbb{E}}
\newtheorem{myTheorem}{Theorem}
\newtheorem{myLemma}{Lemma}
\newtheorem{myExample}{Example}
\newtheorem{myProposition}{Proposition}
\newtheorem{myRemark}{Remark}
\newtheorem{myCorollary}{Corollary}
\newenvironment{desideratap}[1]{
  
  \desiderataalt
}{\enddesiderataalt}
\crefname{desiderataalt}{Desideratum}{Desiderata}
\crefname{desiderataalt}{desideratum}{desiderata}
\Crefname{desiderataalt}{Desideratum}{Desiderata}
\newenvironment{conditionp}[1]{

\conditionalt
}{\endconditionalt}
\title{Double trouble: Predicting new variant counts across two heterogeneous populations}
\author{%
  Yunyi Shen 
  \thanks{EECS, MIT, \texttt{yshen99@mit.edu} }\and
   Lorenzo Masoero  \thanks{EECS, MIT, \texttt{lom@mit.edu} }\ \and 
  Joshua G. Schraiber \thanks{Department of Quantitative and Computational Biology, USC, \texttt{schraibe@usc.edu}} 
  \and 
  Tamara Broderick \thanks{EECS, MIT, \texttt{tbroderick@mit.edu}}
}
\begin{document}
\maketitle
\begin{abstract}


	Collecting genomics data across multiple heterogeneous populations (e.g., across different cancer types) has the potential to improve our understanding of disease. Despite sequencing advances, though, resources often remain a constraint when gathering data. So it would be useful for experimental design if experimenters with access to a pilot study could predict the number of new variants they might expect to find in a follow-up study: both the number of new variants shared between the populations and the total across the populations. While many authors have developed prediction methods for the single-population case, we show that these predictions can fare poorly across multiple populations that are heterogeneous. We prove that, surprisingly, a natural extension of a state-of-the-art single-population predictor to multiple populations fails for fundamental reasons. We provide the first predictor for the number of new shared variants and new total variants that can handle heterogeneity in multiple populations. We show that our proposed method works well empirically using real cancer and population genetics data.
\end{abstract}

\newpage
\section{Introduction}
\label{sec:introduction}

The genomic revolution has fundamentally changed our understanding of human history \citep{green2010draft, gravel2011demographic, tennessen2012evolution, lazaridis2014ancient, haak2015massive}, the genetics of complex disease \citep{lek2016analysis, karczewski2020mutational, wang2021rare, backman2021exome,chen2023genomic}, and the development of cancer \citep{cancer2008comprehensive, cancer2013cancer, kandoth2013mutational, jayasinghe2018systematic, hoadley2018cell, huang2018pathogenic}. As more and more sequencing data is gathered, novel variants continue to be discovered \citep{lek2016analysis, karczewski2020mutational} and yield a deeper insight into the evolutionary processes that have shaped the observed patterns of genetic variation \citep{agarwal2021mutation, agarwal2023relating}. Moreover, rare variants, which are typically detected in large samples, have become increasingly important in understanding the etiology of disease and are a key tool in development of novel therapeutics against common disease \citep{szustakowski2021advancing}.

Most studies of genetic variation sample from two or more groups that may share some portion of their variation, such as different human populations or different tumor types. For instance, sampling from multiple human populations is essential for discovering more variants \citep{karczewski2020mutational,chen2023genomic} and is crucial to expanding equity and inclusion in genomic medicine \citep{popejoy2016genomics, hindorff2018prioritizing, fatumo2022roadmap}. Due to the shared evolutionary history of all humans, in this setting many common variants will be shared between populations \citep{biddanda2020variant}, but differences in allele frequency between populations are key to understanding adaptation \citep{novembre2009spatial} and disease \citep{fu2011multi, assimes2021large, roselli2018multi}. Similarly, when collecting disease case and control samples to examine the enrichment of rare variants in cases, the vast majority of the variants will be shared. Rare variants shared across different type of diseases have drawn attention since they may be linked to common genetic bases of different diseases \citep{rafnar2009sequence, bojesen2013multiple, rashkin2020pan}. On the other hand, most variation between tumors in different individuals will be private, but shared variation between tumors across samples may point to the genetic basis of cancer, and thus paths toward novel treatments \citep{st2014genomic, ma2015rise,prior2020frequency,pirozzi2021implications}.

Despite large decreases in the cost of sequencing \citep{christensen2015assessing, park2016trends, schwarze2018whole}, detecting novel variants still requires a large enough set of samples to benefit from careful planning. This is especially true when attempting to expand our catalogue of variation across diverse human groups, because some of the most diverse human groups have been studied the least \citep{popejoy2016genomics}. It would therefore be of significant practical utility to have a reliable and data-informed planning method; namely, a prediction method should accurately estimate the number of novel variants in a new (follow-up) study of multiple populations, given data from a typically small, previous (pilot) study. Moreover, prediction of the number of shared or private variants seen in a follow-up study can be useful as a null model for identifying ``unusual'' variants or classes of variants that may indicate their role in biologically interesting processes, such as adaptation or disease. In all of these cases, it is crucial to account for novel variants that are shared between groups as opposed to those that are private to a single group.

Several researchers have developed approaches to predict the number of new variants that will be detected in a follow-up sequencing study, given information from a pilot study \citep{Camerlenghi2021, Masoero2022, chakraborty2019using, zou2016quantifying,gravel2014predicting,ionita2009estimating}. However, all of these methods assume samples are from a single population. So these methods cannot be used directly to predict the number of new shared variants across different populations in a follow-up study.
And because these methods do not model the possibility of shared variants between groups, single-population approaches can be expected to misestimate the number of novel variants in a follow-up study.

In this work, we develop an approach to predict the number of novel variants discovered in samples from multiple populations or groups. We focus on the case of two populations here, although in principle our approach can be extended to more than two. We work in the setting where a pilot study has collected genetic data in each of the two groups, and the experimenter wishes to collect additional data in a larger study.
In principle, the variant prediction challenge can be tackled in two ways: (1) one could construct an explicit model of variation within and between groups based on first principles or (2) one could employ a ``black box'' that is agnostic to the underlying biology. This work, like the single-population approaches referenced above, adopts the latter approach. We hope that, by taking the black-box approach, our method can apply to data that arise from very different processes -- ranging across tumor sequencing, case-control studies, population genetics, and other settings. 

In the one-population case, \citet{Masoero2022} showed that a Bayesian nonparametric (BNP) approach can deliver strong empirical performance. 
However, we here prove that a natural two-population extension to the method of \citet{Masoero2022} fails for fundamental reasons (\cref{sec:cant_have_it_all}). We nonetheless demonstrate how to take advantage of the flexibility of the Bayesian framework (\cref{sec:bnp_two_pops}) to develop a novel two-population BNP model (\cref{sec:our_model}) and methodology (\cref{sec:prediction_all}). We provide theory to establish the desirable power-law properties of our model in \cref{sec:powerlaw}. Finally, we use both simulations and analysis of real cancer and population genetic datasets to showcase the accuracy and robustness of our model (\cref{sec:experiments}); along the way, we highlight multi-population circumstances where using one-population approaches can be misleading.

\section{Problem setup}
\label{sec:setup}

We consider the case where we have genomic samples from a pilot study conducted in two populations, indexed by $\popidx \in \{1,2\}$,
and are interested to predict variant numbers in a follow-up study. 
Let $\samplesize{\popidx}$ be the number of pilot samples from population $\popidx$; collect the vector of pilot sample
cardinalities in $\vecsamplesize := (\samplesize{1}, \samplesize{2})$.
We assume there is a common fixed reference genome for the two populations, and we assume for convenience that
each variant is given a unique label; for instance, one might assign a random label, uniform on $[0,1]$, to a variant upon its discovery.
Take the $\unitidx$-th sample in the $\popidx$-th population; we let $\genericvariantcount$ equal 1 if the variant with label 
$\genericvariantlabel$ is observed in this sample and 0 otherwise. Let $\genericnumvariants$ denote the (scalar) total number
of unique variants observed across the $\vecsamplesize$ samples. Then the following measure will have a mass
of size 1 at each variant label where the $\unitidx$-th sample in the $\popidx$-th population differs from the reference genome:
\begin{equation} \label{eq:sample_meas}
	\genericmeasurevariantcount := \sum_{\variantidx=1}^{\genericnumvariants} \genericvariantcount \dirac{\genericvariantlabel},
\end{equation}
where $\dirac{\genericvariantlabel}$ is a Dirac mass at $\genericvariantlabel$; that is, $\dirac{\genericvariantlabel}(\variantlabel{})$ equals 1 if $\variantlabel{} = \genericvariantlabel$ and 0 otherwise. 
Equivalently, we might let $\genericvariantcount = 0$ for an imagined countable infinity of possible yet-to-be-observed variants, each assigned a new label $\genericvariantlabel$ for $\variantidx > \genericnumvariants$; then we have
\begin{equation}
	\genericmeasurevariantcount = \sum_{\variantidx=1}^{\infty} \genericvariantcount \dirac{\genericvariantlabel}. \label{eq:observation_measure_inf}
\end{equation}
To represent the pilot study, we write $1{:}\samplesize{\popidx}$ for indices $\{1,\ldots,\samplesize{\popidx}\}$. We can then write
$\measurevariantcount{\popidx}{1:\samplesize{\popidx}}$ for the pilot samples in population $\popidx$. And we let
$\pilotdata$ denote the full collection of pilot samples, across both populations.

Given the pilot study, we would like to predict (1) the total number of new variants in a follow-up study, as well as (2) the number of new variants shared between the populations.
To that end, let $\futuresamplesize{\popidx}$ be the number of new samples from $\popidx$ considered for the follow-up study; collect
the vector of follow-up study sample cardinalities in $\vecfuturesamplesize = (\futuresamplesize{1},\futuresamplesize{2})$. 
First, then, we are interested in the number of variants that (a) are not present in the first $\vecsamplesize$ pilot samples but (b) are
present in at least one of the $\vecfuturesamplesize$ follow-up samples. We call this quantity $\genericnews$, 
where the $U$ stands for ``unseen,'' and write:
\begin{equation}
	\genericnews = 
		\sum_{\variantidx=1}^{\infty} \left[ \prod_{\popidx=1}^{2} \mathbf{1}\left( \sum_{\unitidx=1}^{\genericsamplesize} \genericvariantcount = 0\right) \right] 
				\mathbf{1}\left( \sum_{\popidx=1}^{2} \sum_{\futureunitidx=1}^{\genericfuturesamplesize} 
				\variantcount{\popidx}{\genericsamplesize+\futureunitidx}{\variantidx} > 0 \right). 
	\label{eq:news}
\end{equation}

We next consider the number of new variants appearing $\occurrence{1}$ times in population 1 and $\occurrence{2}$ times in population 2.
In a single population, the variants appearing once are often called singletons, the variants appearing twice called doubletons, and so on. Analogously and collecting $\vecoccurrence := (\occurrence{1},\occurrence{2})$, we will denote the variants appearing $\vecoccurrence$ times across the two populations as \emph{$\vecoccurrence$-tons}.
We can write the number of $\vecoccurrence$-tons in the follow-up as
\begin{equation}
	\genericfreqnews = 
		\sum_{\variantidx=1}^{\infty} \left[ \prod_{\popidx=1}^{2} \mathbf{1}\left( \sum_{\unitidx=1}^{\genericsamplesize} \genericvariantcount = 0\right) \right]
		 \left[ \prod_{\popidx=1}^{2} \mathbf{1}\left( \sum_{\futureunitidx=1}^{\genericfuturesamplesize} \genericfuturevariantcount = \occurrence{\popidx} \right) \right]. 
	\label{eq:news_freq}
\end{equation}
The total number of new variants appearing in both populations in the follow-up can be found by summing $\genericfreqnews$ over integers $\occurrence{1},\occurrence{2}\ge1$. And
$\genericnews$ can similarly be recovered by summing over $\vecoccurrence$ such that $\occurrence{1} + \occurrence{2} \ge 1$.

Our aim in what follows is, then, to predict $\genericnews$ and $\genericfreqnews$, across $\vecfuturesamplesize$ and $\vecoccurrence$, given the pilot data
$\pilotdata$.

\section{A Bayesian framework for two populations}
\label{sec:bnp_two_pops}

Before choosing a particular model, we describe a general Bayesian framework for predicting 
follow-up variant numbers given pilot data in two populations.

First, we assume that variants in a particular sample appear independently of each other;
specifically, within any population $\popidx$ and sample $\unitidx$, we assume $\genericvariantcount$ are independent
across $\variantidx$. While ignoring linkage disequilibrium may seem at first to be too strong of an oversimplification, 
every existing work in a single population makes the same assumption \citep{Camerlenghi2021, Masoero2022,chakraborty2019using, zou2016quantifying,gravel2014predicting,ionita2009estimating}. And importantly, our empirical results below
show that our proposed method performs well despite this assumption.

Next we assume that samples within a single population are exchangeable; that is,
we assume a distribution over the samples and further that the probability of the data is unchanged 
when the samples are observed in a different order. By de Finetti's theorem \citep{de1929funzione}, the 
two preceding assumptions imply the existence of an unknown parameter $\genericvariantfreq \in [0,1]$
representing the frequency of variant $\variantidx$ in population $\popidx$,
such that $\genericvariantcount \sim \bernoullirv{\genericvariantfreq}$, i.i.d.\ across $\unitidx$
and independent across $\popidx$ and $\variantidx$.

We will find it convenient to collect the 
$\variantidx$-th variant frequencies across populations
in the vector $\genericvecvariantfreq = (\variantfreq{1}{\variantidx}, \variantfreq{2}{\variantidx})$. 
Then we can match the frequency vector $\genericvecvariantfreq$ with its variant label in the
following vector-valued 
measure: $\randommeas :=  \sum_{\variantidx=1}^{\infty} \genericvecvariantfreq \dirac{\genericvariantlabel}$.

To take a Bayesian approach, it remains to specify a prior over the frequencies.
In the one-population case, \citet{Masoero2022} generate the frequencies according
to a Poisson point process with a diffuse (i.e., non-atomic) rate measure, so we propose
an analogous approach in the two-population case. In particular, we treat the labels $\genericvariantlabel$
as values of convenience rather than meaningful in any way; without loss of generality, 
we can imagine them as generated i.i.d.\ uniform on $[0,1]$ in what follows, to ensure almost sure 
uniqueness.
Then we write $\randommeas \sim \PPP(\levync)$ to indicate that $\randommeas$
is generated by drawing $\{\genericvecvariantfreq\}_{\variantidx=1}^{\infty}$ from a Poisson
point process (PPP) with rate measure $\levync(\de\vecvariantfreq{})$ and drawing the $\genericvariantlabel$
i.i.d.\ uniform on $[0,1]$.

We call the resulting model the \emph{two-population model}. When we draw the $\genericvariantcount$ conditional on 
$\randommeas$ with the independent Bernoulli draws described above and collect them in the measures 
$\pilotdata$ as in \cref{eq:sample_meas}, we say that $\pilotdata$
is drawn according to a \emph{multiple-population Bernoulli process} ($\mBeP$) with measure parameter $\randommeas$ and count-vector parameter $\vecsamplesize$. Finally, then the two population model
can be written succinctly as:
\begin{align}
	\begin{split}
		\randommeas & \sim \PPP(\levync) \quad \textrm{ (Prior) } \\
		\pilotdata \mid \randommeas & \sim \mBeP(\randommeas, \vecsamplesize). \quad \textrm{ (Likelihood) }
		\label{eq:generative_model}
	\end{split}
\end{align}
It remains to choose a rate measure $\levync(\de\vecvariantfreq{})$ for the two-population model.

\section{A natural extension from one population fails}
\label{sec:cant_have_it_all}
We here delineate desired model behaviors and then use these behaviors to rule out what
might at first seem like a natural choice of the rate measure $\levync(\de\btheta)$, namely a factorized rate measure.

As a first desirable behavior in our present model, we know that any real-life sample can, out of necessity, exhibit only finitely many variants.
\begin{desideratap}{A}
	Each sample exhibits finitely many variants. 
	I.e., $\forall \popidx, \unitidx$, we have $\sum_{\variantidx = 1}^{\infty} \genericvariantcount < \infty$.
	\label{req:finite_mean}
\end{desideratap}
Second, while in reality there exists a fixed, finite upper bound on the number of variants,
we are considering cases where the number of samples in the pilot and follow-up are small relative to that upper bound.
Therefore, it is conceptually and computationally convenient to avoid modeling the upper bound entirely;
instead, we assume that, if we sequence enough new samples, we will always discover new variants.
\begin{desideratap}{B}
	No matter how many variants we have seen so far in any population, there are always more variants to discover in each population. That is, if we were to keep sequencing samples (no matter which population each sample is from), we would eventually encounter a new variant.
	\label{req:inf_mass}
\end{desideratap}

The model of \citet{Masoero2022} satisfies \cref{req:finite_mean,req:inf_mass} (for a single population; i.e., $p \in \{1\}$).
In particular, \citet{Masoero2022} generate frequencies according to a PPP with rate measure
$\levytbp(\de\variantfreqplain)= \mass{} \variantfreqplain^{-\discount{}-1}(1-\variantfreqplain)^{\conc{}+\discount{}-1}\de\variantfreqplain$
and hyperparameter values $\mass>0, \discount{} \in [0,1), \conc{}>-\discount{}$.
Given a particular variant label, variants appear in samples i.i.d.\ Bernoulli across samples, conditioned
on the frequency.
The resulting frequencies are said to be generated according to a \emph{three-parameter
beta process} (3BP) \citep{teh2009indian,Broderick2012,james2017bayesian}.
Since $\int \variantfreqplain \levytbp(\de\variantfreqplain)<\infty$, 
the number of variants in any sample is a.s.\ finite. And since $\int \levytbp(\de\variantfreqplain)=\infty$,
the number of non-zero frequencies from the PPP is countably infinite.

Given the success of this one-population rate measure,
a natural proposal for the two-population case is to choose a PPP rate measure 
that factorizes into two familiar one-population rate measures.
\begin{conditionp}{C} \label{req:product_measure}
	The rate measure for the frequency-generating PPP in two populations factorizes across populations:
	$\levync(\de\btheta) = \levyonepop(\de\variantfreqperpop{1}) \levyonepop(\de\variantfreqperpop{2})$.  
\end{conditionp}

Our next result shows that a two-population model with a factorized rate measure cannot simultaneously
satisfy both \cref{req:finite_mean,req:inf_mass}.

\begin{myTheorem} \label{prop:incomp}
	Take the two-population model of \cref{eq:generative_model}. If the Poisson point process
	rate measure in the prior factorizes across populations according to \cref{req:product_measure},
	the model cannot simultaneously generate samples with finitely many variants (\cref{req:finite_mean})
	and guarantee that there are always more variants to discover (\cref{req:inf_mass}).
\end{myTheorem}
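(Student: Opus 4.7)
The plan is to translate each desideratum into an integral condition on the rate measure and then show that the factorization in \cref{req:product_measure} makes the two conditions mutually exclusive.

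First, I would apply the marking theorem for Poisson point processes to the generative model in \cref{eq:generative_model}. Conditioning on the variant frequencies $\{\genericvecvariantfreq\}$, the event that variant $\ell$ is present in sample $(p,n)$ has probability $\variantfreq{p}{\ell}$ independently across $\ell$. Hence the collection of $\btheta_\ell$ for which $\variantcount{p}{n}{\ell} = 1$ is itself a PPP with rate $\theta_p \levync(\de\btheta)$, so the total variant count in sample $(p,n)$ is $\pois\bigl(\int \theta_p\, \levync(\de\btheta)\bigr)$. A Poisson random variable is a.s.\ finite if and only if its mean is finite, so \cref{req:finite_mean} is equivalent to
\[
\int \theta_p\, \levync(\de\btheta) < \infty \qquad \text{for } p = 1,2.
\]
Similarly, the number of atoms $\btheta_\ell$ with $\variantfreq{p}{\ell} > 0$ is $\pois\bigl(\int \mathbf{1}[\theta_p > 0]\, \levync(\de\btheta)\bigr)$; this must be a.s.\ infinite for \cref{req:inf_mass} to hold, since otherwise the pool of variants with positive frequency in population $p$ would be exhausted after enough sampling. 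Thus \cref{req:inf_mass} is equivalent to
\[
\int \mathbf{1}[\theta_p > 0]\, \levync(\de\btheta) = \infty \qquad \text{for } p = 1,2.
\]

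Next I would substitute the factorized form $\levync(\de\btheta) = \levyonepop(\de\variantfreqperpop{1})\levyonepop(\de\variantfreqperpop{2})$ and apply Tonelli's theorem (all integrands are nonnegative). For \cref{req:finite_mean} this gives
\[
\Bigl(\int \theta\, \levyonepop(\de\theta)\Bigr)\,\levyonepop([0,1]) < \infty,
\]
while for \cref{req:inf_mass} this gives
\[
\levyonepop(\{\theta > 0\})\,\levyonepop([0,1]) = \infty.
\]
Since $\levyonepop(\{\theta > 0\}) \le \levyonepop([0,1])$, the second identity forces $\levyonepop([0,1]) = \infty$ and $\levyonepop(\{\theta > 0\}) > 0$. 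But the first identity then forces $\int \theta\, \levyonepop(\de\theta) = 0$, i.e., $\levyonepop$ is supported on $\{0\}$, contradicting $\levyonepop(\{\theta > 0\}) > 0$.

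The step that requires the most care is the translation of the desiderata, which are stated in plain-language terms, into integral conditions, particularly for \cref{req:inf_mass}: I need to argue that if $\int \mathbf{1}[\theta_p > 0]\, \levync(\de\btheta) < \infty$ then the set of ``discoverable'' variants in population $p$ is a.s.\ finite, so continued sampling would fail to uncover new variants (by a standard Borel–Cantelli or coupon-collector argument on the countable atom set). The rest is an algebraic observation about products of nonnegative factors, with the only measure-theoretic subtlety being a harmless convention that $\levyonepop(\{0\}) = 0$ (atoms at zero frequency correspond to variants that never appear and can be removed without changing the model).
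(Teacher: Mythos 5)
Your proof is correct for the statement as literally written, and it follows the same overall strategy as the paper's: both arguments translate \cref{req:finite_mean} into the first-moment condition $\int \variantfreqperpop{\popidx}\,\levync(\de\vecvariantfreq{}) < \infty$ and \cref{req:inf_mass} into an infinite-mass condition on each coordinate's marginal (the paper isolates this equivalence in a lemma citing \citet{Broderick2018}; your marking-theorem/Poisson-count derivation is the same underlying argument), and both then show the product structure makes the two conditions incompatible. Where you genuinely diverge is the endgame. The paper runs a case analysis on which factor of the marginal mass $\int_{(0,1]}\levync_{\popidx}\cdot\int_{[0,1]}\levync_{\notpopidx}$ is infinite, whereas you collapse all cases at once using the inequality $\levyonepop(\{\variantfreqplain>0\}) \le \levyonepop([0,1])$. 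That inequality is available only because you read \cref{req:product_measure} literally, with the \emph{same} measure $\levyonepop$ appearing in both factors. The paper's intuition sketch immediately after the theorem, and its appendix generalization, make clear that the intended hypothesis is a product of possibly \emph{different} measures $\levync_1 \otimes \levync_2$; in that setting $\levync_1(\{\variantfreqplain_1>0\})$ and $\levync_2([0,1])$ are masses of different measures and cannot be compared, so you must fall back on the case analysis: if $\levync_2([0,1])=\infty$, then \cref{req:finite_mean} for population 1 forces $\int\variantfreqplain_1\levync_1(\de\variantfreqplain_1)=0$, hence $\levync_1((0,1])=0$ and the population-1 infinite-mass condition fails; otherwise $\levync_1((0,1])=\infty$, hence $\levync_1([0,1])=\infty$, and \cref{req:finite_mean} for population 2 forces $\int\variantfreqplain_2\levync_2(\de\variantfreqplain_2)=0$, so the population-2 infinite-mass condition fails. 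This is a two-line repair rather than a flaw in your logic, but it is the one place where your argument is strictly narrower than the paper's; everything else, including your justification of the two integral equivalences, is sound.
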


See \cref{sec:product_measure} for the proof. The rough intuition is as follows. We observe that \cref{req:finite_mean}
is equivalent to $\forall \popidx \in \{1,2\}, \int_{[0,1]^{2}}\variantfreqperpop{\popidx}\levync(\de\vecvariantfreq{}) < \infty$.
To derive a contradiction, suppose that \cref{req:inf_mass} and \cref{req:product_measure} hold. By \cref{req:product_measure}, we can factorize
\[
	\int_{[0,1]^{2}}\variantfreqperpop{1}\levync(\de\vecvariantfreq{}) = \int_{[0,1]}\variantfreqperpop{1} \levync_{1}(\de\variantfreqperpop{1}) \int_{[0,1]}\levync_{2}(\de\variantfreqperpop{2}).
\]
We next show that \cref{req:inf_mass} and \cref{req:product_measure} together imply that at least one of the factors in the product must be infinite and the remaining factor must be
strictly positive. So the full product must be infinite, a contradiction with \cref{req:finite_mean}. See \cref{sec:product_measure} for
a fuller and more detailed treatment.


\section{Our Model}
\label{sec:our_model} 
To achieve the desiderata in \cref{sec:cant_have_it_all}, we propose a new rate measure. We show that the resulting two-population model (\cref{eq:generative_model}) satisfies \cref{req:finite_mean,req:inf_mass}. And we show how to make predictions using our model.

In what follows, we propose to use the two-population model with the following PPP rate measure on the variant frequencies:
\begin{equation}
	\ratemeasproposed(\de\vecvariantfreq{}) := \mass
			\frac{(\variantfreqperpop{1}+\variantfreqperpop{2}^{\rate{2}/\rate{1}})^{-\rate{1}}}{(\variantfreqperpop{1}+\variantfreqperpop{2})^{\corr{1}+\corr{2}}}
			\frac{\variantfreqperpop{1}^{\corr{1}-1}(1-\variantfreqperpop{1})^{\conc{1}-1}}{B(\corr{1}, \conc{1})}\frac{\variantfreqperpop{2}^{\corr{2}-1}(1-\variantfreqperpop{2})^{\conc{2}-1}}{B(\corr{2}, \conc{2})} \bm{1}_{[0,1]^2} (\vecvariantfreq{}) \de\btheta.
	\label{eq:proposed_prior}
\end{equation}
Here $B(\cdot, \cdot)$ denotes the beta function. 
The measure $\levync$ is characterized by seven scalar hyperparameters, which we collect in the vector $\allhyper := (\mass, \rate{1}, \rate{2}, \corr{1}, \corr{2}, \conc{1}, \conc{2})$. Before discussing their roles next, we observe that our rate measure takes the form of the product rate measure $\levytbp(\de\variantfreqperpop{1}) \levytbp(\de\variantfreqperpop{2})$ multiplied by a term that does not factorize into components unique to 
frequencies in population 1 and 2.

The \emph{mass} parameter $\mass>0$ has the same role as in the (one-population) 3BP rate measure; namely, it linearly scales
the expected number of total variants for finite $\vecsamplesize$.

In the one-population case, the number of unique variants grows according to a power law as a function of the number of samples \citep{Broderick2012, Masoero2022}, and the single \emph{rate} parameter $\rate{}$ controls the power-law rate. In the two-population model with our new rate measure, growth of the number of variants has a more complex dependence on the number of samples since $\vecsamplesize$ can vary by different amounts in each population. But the rate parameters $\rate{1}, \rate{2} \in (0,1)$ can again be seen to control a notion of power-law growth in each population; we elaborate on the details in \cref{sec:powerlaw}.

Analogous to the one-population case, the \emph{concentration} parameters $\conc{1}, \conc{2} > 0$ modulate the frequencies of common variants; in particular, as $\conc{\popidx}$ grows large, population $\popidx$ is less likely to have variants whose frequencies are close to 1.

The \emph{correlation} parameters $\corr{1}, \corr{2} > 0$ do not have an analogue in the single-population case, since they control the relationship of the variant frequencies between the two populations. The factor $(\variantfreqperpop{1}+\variantfreqperpop{2}^{\rate{2}/\rate{1}})^{-\rate{1}} / (\variantfreqperpop{1}+\variantfreqperpop{2})^{\corr{1}+\corr{2}}$ distinguishes our proposed rate measure in \cref{eq:proposed_prior} from the factorized rate measure and induces correlation between the frequencies of the two populations. We illustrate how various settings of $\corr{1}, \corr{2}$ affect the rate measure in \cref{fig:log_rate_sigma}.

\begin{figure}
    \centering
    \includegraphics[width = \linewidth]{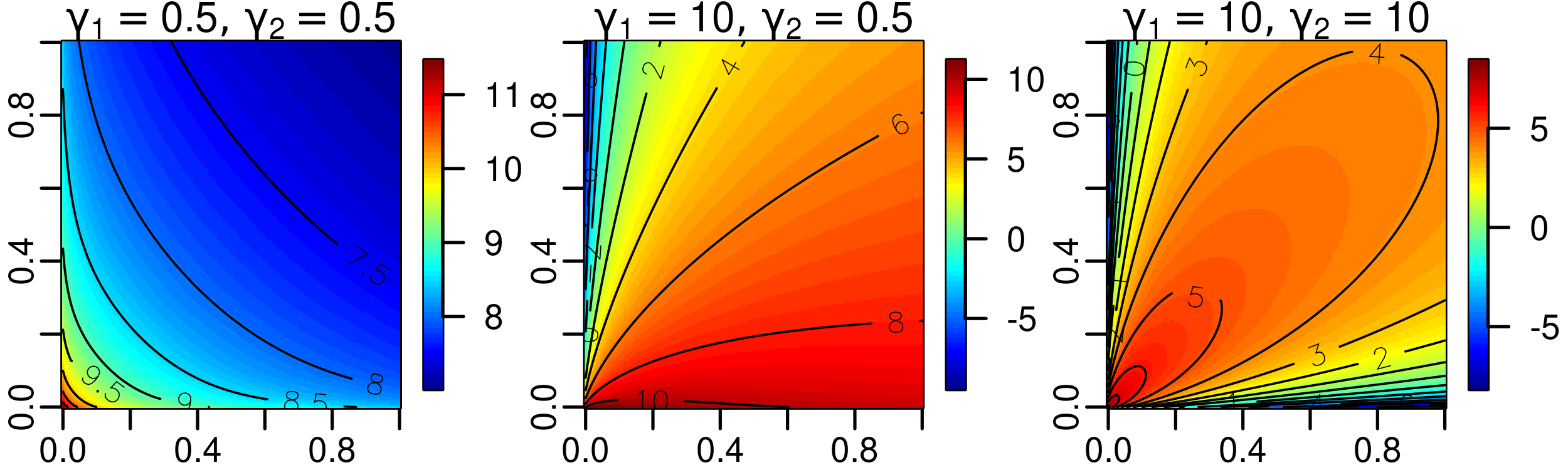}
    \caption{Visualization of the log density of our proposed rate measure in \cref{eq:proposed_prior} with different values of the $\corr{1},\corr{2}$ hyperparameters. In all plots, the other hyperparameters are $\rate{1}=\rate{2}=0.2, \conc{1}=\conc{2}=1$, and $\mass = 1$.}
	\label{fig:log_rate_sigma}
\end{figure}

We next verify that the hyperparameter ranges here are sufficient to ensure our desiderata.\footnote{An interesting line of investigation for future work would be to establish hyperparameter ranges that are both sufficient and necessary.}
\begin{myProposition}
If the hyperparameters satisfy $\mass, \conc{1}, \conc{2}, \corr{1}, \corr{2}>0$ and $\rate{1}, \rate{2} \in (0,1)$, the two-population model (\cref{eq:generative_model}) satisfies \cref{req:finite_mean,req:inf_mass} when paired with our rate measure $\ratemeasproposed$ from \cref{eq:proposed_prior}.
\end{myProposition}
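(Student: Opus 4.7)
My plan is to translate both desiderata into integrability conditions on $\ratemeasproposed$ and verify them by combining a change of variables with elementary inequalities. By Campbell's formula for the prior PPP, \cref{req:finite_mean} is equivalent to $\int \variantfreqperpop{\popidx}\,\ratemeasproposed(\de\vecvariantfreq{}) < \infty$ for each $\popidx \in \{1,2\}$ (since on $[0,1]^2$ the $\min(\theta_p,1)$-integrability criterion for a.s.\ finiteness of a PPP sum reduces to $\theta_p$-integrability). Since $\ratemeasproposed$ is absolutely continuous with density strictly positive on the interior of $[0,1]^2$, \cref{req:inf_mass} follows as soon as $\ratemeasproposed([0,1]^2) = \infty$: the PPP then has infinitely many atoms, each with $\variantfreqperpop{\popidx} > 0$ almost surely, so by monotone convergence the expected number of variants in $n$ samples of population $\popidx$ diverges as $n \to \infty$.

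For the infinite-mass claim I would substitute $\variantfreqperpop{1} = ru,\variantfreqperpop{2} = r(1-u)$ (Jacobian $r$). Then $\variantfreqperpop{1}+\variantfreqperpop{2} = r$, so the factor $(\variantfreqperpop{1}+\variantfreqperpop{2})^{-(\corr{1}+\corr{2})}$ combines with the beta kernels and the Jacobian to yield exactly $r^{-1}u^{\corr{1}-1}(1-u)^{\corr{2}-1}$. On the wedge $r \in (0,1), u \in (1/4, 3/4)$, the remaining factors $(\variantfreqperpop{1}+\variantfreqperpop{2}^{\rate{2}/\rate{1}})^{-\rate{1}}$ and $(1-\variantfreqperpop{\popidx})^{\conc{\popidx}-1}$ are bounded below by positive constants (the first because its base lies in $(0,2]$ and $\rate{1}>0$, the rest because $\conc{\popidx}>0$), and the log-divergence of $\int_0^1 r^{-1}\,\de r$ forces $\ratemeasproposed([0,1]^2)=\infty$.

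For the finite-mean claim I would dominate the non-factorized factors using the weighted-AM-GM bound $(a+b)^s \ge a^{\beta s}b^{(1-\beta)s}$ (valid for $s>0, \beta\in[0,1]$). Taking $\beta=1$ gives $(\variantfreqperpop{1}+\variantfreqperpop{2}^{\rate{2}/\rate{1}})^{-\rate{1}} \le \variantfreqperpop{1}^{-\rate{1}}$, and for any $\delta \in [0,1]$ we have $(\variantfreqperpop{1}+\variantfreqperpop{2})^{-(\corr{1}+\corr{2})} \le \variantfreqperpop{1}^{-\delta(\corr{1}+\corr{2})}\variantfreqperpop{2}^{-(1-\delta)(\corr{1}+\corr{2})}$. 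Multiplying by $\variantfreqperpop{1}$ and by the beta kernels, the integrand of $\int \variantfreqperpop{1}\,\ratemeasproposed$ is dominated by a factorized product whose marginal exponents at the origin are $\corr{1}-\rate{1}-\delta(\corr{1}+\corr{2})$ and $\corr{2}-1-(1-\delta)(\corr{1}+\corr{2})$. Requiring both to exceed $-1$ confines $\delta$ to the interval $\left(\corr{1}/(\corr{1}+\corr{2}),\, (1+\corr{1}-\rate{1})/(\corr{1}+\corr{2})\right)$, which is non-empty precisely because $\rate{1} < 1$; any interior $\delta$ reduces the bound to a product of finite univariate beta integrals, with boundary singularities at $\variantfreqperpop{\popidx}=1$ controlled by $\conc{\popidx}>0$. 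A symmetric argument with $\beta=0$ handles $\int \variantfreqperpop{2}\,\ratemeasproposed$.

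The main obstacle, and what prevents a direct appeal to the one-population 3BP analysis, is the coupled term $(\variantfreqperpop{1}+\variantfreqperpop{2}^{\rate{2}/\rate{1}})^{-\rate{1}}/(\variantfreqperpop{1}+\variantfreqperpop{2})^{\corr{1}+\corr{2}}$. Handling it reduces to balancing the two marginal power-law exponents through the split parameter $\delta$, and the stated ranges $\rate{\popidx}<1, \conc{\popidx}>0, \corr{\popidx}>0$ supply exactly the slack this balance requires; everything else is standard PPP and beta-integral bookkeeping.
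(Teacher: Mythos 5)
Your proposal is correct, and it follows the same overall skeleton as the paper's proof (\cref{sec:proper_proof}): reduce \cref{req:finite_mean,req:inf_mass} to integral conditions on $\ratemeasproposed$ (the paper formalizes this as \cref{lemma:integral_requirements} via the mapping theorem), then control the non-factorizing factor $(\variantfreqperpop{1}+\variantfreqperpop{2}^{\rate{2}/\rate{1}})^{-\rate{1}}(\variantfreqperpop{1}+\variantfreqperpop{2})^{-\corr{1}-\corr{2}}$ by monomial bounds. The differences are in execution and are worth noting. For the infinite-mass claim, the paper restricts to a triangular wedge ($\variantfreqperpop{1}\ge\variantfreqperpop{2}$ or its reflection, with a case split on $\rate{1}\lessgtr\rate{2}$) and extracts a power divergence $\int_0^\epsilon \variantfreqperpop{}^{-\rate{}-1}\de\variantfreqperpop{}$; your substitution $\variantfreqperpop{1}=ru,\ \variantfreqperpop{2}=r(1-u)$ makes the $(\variantfreqperpop{1}+\variantfreqperpop{2})^{-\corr{1}-\corr{2}}$ factor collapse exactly against the beta kernels and the Jacobian to give $r^{-1}$, so you get a clean logarithmic divergence with no case analysis and without even using the $(\variantfreqperpop{1}+\variantfreqperpop{2}^{\rate{2}/\rate{1}})^{-\rate{1}}$ factor (which you just bound below by $2^{-\rate{1}}$) --- this is tidier than the paper's version. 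For the finite-first-moment claim, the paper first splits off a neighborhood $[0,\epsilon]^2$ of the origin and bounds the coupling factor by a constant on the complement; you instead apply the convex-split domination $(\variantfreqperpop{1}+\variantfreqperpop{2})^{-(\corr{1}+\corr{2})}\le\variantfreqperpop{1}^{-\delta(\corr{1}+\corr{2})}\variantfreqperpop{2}^{-(1-\delta)(\corr{1}+\corr{2})}$ globally, which works because the resulting product of univariate beta-type integrals is finite on all of $[0,1]^2$ once both origin exponents exceed $-1$; your exponent bookkeeping and the non-emptiness of the admissible $\delta$-interval (equivalent to $\rate{\popidx}<1$) check out, including the intersection with $[0,1]$. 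A minor bonus of your route: for $\int\variantfreqperpop{2}\,\ratemeasproposed$ you use the correct lower bound $\variantfreqperpop{1}+\variantfreqperpop{2}^{\rate{2}/\rate{1}}\ge\variantfreqperpop{2}^{\rate{2}/\rate{1}}$, giving exponent $-\rate{2}$, whereas the paper's displayed inequality $(\variantfreqperpop{1}+\variantfreqperpop{2}^{\rate{2}/\rate{1}})^{\rate{1}}>\variantfreqperpop{2}^{\rate{1}}$ does not hold pointwise when $\rate{2}>\rate{1}$; either exponent yields integrability, but your version is the airtight one.
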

See \cref{sec:proper_proof} for a proof. 

A common challenge in BNP is computation with the countably infinite set of parameters implied by \cref{req:inf_mass}.
Like typical models used in practice, though, our model enjoys conjugacy. Namely, analogous to the one-population model from \citet{Masoero2022}, our rate measure in \cref{eq:proposed_prior} is an exponential family conjugate prior for the Bernoulli process likelihood in the sense of \citet{Broderick2018}. We thereby avoid any infinite-dimensional integrals. See \cref{app:conjugacy} for a formal account and proof.


\section{Prediction} 
\label{sec:prediction_all}
With our model in hand, we now show how to predict the number of new variants in a follow-up study. We first consider the case where the hyperparameters are known (\cref{sec:prediction_fixed}). Then we show how to choose the hyperparameters in \cref{sec:hyperparameter_empirics}.

\subsection{Predicting variant numbers for fixed hyperparameters}
\label{sec:prediction_fixed}
First, suppose the model hyperparameters are fixed and known.

\begin{myProposition}
	\label{prop:predict_kr_tons}
	Take a two-population model (\cref{eq:generative_model}) with our proposed PPP rate measure (\cref{eq:proposed_prior}).
	%
	Consider pilot data $\pilotdata$ for two populations, with pilot size $\vecsamplesize$. Consider a follow-up study with $\vecfuturesamplesize$ additional samples.
	Take any $\vecoccurrence$ such that, for each $\popidx \in \{1,2\}$, $\occurrence{\popidx} \le \futuresamplesize{\popidx}$.
	Then the number of new variants appearing exactly $\vecoccurrence$ times in the follow-up study satisfies
	\begin{align*}
		\genericfreqnews \mid \pilotdata \sim \Poisson\left(\genericfreqnewsparam\right),
	\end{align*}  
	where 
	\begin{align}
    		\begin{split}
		\genericfreqnewsparam := {} &\mass \binom{\futuresamplesize{1}}{\occurrence{1}} \binom{\futuresamplesize{2}}{\occurrence{2}}\frac{B(\corr{1}+\occurrence{1},\conc{1}+\samplesize{1}+\futuresamplesize{1}-\occurrence{1})B(\corr{2}+\occurrence{2},\conc{2}+\samplesize{2}+\futuresamplesize{2}-\occurrence{2})}{B(\corr{1},\conc{1})B(\corr{2},\conc{2})}\\
    				& {} \cdot \E_{Z,W}\left[\frac{(Z+W^{\rate{2}/\rate{1}})^{-\rate{1}}}{(Z+W)^{\corr{1}+\corr{2}}}\right].
    		\end{split} \label{eqn:kr_tons}
	\end{align}
$\E_{Z,W}$ denotes the expectation with respect to independent random variables $Z$ and $W$:
\begin{equation*}
    Z\sim \Beta(\corr{1}+\occurrence{1},\conc{1}+\samplesize{1}+\futuresamplesize{1}-\occurrence{1}), \quad \text{and} \quad
    W\sim \Beta(\corr{2}+\occurrence{2},\conc{2}+\samplesize{2}+\futuresamplesize{2}-\occurrence{2}).
\end{equation*}
\end{myProposition}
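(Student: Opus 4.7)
My plan is to use the Bayesian-nonparametric conjugacy of the multiple-population Bernoulli process with the PPP prior, together with the marking (coloring) theorem for Poisson point processes. By the conjugacy result referenced in \cref{app:conjugacy}, the posterior distribution of $\randommeas$ given $\pilotdata$ decomposes as a fixed atomic component on the pilot-observed labels plus an independent PPP on the remaining labels with rate measure $(1-\theta_1)^{\samplesize{1}}(1-\theta_2)^{\samplesize{2}}\,\ratemeasproposed(\de\btheta)$. Only this second PPP can contribute to $\genericfreqnews$, since a variant is ``new'' precisely when it is absent from the pilot. Conditional on frequencies $\btheta$, the follow-up counts in populations $1$ and $2$ are independent $\mathrm{Binomial}(\futuresamplesize{\popidx},\theta_{\popidx})$ random variables.

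Attaching these follow-up counts as marks turns the posterior unseen-variant PPP into a marked PPP; by the restriction property, the count of points whose mark equals $\vecoccurrence$ is itself Poisson-distributed. This restricted count is exactly $\genericfreqnews \mid \pilotdata$, with mean equal to the integral of the product of the posterior rate measure and the two binomial PMFs:
\[
\lambda = \binom{\futuresamplesize{1}}{\occurrence{1}}\binom{\futuresamplesize{2}}{\occurrence{2}} \int_{[0,1]^2} \theta_1^{\occurrence{1}}(1-\theta_1)^{\samplesize{1}+\futuresamplesize{1}-\occurrence{1}}\, \theta_2^{\occurrence{2}}(1-\theta_2)^{\samplesize{2}+\futuresamplesize{2}-\occurrence{2}}\, \ratemeasproposed(\de\btheta).
\]

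Substituting the explicit form of $\ratemeasproposed$ from \cref{eq:proposed_prior} and collecting powers of $\theta_\popidx$ and $1-\theta_\popidx$ produces, in each coordinate, an integrand proportional to a $\Beta(\corr{\popidx}+\occurrence{\popidx},\,\conc{\popidx}+\samplesize{\popidx}+\futuresamplesize{\popidx}-\occurrence{\popidx})$ density. Multiplying and dividing by $B(\corr{\popidx}+\occurrence{\popidx},\,\conc{\popidx}+\samplesize{\popidx}+\futuresamplesize{\popidx}-\occurrence{\popidx})$ for $\popidx\in\{1,2\}$ renormalizes these factors and pulls out the beta-function ratios appearing in \cref{eqn:kr_tons}; what remains inside the integral is the non-factorized correction term $(\theta_1+\theta_2^{\rate{2}/\rate{1}})^{-\rate{1}}/(\theta_1+\theta_2)^{\corr{1}+\corr{2}}$, which, integrated against the two independent Beta densities, becomes exactly the $\E_{Z,W}$ expectation of the statement with $Z$ and $W$ of the claimed parameters. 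The main obstacle, I expect, is the very first step: rigorously setting up the posterior decomposition so that conditioning on the full $\sigma(\pilotdata)$ (labels and patterns, not just cell counts) reduces cleanly to a marginal Poisson on the unseen portion of $\randommeas$. This is precisely what the exponential-family conjugacy for Bernoulli-process likelihoods delivers; after that point, the binomial marking, the restriction theorem, and the beta-integral manipulations are all routine.
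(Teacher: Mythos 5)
Your proposal is correct and follows essentially the same route as the paper's proof: the paper likewise obtains the frequencies of variants unseen in the pilot and appearing exactly $\vecoccurrence$ times in the follow-up as a thinned Poisson point process with rate measure $\ratemeasproposed(\de\btheta)\,\Bernoulli(0\mid\variantfreqperpop{1})^{\samplesize{1}}\Bernoulli(0\mid\variantfreqperpop{2})^{\samplesize{2}}\binom{\futuresamplesize{1}}{\occurrence{1}}\Bernoulli(1\mid\variantfreqperpop{1})^{\occurrence{1}}\Bernoulli(0\mid\variantfreqperpop{1})^{\futuresamplesize{1}-\occurrence{1}}\binom{\futuresamplesize{2}}{\occurrence{2}}\Bernoulli(1\mid\variantfreqperpop{2})^{\occurrence{2}}\Bernoulli(0\mid\variantfreqperpop{2})^{\futuresamplesize{2}-\occurrence{2}}$, which is exactly the integral you write down, and then recognizes the beta kernels to extract the beta-function ratio and the residual $\E_{Z,W}$ term. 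Your framing via the marking and restriction theorems is just a more explicit account of the same thinning argument, and your identification of the posterior decomposition (via conjugacy) as the key setup step matches the paper's reliance on the analogous result of \citet{Masoero2022}.
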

The proof appears in \cref{sec:kton_proof}. Like the one-population case \citep{Masoero2022},
the predictive distribution is Poisson. Unlike the one-population case,
the Poisson parameter here is not available in closed form (\cref{eqn:kr_tons}).
In practice, we approximate the expectation in the prediction using the double exponential quadrature method \citep{takahasi1974double} provided in the Python package \texttt{mpmath} \citep{mpmath}.\footnote{We opted for quadrature since naive Monte Carlo struggles with the divisor in the integrand for $Z$ and $W$ near 0. We
expect more efficient choices are available with further work.}

As discussed in \cref{sec:setup}, the total number of new variants in the follow-up study can be recovered from the number
of new variants occurring $\vecoccurrence$ times. Relative to the separate proposal after \cref{eq:news_freq}, 
the following result provides a more computationally efficient way to 
recover the total number of new variants after computing the distributions in \cref{prop:predict_kr_tons}.
\begin{myProposition}
\label{coro:new_vars}
	Take the same assumptions as in \Cref{prop:predict_kr_tons}. Then the total number of new variants in a follow-up study satisfies
\begin{align} \label{eqn:faster_total_count}
     \genericnews \mid \pilotdata \sim \Pois\left(\sum_{\followupidx{1}=1}^{\futuresamplesize{1}}\newsparam{(\samplesize{1}+\followupidx{1}-1,\samplesize{2})}{(1,0),(1,0)}+\sum_{\followupidx{2}=1}^{\futuresamplesize{2}} \newsparam{(\samplesize{1}+\futuresamplesize{1},\samplesize{2}+\followupidx{2}-1)}{(0,1),(0,1)} \right),
\end{align} 
where $\genericfreqnewsparam$ is defined in \Cref{eqn:kr_tons}.
\end{myProposition}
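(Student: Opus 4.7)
The plan is to decompose $\genericnews$ as a telescoping sum obtained by revealing the follow-up samples one at a time, and to identify each increment as an independent Poisson random variable whose parameter is one of the terms in \cref{eqn:faster_total_count}. I would order the follow-up samples so that the $\futuresamplesize{1}$ population-$1$ samples are revealed first and the $\futuresamplesize{2}$ population-$2$ samples second; any ordering works by symmetry, but this one matches the bookkeeping in the statement.

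For this ordering, let $V_{1, m_1}$ denote the number of variants that appear in the $m_1$-th population-$1$ follow-up sample but were never seen in the pilot nor in any earlier population-$1$ follow-up sample, and let $V_{2, m_2}$ be defined analogously after all population-$1$ follow-up samples have been incorporated. Each variant counted by $\genericnews$ contributes to exactly one of these $V$'s, namely the one indexed by the first follow-up sample in which it appears, so $\genericnews = \sum_{m_1=1}^{\futuresamplesize{1}} V_{1, m_1} + \sum_{m_2=1}^{\futuresamplesize{2}} V_{2, m_2}$.

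Next I would identify the distribution of each increment by regarding the already-accumulated samples as an enlarged pilot. Conditional on the original pilot together with the first $m_1 - 1$ population-$1$ follow-up samples, which jointly act as an enlarged pilot of size $(\samplesize{1} + m_1 - 1, \samplesize{2})$, the quantity $V_{1, m_1}$ is precisely the number of $(1,0)$-tons in a follow-up of size $(1, 0)$. Applying \cref{prop:predict_kr_tons} at this stage yields a conditional $\Poisson\bigl(\newsparam{(\samplesize{1}+m_1-1, \samplesize{2})}{(1,0),(1,0)}\bigr)$ distribution; the key observation is that this rate, like every rate in \cref{eqn:kr_tons}, depends only on the hyperparameters and on sample sizes, not on the specific observed variants, so the conditional and unconditional forms of the distribution agree. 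An analogous argument gives $V_{2, m_2} \sim \Poisson\bigl(\newsparam{(\samplesize{1}+\futuresamplesize{1}, \samplesize{2} + m_2 - 1)}{(0,1),(0,1)}\bigr)$.

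The main obstacle is establishing mutual independence of the increments so that their sum is Poisson with the sum of parameters. I would invoke the posterior Poisson-point-process and conjugacy structure of our model (see \cref{app:conjugacy}): conditional on any Bernoulli-process observations so far, the still-unseen variants form a Poisson point process with an updated rate measure that is independent of the already-seen portion. Applying this thinning-style fact sequentially at each enlargement step shows that each $V_{1, m_1}$ or $V_{2, m_2}$ is measurable with respect to a PPP that is independent of all previous increments. Independence together with the individual conditional Poisson laws yields a sum of independent Poisson random variables, with parameter equal to the right-hand side of \cref{eqn:faster_total_count}.
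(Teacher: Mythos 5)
Your proposal is correct and follows essentially the same route as the paper's proof: decompose the total into per-sample discovery increments (population 1's follow-ups first, then population 2's), identify each increment as a Poisson count arising from a thinning of the underlying Poisson point process with parameter given by \cref{eqn:kr_tons} at the enlarged pilot size, and sum the independent Poissons. The paper phrases the independence directly in terms of disjoint thinnings of the prior PPP rather than via the sequential posterior/conjugacy argument you give, but the substance is the same.
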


See \Cref{sec:proof_total_number} for the proof. Intuitively, we first imagine taking a single new draw at a time from population 1 ($\vecfuturesamplesize = (1,0)$) and then a single new draw at a time from population 2 ($\vecfuturesamplesize = (0,1)$). Any variant that is new in the follow-up relative to the pilot must be new at exactly one step of this scheme.

The naive proposal in \cref{sec:setup} would require summing over all possible $\vecoccurrence$ such that $\occurrence{1} + \occurrence{2} \ge 1$. We know that, for each population, $\occurrence{\popidx} \le \futuresamplesize{\popidx}$. If we sum up to this bound, we would require $\futuresamplesize{1} * \futuresamplesize{2} - 1$ evaluations of \cref{eqn:kr_tons}. By contrast, \cref{eqn:faster_total_count} requires just $\futuresamplesize{1} + \futuresamplesize{2}$ total evaluations of \cref{eqn:kr_tons}. In practice, we expect that $\genericfreqnewsparam$ becomes negligible for larger $\occurrence{\popidx}$, so one might employ a truncation to speed up the naive scheme. But figuring out an appropriate truncation level would require some work and involve a tradeoff between computation and accuracy.



\subsection{Empirics for Hyperparameters}
\label{sec:hyperparameter_empirics}
In practice, the hyperparameter values are unknown, and we must learn them from the observed data.
We take an analogous empirical-Bayes approach to the one-population case of \citet{Masoero2022}.

In particular, we start by splitting the pilot data into a training set and a test set. Uniformly at random among the
pilot data, we choose $\vecsamplesize_\train =(\samplesize{1,\train}, \samplesize{2,\train})$ samples to form the training set.
Here $\samplesize{\popidx,\train}$ must satisfy $0 < \samplesize{\popidx,\train} < \samplesize{\popidx}$.
The remaining $\vecsamplesize - \vecsamplesize_\train$ samples form the test set. 
We now treat our training set as a miniature pilot and our test set as a miniature follow-up. 
For the vector of appearance counts $\vecoccurrence$, let $\newsparam{\vecsamplesize_\train}{\vecsamplesize - \vecsamplesize_\train, \vecoccurrence}$ 
be the Poisson parameter from \cref{prop:predict_kr_tons}. And let $\obsnews{\vecsamplesize_\train}{\vecsamplesize - \vecsamplesize_\train, \vecoccurrence}$
be the observed number of new variants appearing $\vecoccurrence$ times in the two populations; we use the lower
case character $\obsnewsNoargs$ to distinguish from the predictive random variable above.

Roughly, we will aim to choose hyperparameters that maximize the probability of observing 
$\obsnews{\vecsamplesize_\train}{\vecsamplesize - \vecsamplesize_\train, \vecoccurrence}$. 
More precisely, we consider $\occurrence{\popidx}$ in the miniature follow-up study up to a maximum count $v \in \NN$.
The log probability across the corresponding $\vecoccurrence$ terms can be written as
follows, where we emphasize the dependence on the hyperparameters $\allhyper := (\mass, \rate{1},\rate{2}, \corr{1},\corr{2}, \conc{1},\conc{2})$:
\begin{equation}
    \label{eq:likelihood_prior}
    \mathcal{L}(\allhyper)=
    	\sum_{\substack{\vecoccurrence: \occurrence{1},\occurrence{2} \in 0:v \\ \occurrence{1} + \occurrence{2} \ge 1}}
		\log \left[ \pois \left(
				\obsnews{\vecsamplesize_\train}{\vecsamplesize - \vecsamplesize_\train, \vecoccurrence}
				\mid \newsparam{\vecsamplesize_\train}{\vecsamplesize - \vecsamplesize_\train, \vecoccurrence}(\allhyper)
			\right) \right]	
\end{equation}

We set $v$ to balance our goals of accurate prediction with computational cost. Each term in the sum requires
computing \cref{eqn:kr_tons}, and there are $(v+1)^2-1$ terms in the sum. We find empirically that setting $v=10$ gives 
results comparable to larger settings of $v$ and therefore use $v=10$ in our experiments; see \cref{sec:large_triming}
for our experimental comparison of $v$ levels.

We choose the training set size as $\samplesize{\popidx,\train}=\lfloor \samplesize{\popidx}/2\rfloor$.
We expect a larger training set size to more closely represent the actual pilot study. However, recall that we focus on 
the challenging case where the size of the pilot study is small. A larger training set size yields a smaller test size, which
can yield a very noisy estimate of test error if too small. Our choice is meant to balance between these two extremes.

In our experiments, we optimize $\mathcal{L}(\allhyper)$ across the hyperparameters $\allhyper$
using L-BFGS in \texttt{scipy} with default parameters
and gradients estimated numerically. We initialize at $\allhyper_{\text{init}} =
(\mass, \rate{1},\rate{2}, \corr{1},\corr{2}, \conc{1},\conc{2})_{\text{init}} = (10^3, 0.5,0.5,0.5,0.5,1,1)$.
Local optima and the accuracy of our gradient approximations are concerns. But our experiments
suggest that our procedure is able to return useful predictions.
The computational bottleneck is the many evaluations of the integral in \cref{eqn:kr_tons}. Currently the overall
optimization procedure can take several hours; reducing this cost is a promising area for future work.

\section{Power law asymptotics for the number of new variants}
\label{sec:powerlaw}

Past work suggests that, in a single population, we might expect the observed number of variants to grow as a power law of the number of samples \citep{Masoero2022}. More generally, existing BNP models typically either exhibit logarithmic growth or power-law growth, each asymptotic in the number of samples.\footnote{For instance, under a 3BP with strictly positive discount parameter and Bernoulli likelihood, the number of variants grows (asymptotically) as a power law of the number of samples \citep{Gnedin2007, Broderick2012}. Under a 3BP with zero discount, the number of variants grows logarithmically in the number of samples \citep{Broderick2012}. Likewise, a Dirichlet process with categorical likelihood yields logarithmic growth in the number of observed clusters, but a Pitman-Yor Process exhibits power-law growth \citep{Gnedin2007,teh2009indian}.} We here rule out logarithmic growth under our model (with fixed hyperparameters) and establish results that are suggestive of power-law growth.

Power-law behavior has widely been recognized as desirable in BNP models \citep{teh2009indian, lee2016bfry}. But past work establishing power laws in these models has focused on the single-population case \citep{Gnedin2007, teh2009indian, Broderick2012,lee2016bfry}. A number of new challenges arise when we consider two populations. A first challenge is specifying how samples are collected across the two populations. In one population, there is a single way to grow the number of samples; in two populations, there are many more possibilities.
We focus on two sampling schemes: 
\begin{itemize}
	\item the \emph{projection scheme}, where we sample from only one of the two populations
	\item the \emph{proportional scheme}, where we collect samples from both populations according to some fixed proportion $\propt > 0$; in particular, after $\propt$ is chosen, we have $\samplesize{2} =  \lceil\propt \samplesize{1}\rceil$.
\end{itemize}
In what follows, $\zerovec = (0,0)$ is a vector of zeros, and $a_\samplesizenosubscript \sim b_\samplesizenosubscript$ indicates $\lim_{\samplesizenosubscript\to\infty} a_\samplesizenosubscript/b_\samplesizenosubscript=1$.

\subsection{The projection scheme} \label{sec:projection_powerlaw}

First we establish power-law bounds for the expected rate of growth of the number of variants under the projection scheme. We present symmetric results for each population, so we will find it useful to introduce the notation $\notpopidx$ to indicate the population that is not $\popidx$; i.e., when $\popidx = 1$, we have $\notpopidx = 2$ and vice versa.
\begin{myProposition}[Power-law bounds under the projection scheme] 
\label{prop:power_law_proj}
	Fix hyperparameters $\mass, \conc{1}, \conc{2}, \corr{1}, \corr{2}>0$ and $\rate{1}, \rate{2} \in (0,1)$.
	Consider data drawn from the two-population model of \cref{eq:generative_model} with $\levync = \ratemeasproposed$ (\cref{eq:proposed_prior}).
	Then the number of new variants concentrates around its expectation: 
	\[
		\textrm{Fix } \samplesize{\notpopidx} = 0. \quad \news{\zerovec}{\vecsamplesize} \sim \E \news{\zerovec}{\vecsamplesize}\textrm{ a.s.\ as } \samplesize{\popidx} \to \infty.
	\]
	Moreover, there exist functions $\lowerboundproject, \upperboundproject$ that serve as respective upper and lower bounds on $\E \news{\zerovec}{\vecsamplesize}$,
	\begin{equation} 
		\label{eq:projection_bounds}
		\begin{aligned}
			&\forall \vecsamplesize, \quad \lowerboundproject(\vecsamplesize)
			\le \E \news{\zerovec}{\vecsamplesize}
			\le \upperboundproject (\vecsamplesize),
		\end{aligned}
	\end{equation}
	and that show the asymptotic behavior below.
	\paragraph{Upper bound} For either $\popidx \in \{1,2\}$ and for any $\smallpositive$ such that $0<\smallpositive< \min\{\corr{1},\corr{2}\}$, there exists a constant $\constantupper{\popidx} > 0$ not depending on $\vecsamplesize$ such that:
	\begin{equation}
		\label{eq:power-law_projection-upperbound}
		\textrm{Fix } \samplesize{\notpopidx} = 0. \quad  \upperboundproject (\vecsamplesize)
			\sim \constantupper{\popidx} \samplesize{\popidx}^{\rate{\popidx}+\smallpositive} \textrm{ as } \samplesize{\popidx} \to \infty.
	\end{equation}
	\paragraph{Lower bound} Here let $\popidx$ be the population such that $\rate{\popidx} \le \rate{\notpopidx}$.
	Define $\rate{\notpopidx}' := \rate{\notpopidx}-\corr{\popidx}(\rate{\notpopidx}/\rate{\popidx}-1)$. 
	Then there exist constants $\constantlower{\popidx}, \constantlowerwithotherrate{\popidx}, \constantlowerwithrateprime{\popidx} > 0$
	not depending on $\vecsamplesize$ such that:
	\begin{equation}
	\label{eq:power-law_projection-lowerbound}
		\textrm{Fix } \samplesize{\notpopidx} = 0. \quad  \lowerboundproject(\vecsamplesize) \sim \constantlower{\popidx} \samplesize{\popidx}^{\rate{\popidx}}
				\textrm{ as } \samplesize{\popidx} \to \infty.
	\end{equation}
	and
	\begin{equation}
	\label{eq:power-law_projection-max_lowerbound}
		\textrm{Fix } \samplesize{\popidx} = 0. \quad  \lowerboundproject(\vecsamplesize)
			\sim \max\{ \constantlowerwithotherrate{\popidx} \samplesize{\notpopidx}^{\rate{\popidx}}, \constantlowerwithrateprime{\popidx}\samplesize{\notpopidx}^{\rate{\notpopidx}'} \}
				\textrm{ as } \samplesize{\notpopidx} \to \infty.
	\end{equation}
\end{myProposition}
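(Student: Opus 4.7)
The plan is to combine Poisson coloring for the PPP with a careful case analysis of the singular behavior of $\ratemeasproposed$ near the axes. Throughout, I write $\levyonepop_\popidx(\de\variantfreqperpop{\popidx}):=\int_0^1\ratemeasproposed(\de\vecvariantfreq{})$ for the marginal of $\ratemeasproposed$ on $\variantfreqperpop{\popidx}$.

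\textbf{Mean and concentration.} By Poisson coloring/thinning, $\news{\zerovec}{\vecsamplesize}$ is itself Poisson with mean
$\E\news{\zerovec}{\vecsamplesize}=\int_{[0,1]^2}[1-(1-\variantfreqperpop{1})^{\samplesize{1}}(1-\variantfreqperpop{2})^{\samplesize{2}}]\,\ratemeasproposed(\de\vecvariantfreq{})$. Under the projection scheme, the integrand collapses to a function of $\variantfreqperpop{\popidx}$ alone, so $\E\news{\zerovec}{\vecsamplesize}=\int_0^1[1-(1-\variantfreqperpop{\popidx})^{\samplesize{\popidx}}]\,\levyonepop_\popidx(\de\variantfreqperpop{\popidx})$. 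The a.s.\ concentration then follows from Poisson Chernoff: $\P(|\news{\zerovec}{\vecsamplesize}/\E\news{\zerovec}{\vecsamplesize}-1|>\epsilon)\le 2e^{-c_\epsilon\E\news{\zerovec}{\vecsamplesize}}$. Once the lower bound below yields polynomial growth of the mean, this is summable along integer $\samplesize{\popidx}$, and Borel--Cantelli applied to the natural PPP coupling gives almost-sure convergence.

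\textbf{Upper bound.} The key tool is the elementary inequality $(x+y)^{-s}\le x^{-a}y^{-(s-a)}$ for $a\in[0,s]$. Applied as $(\variantfreqperpop{1}+\variantfreqperpop{2}^{\rate{2}/\rate{1}})^{-\rate{1}}\le\variantfreqperpop{1}^{-\rate{1}}$ and $(\variantfreqperpop{1}+\variantfreqperpop{2})^{-\corr{1}-\corr{2}}\le\variantfreqperpop{1}^{-(\corr{1}+\smallpositive)}\variantfreqperpop{2}^{-(\corr{2}-\smallpositive)}$, then integrating out $\variantfreqperpop{2}$ against a beta kernel, this bounds the $\variantfreqperpop{1}$-marginal by a constant times $\variantfreqperpop{1}^{-\rate{1}-\smallpositive-1}(1-\variantfreqperpop{1})^{\conc{1}-1}\,d\variantfreqperpop{1}$, i.e., a 3BP-type measure with discount $\rate{1}+\smallpositive$. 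The requirement $\smallpositive\in(0,\corr{2})$ for $\popidx=1$ and symmetrically $\smallpositive\in(0,\corr{1})$ for $\popidx=2$ gives the constraint $\smallpositive<\min\{\corr{1},\corr{2}\}$. A standard Karamata/Tauberian estimate applied to $\int[1-(1-\variantfreqperpop{1})^{\samplesize{1}}]\variantfreqperpop{1}^{-\rate{1}-\smallpositive-1}\,d\variantfreqperpop{1}$ then yields \cref{eq:power-law_projection-upperbound}.

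\textbf{Lower bound.} I bound $\levyonepop_\popidx$ below by restricting $\variantfreqperpop{\notpopidx}$-integration to thin slices on which every factor admits clean elementary bounds. When $\samplesize{\notpopidx}=0$, the slice $\variantfreqperpop{\notpopidx}\in[\variantfreqperpop{\popidx},2\variantfreqperpop{\popidx}]$ together with $\rate{\notpopidx}/\rate{\popidx}\ge 1$ (which forces $\variantfreqperpop{\notpopidx}^{\rate{\notpopidx}/\rate{\popidx}}\le\variantfreqperpop{\notpopidx}\le 2\variantfreqperpop{\popidx}$) gives $\levyonepop_\popidx(\de\variantfreqperpop{\popidx})\gtrsim\variantfreqperpop{\popidx}^{-\rate{\popidx}-1}(1-\variantfreqperpop{\popidx})^{\conc{\popidx}-1}\,d\variantfreqperpop{\popidx}$ near zero, and Karamata produces $\samplesize{\popidx}^{\rate{\popidx}}$. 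When $\samplesize{\popidx}=0$, two disjoint scaling regimes contribute. Region A uses the slice $\variantfreqperpop{\popidx}\in[\variantfreqperpop{\notpopidx},2\variantfreqperpop{\notpopidx}]$ and yields $\levyonepop_\notpopidx\gtrsim\variantfreqperpop{\notpopidx}^{-\rate{\popidx}-1}$, hence $\samplesize{\notpopidx}^{\rate{\popidx}}$. Region B uses the much thinner band $\variantfreqperpop{\popidx}\in[\variantfreqperpop{\notpopidx}^{\rate{\notpopidx}/\rate{\popidx}},2\variantfreqperpop{\notpopidx}^{\rate{\notpopidx}/\rate{\popidx}}]$; on it, $(\variantfreqperpop{\popidx}+\variantfreqperpop{\notpopidx}^{\rate{\notpopidx}/\rate{\popidx}})^{-\rate{\popidx}}\asymp\variantfreqperpop{\notpopidx}^{-\rate{\notpopidx}}$, $(\variantfreqperpop{\popidx}+\variantfreqperpop{\notpopidx})^{-\corr{\popidx}-\corr{\notpopidx}}\asymp\variantfreqperpop{\notpopidx}^{-\corr{\popidx}-\corr{\notpopidx}}$, and the $\variantfreqperpop{\popidx}^{\corr{\popidx}-1}\,d\variantfreqperpop{\popidx}$ weight over the band contributes $\variantfreqperpop{\notpopidx}^{(\rate{\notpopidx}/\rate{\popidx})\corr{\popidx}}$. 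Summing exponents and multiplying by the marginal weight $\variantfreqperpop{\notpopidx}^{\corr{\notpopidx}-1}$ gives $\levyonepop_\notpopidx\gtrsim\variantfreqperpop{\notpopidx}^{-\rate{\notpopidx}'-1}$, hence $\samplesize{\notpopidx}^{\rate{\notpopidx}'}$. Setting $\lowerboundproject$ to the maximum of the two contributions matches \cref{eq:power-law_projection-max_lowerbound}.

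\textbf{Main obstacle.} The main difficulty is the case $\samplesize{\popidx}=0$: one must identify that two genuinely different scaling regimes in $\variantfreqperpop{\popidx}$ give qualitatively different power-law orders, and verify via careful exponent bookkeeping that the second regime yields precisely $\rate{\notpopidx}'=\rate{\notpopidx}-\corr{\popidx}(\rate{\notpopidx}/\rate{\popidx}-1)$. This exponent emerges only upon tracking the interaction between the singularity of the non-factorizing term $(\variantfreqperpop{1}+\variantfreqperpop{2}^{\rate{2}/\rate{1}})^{-\rate{1}}$ and the $\variantfreqperpop{\popidx}^{\corr{\popidx}-1}d\variantfreqperpop{\popidx}$ weight on a band whose width scales as $\variantfreqperpop{\notpopidx}^{\rate{\notpopidx}/\rate{\popidx}}$. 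A secondary check is needed to confirm that the contributions from intermediate scales $\variantfreqperpop{\popidx}\in(\variantfreqperpop{\notpopidx}^{\rate{\notpopidx}/\rate{\popidx}},\variantfreqperpop{\notpopidx})$ do not overwhelm either regime, so that the stated maximum indeed controls the integral.
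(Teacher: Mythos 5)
Your proposal is correct and follows essentially the same route as the paper's proof: reduce to the projected one-population marginal, upper-bound the non-factorizing kernel by separable powers with a $\smallpositive$-shift (yielding a 3BP-type measure with discount $\rate{\popidx}+\smallpositive$), lower-bound it on wedges where $\variantfreqperpop{\notpopidx}\asymp\variantfreqperpop{\popidx}$ or $\variantfreqperpop{\popidx}\asymp\variantfreqperpop{\notpopidx}^{\rate{\notpopidx}/\rate{\popidx}}$ to extract the exponents $\rate{\popidx}$ and $\rate{\notpopidx}'$, convert tail behavior to mean growth via the Karamata/Tauberian lemma of \citet{Broderick2012}, and conclude almost-sure concentration by Poisson Chernoff plus Borel--Cantelli. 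The only substantive deviation is your closing ``secondary check'' on intermediate scales, which is not needed: $\lowerboundproject$ is only required to be \emph{a} lower bound, so the maximum of the two regional lower bounds serves without verifying that intermediate scales do not contribute more.
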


See \cref{sec:projection_scheme_proof} for the proof.
Note that, due to the maximum in \cref{eq:power-law_projection-max_lowerbound}, the lower bound will grow (asymptotically) at least as fast as 
$\samplesize{\notpopidx}^{\rate{\popidx}}$ as $\samplesize{\notpopidx} \to \infty$. It follows that the lower bound is (asymptotically) positive and diverging.

In the one-population 3BP, it is possible to establish almost sure power-law growth in the number of variants as a function of the number of samples \citep{Broderick2012, teh2009indian}. By contrast, here we are only able to provide power-law \emph{bounds} on the growth. Nonetheless our analysis does conclusively rule out logarithmic growth.

The gap between our upper and lower bounds depends on the population. If we take the population with lower rate parameter ($\rate{\popidx} \le \rate{\notpopidx}$), the gap between the powers in the upper and lower bounds (\cref{eq:power-law_projection-upperbound,eq:power-law_projection-lowerbound}) can be made arbitrarily close by taking $\smallpositive$ to 0. If we take the population with higher rate parameter, we generally expect a non-trivial gap between the powers in the upper and lower bounds (\cref{eq:power-law_projection-upperbound,eq:power-law_projection-max_lowerbound}). We highlight the special case where the rate parameters are equal next.

\begin{myCorollary}
    \label{coro:projection_power-law-rate-align}
    Take the same assumptions and constants as \cref{prop:power_law_proj}. Take $\rate{1}=\rate{2}=\rate{}$. Let $\popidx \in \{1,2\}$. Fix $\samplesize{\notpopidx} = 0$. Then
    	\[
		\constantlower{\popidx}\samplesize{\popidx}^{\rate{p}} \sim \lowerboundproject(\vecsamplesize) \le \E \news{\zerovec}{\vecsamplesize} \le \upperboundproject(\vecsamplesize) \sim \constantupper{\popidx} \samplesize{\popidx}^{\rate{p}+\smallpositive},
	\]
	where the asymptotic equivalences hold as $\samplesize{\popidx} \to \infty$.
\end{myCorollary}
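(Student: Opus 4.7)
The plan is to derive this corollary as a direct specialization of \cref{prop:power_law_proj}, exploiting the symmetry that arises when the two rate parameters coincide. The middle sandwich inequalities $\lowerboundproject(\vecsamplesize) \le \E \news{\zerovec}{\vecsamplesize} \le \upperboundproject(\vecsamplesize)$ follow immediately from \cref{eq:projection_bounds}, so the only content is verifying that the upper and lower asymptotic equivalences apply in the prescribed direction with $\rate{\popidx}$ as the leading power.

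For the upper asymptotic, I would simply invoke \cref{eq:power-law_projection-upperbound}: it applies for either $\popidx \in \{1,2\}$ once $\samplesize{\notpopidx}$ is fixed at zero, yielding $\upperboundproject(\vecsamplesize) \sim \constantupper{\popidx}\samplesize{\popidx}^{\rate{\popidx}+\smallpositive}$ as $\samplesize{\popidx}\to\infty$. This step does not require the equal-rate hypothesis at all. For the lower asymptotic, I would invoke \cref{eq:power-law_projection-lowerbound}; the only subtlety is that this equation is stated for the population $\popidx$ satisfying $\rate{\popidx} \le \rate{\notpopidx}$. When $\rate{1} = \rate{2} = \rate{}$, this condition holds with equality regardless of which index is designated $\popidx$, so the bound is available for either choice, yielding $\lowerboundproject(\vecsamplesize) \sim \constantlower{\popidx}\samplesize{\popidx}^{\rate{\popidx}}$ as $\samplesize{\popidx}\to\infty$.

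Concatenating these three ingredients gives the full chain of relations claimed. There is essentially no obstacle: the corollary is a clean specialization of the parent proposition, and the argument is a bookkeeping exercise rather than a new derivation. The sole conceptual point worth highlighting is that the equal-rate assumption removes the asymmetry built into \cref{prop:power_law_proj}, which otherwise insists on attaching the lower bound to the slower-rate population via the $\rate{\notpopidx}'$ exponent of \cref{eq:power-law_projection-max_lowerbound}; with $\rate{1}=\rate{2}$ that distinction vanishes, both populations' lower bounds carry the same power $\rate{}$, and the gap between upper and lower bounds collapses to the additive $\smallpositive$ in the exponent.
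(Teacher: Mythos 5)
Your proposal is correct and follows essentially the same route as the paper's own proof: specialize \cref{eq:power-law_projection-upperbound,eq:power-law_projection-lowerbound} to the equal-rate case, note that the condition $\rate{\popidx} \le \rate{\notpopidx}$ is satisfied for either choice of $\popidx$, and observe that \cref{eq:power-law_projection-max_lowerbound} (with $\rate{\notpopidx}' = \rate{}$) is consistent with the claimed asymptotics. No gaps.
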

\begin{proof}
   The result follows from \cref{eq:power-law_projection-upperbound,eq:power-law_projection-lowerbound}. \Cref{eq:power-law_projection-max_lowerbound} simultaneously applies -- with the roles of $\popidx$ and $\notpopidx$ switched -- and offers no contradiction.
\end{proof}

Despite the asymmetry of our PPP rate measure in \cref{eq:proposed_prior}, we do obtain symmetric roles for the rate
parameters across the two populations when $\rate{1}=\rate{2}$; this symmetry inspires our use of the name \emph{rate parameter} in analogy
to the one-population case.

We next simulate data from our model under the projection scheme, separately in each population (\cref{fig:power-law}); our simulations are compatible with power-law growth and with the bounds from our theory. For instance, when we grow population 1, we find that for larger $\samplesize{1}$ the number of variants grows roughly as a constant times $\samplesize{1}^{0.34}$. Since $\rate{1} = 0.3 < 0.6 = \rate{2}$, our bounds suggest growth between $\samplesize{1}^{0.3}$ and $\samplesize{1}^{0.3 + \smallpositive}$. We expect the small discrepancy is due to the finite sample. Similarly, when we grow population 2, we find that for larger $\samplesize{2}$, the number of variants grows roughly as a constant times $\samplesize{2}^{0.49}$. Since $\rate{2}' = 0.6 - 0.1(0.6/0.3-1) = 0.5$, our bounds suggest growth between $\samplesize{1}^{0.5}$ and $\samplesize{1}^{0.6 + \smallpositive}$. Again, we expect the small discrepancy is due to the finite sample size.

\begin{figure}[htp]
    \centering
    \includegraphics[width = 0.6\textwidth]{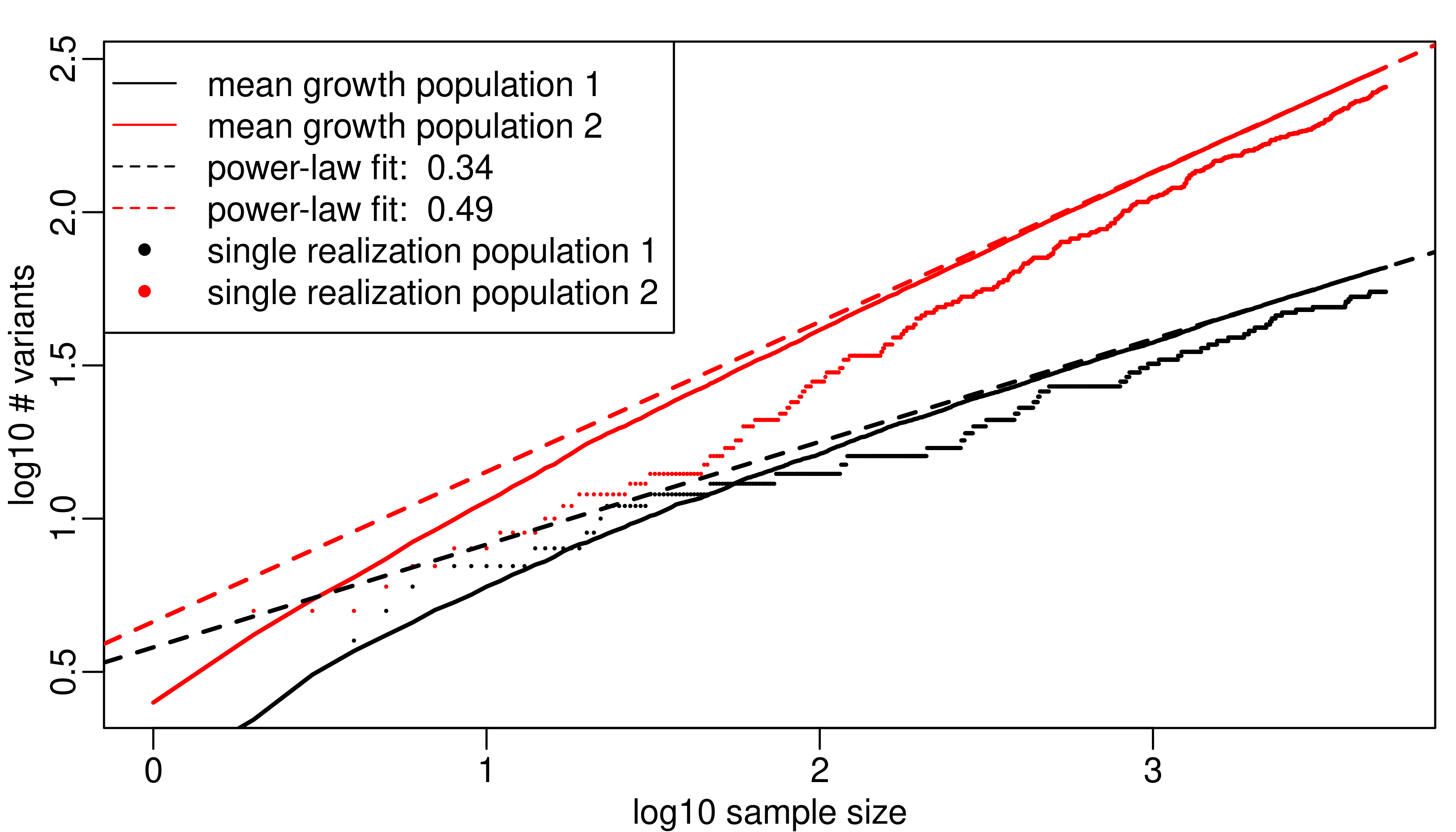}
    \caption{ \label{fig:power-law} Illustration of simulated data from our model with $\mass = 10, \rate{1}=0.3, \rate{2}=0.6, \conc{1}=\conc{2}=1, \corr{1}=\corr{2}=0.1$. Under the projection scheme, we take samples from only one population (population 1 in black and 2 in red). The observed number of variants for a given sample size is plotted as a point. To generate a solid curve, we take 100 independent realizations and plot the mean number of observed variants at a particular sample size (and interpolate between sample sizes). To plot the power-law fit, we make a linear least square fit of the log-log plot of the final third of the mean-growth curve and plot the resulting line. 
	The fitted power for population 1 is 0.34 and for population 2 is 0.49.}
\end{figure}

\subsection{The proportional scheme}

We next establish power-law bounds for the expected rate of growth of the number of variants under the proportional scheme. Recall that under the projection scheme, the total number of samples was equal to the number of samples in one of the populations. Under the proportional scheme, the total number of samples will instead be $\samplesize{1} + \samplesize{2} = \lceil (1+\rho) \samplesize{1} \rceil$. We will control the asymptotics by taking $\samplesize{1} \rightarrow \infty$.
\begin{myProposition}[Power-law bounds under the proportional scheme] 
    \label{coro:proportional}
	Take the model and hyperparameters as in \Cref{prop:power_law_proj}. Let $\popidx$ be such that $\rate{\popidx} \le \rate{\notpopidx}$, 
	For any $\samplesize{1}$, let $\samplesize{2} = \lceil\propt \samplesize{1}\rceil$.
	Then the number of new variants concentrates around its expectation: 
	\[
		\news{\zerovec}{\vecsamplesize} \sim \E \news{\zerovec}{\vecsamplesize}\textrm{ a.s.\ as } \samplesize{1} \to \infty.
	\]
	 Moreover,
	there exist functions $\lowerboundproport, \upperboundproport$ that serve as respective upper and lower bounds on $\E \news{\zerovec}{\vecsamplesize}$,
    \begin{equation*}
        \forall \vecsamplesize,\quad \lowerboundproport(\vecsamplesize)
        		\le \E \news{\zerovec}{\vecsamplesize}
		\le \upperboundproport(\vecsamplesize),
    \end{equation*}
	and that exhibit the asymptotic behavior we describe next. 
	Define $\rate{\notpopidx}'$ as in \Cref{prop:power_law_proj}.
	For any $\smallpositive$ such that $0<\smallpositive<\min\{\corr{1},\corr{2}\}$, there exist constants $C,C'_{\smallpositive} > 0$ not depending on $\vecsamplesize$ such that 
    \begin{align*}
	\lowerboundproport(\vecsamplesize)
		&\sim C(\samplesize{1}+\samplesize{2})^{\max\{\rate{\popidx},\rate{\notpopidx}'\}}
			\textrm{ \ as } \samplesize{1} \rightarrow \infty \\
    	\textrm{and } \quad \upperboundproport(\vecsamplesize)
		&\sim C'_{\smallpositive}(\samplesize{1}+\samplesize{2})^{\max\{\rate{\popidx},\rate{\notpopidx}\}+\smallpositive} 
			\textrm{ \ as } \samplesize{1} \rightarrow \infty.
    \end{align*}
\end{myProposition}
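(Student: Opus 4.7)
The plan is to piggyback on \Cref{prop:power_law_proj} via two elementary sandwich inequalities that bound $\E \news{\zerovec}{\vecsamplesize}$ in terms of the projection-scheme expectations, and then to upgrade a Poisson Chernoff bound to the almost-sure statement via Borel--Cantelli.

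\textbf{Step 1 (deterministic sandwich).} Because $\news{\zerovec}{(\samplesize{1},\samplesize{2})}$ is pathwise non-decreasing in each coordinate of $\vecsamplesize$,
\[
    \E \news{\zerovec}{(\samplesize{1},\samplesize{2})} \geq \max\bigl\{\E \news{\zerovec}{(\samplesize{1},0)},\, \E \news{\zerovec}{(0,\samplesize{2})}\bigr\}.
\]
For the reverse direction, Campbell's theorem gives $\E \news{\zerovec}{\vecsamplesize} = \int_{[0,1]^2} [1-(1-\variantfreqperpop{1})^{\samplesize{1}}(1-\variantfreqperpop{2})^{\samplesize{2}}]\,\ratemeasproposed(\de\btheta)$, and the elementary inequality $1-(1-a)(1-b) \leq a+b$ for $a,b \in [0,1]$ yields
\[
    \E \news{\zerovec}{(\samplesize{1},\samplesize{2})} \leq \E \news{\zerovec}{(\samplesize{1},0)} + \E \news{\zerovec}{(0,\samplesize{2})}.
\]
I then \emph{define} $\lowerboundproport(\vecsamplesize) := \max\{\lowerboundproject(\samplesize{1},0),\lowerboundproject(0,\samplesize{2})\}$ and $\upperboundproport(\vecsamplesize) := \upperboundproject(\samplesize{1},0) + \upperboundproject(0,\samplesize{2})$, so that $\lowerboundproport \leq \E \news{\zerovec}{\vecsamplesize} \leq \upperboundproport$ follows by combining the two displays above with the projection bounds of \Cref{prop:power_law_proj}.

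\textbf{Step 2 (substitute the proportional scaling).} Since $\samplesize{2}=\lceil\propt \samplesize{1}\rceil$, both $\samplesize{1}$ and $\samplesize{2}$ diverge polynomially with $\samplesize{1}+\samplesize{2}\sim(1+\propt)\samplesize{1}$. For the upper bound, \Cref{eq:power-law_projection-upperbound} gives $\upperboundproject(\samplesize{1},0) \sim \constantupper{1} \samplesize{1}^{\rate{1}+\smallpositive}$ and $\upperboundproject(0,\samplesize{2}) \sim \constantupper{2} \samplesize{2}^{\rate{2}+\smallpositive}$; since $\rate{\notpopidx}\geq \rate{\popidx}$ by hypothesis, the $\rate{\notpopidx}+\smallpositive$ term dominates the sum, giving $\upperboundproport(\vecsamplesize) \sim C'_\smallpositive(\samplesize{1}+\samplesize{2})^{\max\{\rate{\popidx},\rate{\notpopidx}\}+\smallpositive}$. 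For the lower bound, \Cref{eq:power-law_projection-lowerbound} (which handles $\samplesize{\notpopidx}=0$) contributes the term $\constantlower{\popidx}\samplesize{\popidx}^{\rate{\popidx}}$, whereas \Cref{eq:power-law_projection-max_lowerbound} (which handles $\samplesize{\popidx}=0$) contributes $\max\{\constantlowerwithotherrate{\popidx}\samplesize{\notpopidx}^{\rate{\popidx}}, \constantlowerwithrateprime{\popidx}\samplesize{\notpopidx}^{\rate{\notpopidx}'}\}$. Taking the outer maximum of these three polynomially-scaling terms and absorbing $\propt$-dependent prefactors into the asymptotic constant yields $\lowerboundproport(\vecsamplesize) \sim C(\samplesize{1}+\samplesize{2})^{\max\{\rate{\popidx},\rate{\notpopidx}'\}}$.

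\textbf{Step 3 (almost-sure equivalence).} By the marking/thinning theorem for the PPP $\randommeas$, the marginal law of $\news{\zerovec}{\vecsamplesize}$ is $\pois(\lambda_{\vecsamplesize})$ with $\lambda_{\vecsamplesize}=\E \news{\zerovec}{\vecsamplesize}$. Along the sequence indexed by $\samplesize{1}\in\NN$, the lower bound of Step 2 gives $\lambda_{\vecsamplesize}\gtrsim \samplesize{1}^{\rate{\popidx}}$, a polynomial rate of divergence. The Chernoff tail for Poisson random variables gives $\P(|\news{\zerovec}{\vecsamplesize}/\lambda_{\vecsamplesize}-1| > \varepsilon) \leq 2\exp(-c_\varepsilon \lambda_{\vecsamplesize})$ for every $\varepsilon>0$, and the bound is summable in $\samplesize{1}$ because $\lambda_{\vecsamplesize}/\log\samplesize{1}\to\infty$. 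Borel--Cantelli (which does not require independence across $\samplesize{1}$) then yields $\news{\zerovec}{\vecsamplesize}/\lambda_{\vecsamplesize}\to 1$ almost surely. The hardest part is the exponent bookkeeping in Step 2: identifying which term survives in each $\max$ and each sum after the $\samplesize{2}=\lceil\propt\samplesize{1}\rceil$ substitution, and verifying that the $\propt$-dependent prefactors absorb cleanly; this is mechanical once one fixes the convention $\rate{\popidx}\leq \rate{\notpopidx}$ and carefully tracks whether $\samplesize{\popidx}$ or $\samplesize{\notpopidx}$ is the zero coordinate when invoking \Cref{eq:power-law_projection-lowerbound} versus \Cref{eq:power-law_projection-max_lowerbound}.
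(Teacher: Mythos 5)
Your proposal is correct and follows essentially the same route as the paper's proof: sandwiching $\E \news{\zerovec}{\vecsamplesize}$ between the maximum and the sum of the two projection-scheme expectations, substituting the asymptotics from \Cref{prop:power_law_proj}, and obtaining the almost-sure statement from Poisson concentration plus Borel--Cantelli (the paper invokes Freedman's large-deviation bound where you use a Poisson Chernoff bound). The only cosmetic difference is that you justify the sum upper bound via Campbell's theorem and $1-(1-a)(1-b)\le a+b$, whereas the paper argues pathwise that the total count is at most the sum of the counts seen in each population.
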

See \cref{sec:proportion_scheme_proof} for a proof. Essentially we lower bound the total number of variants by the number seen in either population. 
We upper bound the total number of variants by the sum of variants seen in each population; this bound may be somewhat loose since some variants might be shared across the two populations, and hence double counted.  

We next present the special case where the two rate parameters are equal; in this case, the powers in the lower and upper bounds align up to an arbitrary $\smallpositive > 0$ difference.
\begin{myCorollary}
    \label{coro:proportional_power-law-rate-align}
    Take the same assumptions and constants as \cref{coro:proportional}. Take $\rate{1}=\rate{2}=\rate{}$. Then
    \begin{equation*}
        C(\samplesize{1}+\samplesize{2})^{\rate{}} \sim \lowerboundproport(\vecsamplesize)
        		\le \E \news{\zerovec}{\vecsamplesize}
		\le \upperboundproport(\vecsamplesize) \sim C'_{\smallpositive}(\samplesize{1}+\samplesize{2})^{\rate{}+\smallpositive} \textrm{\ as } \samplesize{1} \to \infty.
    \end{equation*} 
\end{myCorollary}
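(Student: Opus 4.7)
The plan is to derive this corollary as a direct specialization of \cref{coro:proportional}, so the bulk of the work is already done upstream. I would begin by simply invoking \cref{coro:proportional} with the two rate hyperparameters set equal, $\rate{1}=\rate{2}=\rate{}$, and then trace through how the powers appearing in the upper and lower bounds collapse.

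The key algebraic observation is what happens to the quantity $\rate{\notpopidx}' := \rate{\notpopidx}-\corr{\popidx}(\rate{\notpopidx}/\rate{\popidx}-1)$ under the assumption $\rate{1}=\rate{2}$. Since $\rate{\notpopidx}/\rate{\popidx}=1$, the correction term vanishes and we get $\rate{\notpopidx}' = \rate{\notpopidx} = \rate{}$. Consequently the lower-bound exponent $\max\{\rate{\popidx},\rate{\notpopidx}'\}$ simplifies to $\rate{}$, and the upper-bound exponent $\max\{\rate{\popidx},\rate{\notpopidx}\}+\smallpositive$ simplifies to $\rate{}+\smallpositive$. A mild point to double-check is that the choice of which population plays the role of $\popidx$ (i.e., the one with $\rate{\popidx}\le\rate{\notpopidx}$) becomes arbitrary here; this is harmless because the constants $C, C'_{\smallpositive}$ in \cref{coro:proportional} do not depend on $\vecsamplesize$, so either choice yields valid constants which can be relabeled.

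With these substitutions in hand, I would simply write down the asymptotic chain
\[
	C(\samplesize{1}+\samplesize{2})^{\rate{}} \sim \lowerboundproport(\vecsamplesize) \le \E \news{\zerovec}{\vecsamplesize} \le \upperboundproport(\vecsamplesize) \sim C'_{\smallpositive}(\samplesize{1}+\samplesize{2})^{\rate{}+\smallpositive},
\]
as $\samplesize{1}\to\infty$ (with $\samplesize{2}=\lceil\propt\samplesize{1}\rceil$), using the sandwich on $\E\news{\zerovec}{\vecsamplesize}$ guaranteed by \cref{coro:proportional} and the asymptotics it supplies for $\lowerboundproport$ and $\upperboundproport$.

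The main obstacle is essentially nothing: this is a one-line consequence of \cref{coro:proportional} once one notes the vanishing of the $(\rate{\notpopidx}/\rate{\popidx}-1)$ factor. The only subtlety worth spelling out in the write-up is the identification $\rate{\notpopidx}'=\rate{}$ under the equal-rate hypothesis, since that is where the lower-bound exponent tightens from $\max\{\rate{\popidx},\rate{\notpopidx}'\}$ down to $\rate{}$. All genuine analytic content -- the concentration statement, the explicit forms of $\lowerboundproport$ and $\upperboundproport$, and the extraction of their leading-order asymptotics -- has already been carried out in the proof of \cref{coro:proportional}, so no new estimates or limits need to be verified here.
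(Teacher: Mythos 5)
Your proposal is correct and follows essentially the same route as the paper, which justifies the corollary in the main text by invoking \cref{coro:proportional} and observing that $\rate{\notpopidx}' = \rate{\notpopidx} - \corr{\popidx}(\rate{\notpopidx}/\rate{\popidx}-1) = \rate{}$ when the rates are equal, so the lower- and upper-bound exponents collapse to $\rate{}$ and $\rate{}+\smallpositive$. The paper's appendix additionally re-derives the sandwich bounds explicitly in the equal-rate case, but that adds no content beyond your specialization argument.
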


The result follows directly from \cref{coro:proportional} and noting that, when the rate parameters are equal, we have $\rate{\notpopidx}' = \rate{\notpopidx} = \rate{}$.

\section{Experiments}
\label{sec:experiments}

We next check the quality of our predictions experimentally. First we check that our method performs well on well-specified synthetic data, as well as some particular forms of misspecification. Then we see that our method provides high-quality predictions on a real human cancer genetics dataset and a real human genomic dataset. Throughout, we illustrate the advantages of modeling two populations together rather than treating the two populations separately or all from a single population.

\paragraph{Experimental setup.} In all of our experiments, we will have access to a collection of genetic samples for each of two populations (e.g., two types of cancer). Our validation procedure is reminiscent of cross-validation and analogous to the single population validation procedures of \citet{Masoero2022,  zou2016quantifying}. Within each population, we uniformly at random choose a partition of the data into 20 equally-sized blocks.\footnote{Since the dataset size may not be a multiple of 20, block sizes in practice may vary by 1. We will elide  small details of this form in the remainder of this section for clarity of exposition. Also, the population genetics data is substantially larger than our other datasets, so -- to avoid a very large pilot size -- we use 30-fold cross-validation in that case (\cref{sec:gnomad}).} In each of 20 folds, we reserve one block (unique to this fold) to be the pilot data for its corresponding population, so the final pilot sample has size $\vecsamplesize = (\samplesize{1}, \samplesize{2})$. The remaining data will serve as potential follow-up data. Each method has access only to the pilot data (across both populations), and we use the follow-up data (across both populations) to evaluate the accuracy of each method. None of the methods we consider in this paper use ordering information on the pilot data, but to generate all of our figures, we choose an ordering on the pilot data uniformly at random. Likewise, we choose an ordering on the follow-up data uniformly at random. In our figures, we present first the pilot data from population 1, then the pilot data from population 2, then the follow-up data from population 1, then the follow-up data from population 2. We can think of these figures as encapsulating a series of evaluations as follows. We train every method on the full pilot data in populations 1 and 2. Then we progressively test on follow-up sample sizes $\vecfuturesamplesize = (\futuresamplesize{1},\futuresamplesize{2})$ of the following forms: $(1,0), (2,0), \ldots, (19 \samplesize{1},0), (19 \samplesize{1},1), (19 \samplesize{1},2), \ldots, (19 \samplesize{1}, 19 \samplesize{2})$. In this way, we capture a wide range of ratios between follow-up sample sizes across the two populations: from all follow-up data in a single population to a ratio of follow-up sizes that matches the ratio in the pilot.
To summarize our results across all 20 folds, we will show an empirical mean across folds, both for each prediction method as well as for the ground truth. And we will show a shaded one-standard deviation interval for the prediction methods. The ground truth showed very small variation across folds so the interval appears almost invisible. 

Recall that we are also interested in predicting the number of $\vecoccurrence$-tons in a follow-up sample of size $\vecfuturesamplesize$ (\cref{eq:news_freq}); that is, we are interested in predicting the number of new variants appearing exactly $\vecoccurrence$ times across the two populations. To evaluate our prediction in this case, we consider just the extreme where we use all possible data in the follow-up. For each value of $\vecoccurrence$ with $\occurrence{1}, \occurrence{2}$ each less than a threshold, we compute the \emph{relative residual}: namely, the signed difference between a method's prediction and the observed value in the reserved follow-up data, normalized by the observed value. We plot an empirical average of this quantity across the folds. We also report the actual observed values and absolute (un-normalized) residuals in \cref{app:additional_experiments}, again averaged across the folds.

\paragraph{Comparisons.} To the best of our knowledge, there are no alternative methods for the prediction task we are interested in here. However, there are a number of methods that predict the number of variants in a follow-up given pilot data in a single population~\citep{Masoero2022, chakraborty2019using,gravel2014predicting}. So we consider two natural competitors in our experiments. First, we define the \emph{independent} two-population extension of a single-population method as follows. Given $\vecsamplesize$ pilot samples and $\vecfuturesamplesize$ follow-up samples, we train the single-population method first on the $\samplesize{1}$ pilot samples from population 1 and predict the number of variants expected in $\futuresamplesize{1}$ follow-up samples from population 1. We separately train the single-population method on the $\samplesize{2}$ samples from population 2 and predict the number of variants expected in $\futuresamplesize{2}$ follow-up samples from population 2. We report the sum of the predicted number of variants across both populations as the final prediction. In general, we expect the independent method to overcount the number of variants when there are follow-up samples from both populations. The independent method essentially assumes each variant is unique to a single population. From that perspective, when $\vecoccurrence = (\occurrence{1},0)$, we can take the population-1 prediction for the number of $\occurrence{1}$-tons and use it as the independent-method prediction for the number of $\vecoccurrence$-tons; an analogous prediction holds when $\occurrence{1} = 0$ and $\occurrence{2} > 0$. Whenever both elements of $\vecoccurrence$ are strictly positive, the independent method predicts there are zero $\vecoccurrence$-tons. In the real data we analyze at the sizes we consider, this zero prediction is unrealistic (see~\Cref{fig:proposed_kton},~\ref{fig:i3BP_kton},~\ref{fig:d3BP_kton} for raw number of $\vecoccurrence$-tons in synthetic experiments and~\Cref{fig:brca_luad_kton},~\ref{fig:seu_bgr_kton} in real data experiments). 

Second, we define the \emph{dependent} two-population extension of a single-population method. In this case, we train the single-population method on all of the $\samplesize{1} +\samplesize{2}$ pilot samples, as though they are the pilot samples from a single population. To form a prediction on a follow-up sample of size $\vecfuturesamplesize$, we report the prediction on a single-population follow-up of size $\futuresamplesize{1} + \futuresamplesize{2}$. In general, we expect the dependent method to struggle when both (a) the two populations exhibit notably different variant growth rates and (b) the ratio of populations in the follow-up does not closely mirror the pilot. To predict the number of $\vecoccurrence$-tons, we predict the number of variants, conditional on the pilot, that we would expect to see $\occurrence{1}$ times out of $\futuresamplesize{1}$ follow-up samples and also $\occurrence{2}$ times out of $\futuresamplesize{2}$ distinct follow-up samples (all from the same assumed single population). For single-population methods that do not provide $\occurrence{}$-ton predictions (for scalar $\occurrence{}$), we will therefore not be able to provide $\vecoccurrence$-ton predictions in the dependent extension. However, we are able to provide $\vecoccurrence$-ton predictions for the special case of the single-population 3BP method due to \citet{Masoero2022}; we provide the prediction and its supporting theory in \cref{sec:d3BP}.

Since there are many single-population methods, in general we will select just one single-population method to use when reporting the independent and dependent results. In each case, we will try to use the single-population method that is best in a relevant sense, which we make precise in each case below.

\subsection{Synthetic experiments}
\label{sec:simulation}

We first confirm that our method is able to predict well on data simulated from our own model (\cref{eq:proposed_prior}), as a check on our methodology before moving to real data. To that end, we generate 500 data points in each of two populations; we describe the simulation procedure in more detail in~\cref{app:taking_sample}. So, in each of our 20 folds, the pilot data has size $\vecsamplesize = (25,25)$.

We simulate two different datasets from our model (\cref{eq:proposed_prior}); we choose the hyperparameters to reflect a range of growth rates, sample sizes, and correlations between the two populations in each case. For the first dataset, we choose hyperparameters $\allhyper := (\mass, \rate{1},\rate{2}, \corr{1},\corr{2}, \conc{1},\conc{2}) = (100,0.4,0.6,0.5,0.5,1,1)$ and $\allhyper = (1,0.8,0.7,0.5,0.5,1,1)$ for a qualitative match of the curve form in the GnomADfor a qualitative match of the curve form in GnomAD dataset (\cref{fig:pop_gen}) and cancer dataset (\cref{fig:cancer}) respectively. 

The results in \cref{fig:proposed} demonstrate that our method outperforms two potential baselines. Since here all of our data is simulated from our two-population BNP model, we use a single-population BNP model to form our baselines; namely, we compare to the independent and dependent two-population extensions of the 3BP approach due to \citet{Masoero2022}. We see in the upper left plot in \cref{fig:proposed} that our method (solid red line) much more closely matches the behavior of the ground truth (dashed black line) than either the independent (dotted green line) or dependent (dash-dot blue line) extensions. As expected, the independent extension dramatically overcounts the number of variants in the second follow-up population. Also as expected, the dependent extension performs well when the pilot and follow-up proportions are in agreement (at the largest sample size) but misestimates the number of variants elsewhere. By contrast, our method is able to capture the different behaviors of the two populations and account for any shared variants. In the lower left corner, the independent extension of the 3BP performs the best; its mean line is closest to the ground truth, and its prediction exhibit the smallest variation (shaded range) across folds. Importantly, we can infer from the center and left plots that there are extremely few shared variants in this simulation; see~\cref{sec:raw_responses_simulated} for more precise counts. That is, the assumption of the independent extension is very close to being correct. So, following Bayesian Occam's Razor \citep[ch.~28 p.~343]{mackaybook}, we expect better performance from the simpler model (the independent extension) when its more stringent assumptions are true. That being said, our proposed model still performs very well here; its mean line is also very close to the ground truth, in contrast to the dependent extension. We conclude that our method is able to handle prediction of the number of new variants under a variety of two-population behaviors -- while the independent extension is limited to the special cases where the two populations are known a priori to behave independently.

From the center and right plots in \cref{fig:proposed}, we also see that our method provides better estimates of the numbers of various $\vecoccurrence$-tons. Recall that the independent extension predicts zero for anything except private $\vecoccurrence$-tons, and we see that there are in fact non-trivial counts of shared $\vecoccurrence$-tons in the first data set. Given this limitation of the independent extension, we focus on plotting results for our method and the dependent extension. Darker colors indicate more erroneous predictions, so we conclude in both datasets that our method is providing more accurate predictions than the dependent 3BP extension. Note that the observed numbers of $\vecoccurrence$-tons vary dramatically across $\vecoccurrence$ values. So generally we expect more noise in the upper right of our plots of $\vecoccurrence$-tons -- and should therefore take care to not to treat every square in these plots equally. We can see the actual observation values for the two datasets in~\cref{fig:proposed_kton}; in particular, for the first dataset, the observed number of $\vecoccurrence$-tons varies from an average (across folds) of 1161.2 at $(0,1)$ to 0 at various locations, e.g. (8,3), (10,4), and (7,7) (see the gray boxes in the upper left panel of \cref{fig:proposed_kton}).  We see similar behavior in the real data, as we describe below.

\begin{figure}[htp]
    \centering
    \includegraphics[width = 0.9\textwidth]{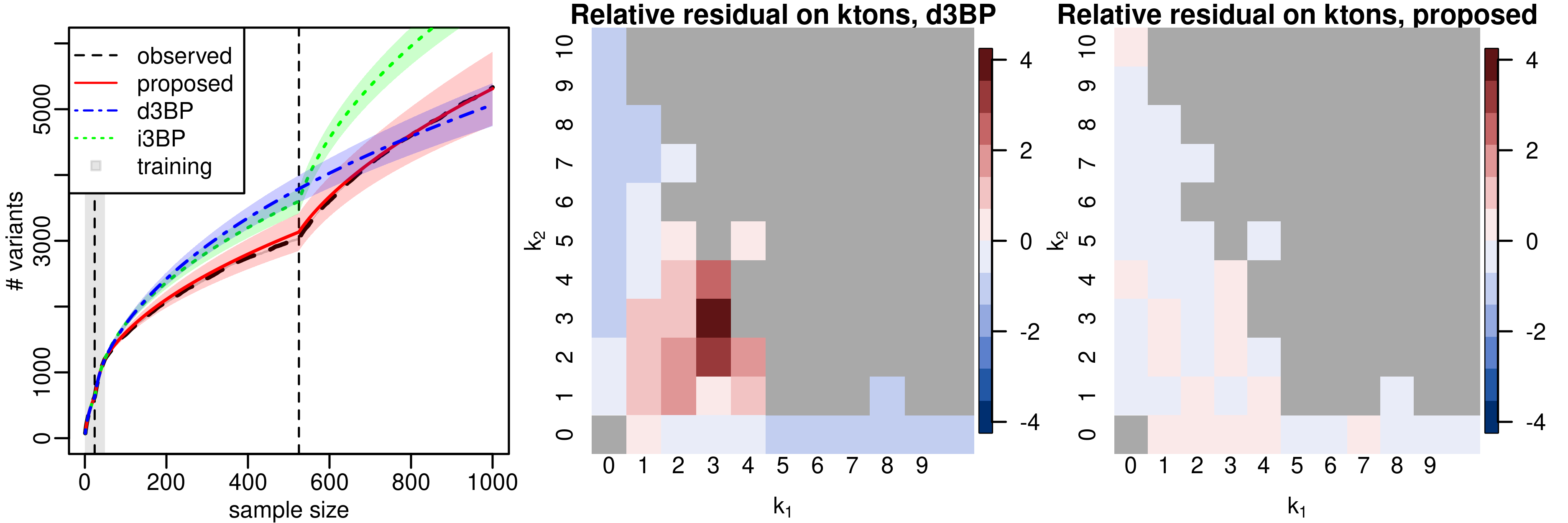}
    \includegraphics[width = 0.9\textwidth]{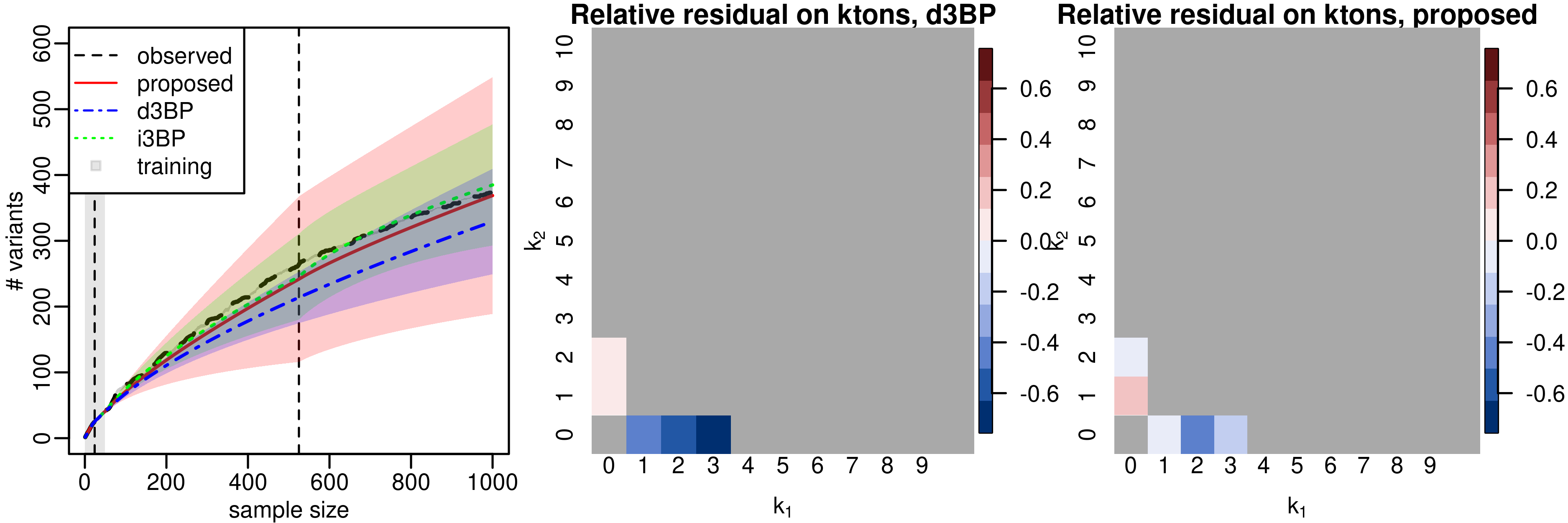}
    \caption{Predictions when data is sampled from our model (\cref{eq:proposed_prior}). Each row represents a different simulated dataset, as described in \cref{sec:simulation}. \emph{Left}: The dashed black line shows the observed number of variants as a function of the sample number. The samples appear in the following order: the pilot from population 1, the pilot from population 2, the follow-up from population 1, the follow-up from population 2. A vertical dashed gray line separates the two populations in the pilot and follow-up, respectively. The gray shading covers the pilot data. All curves agree exactly on the pilot data. In the follow-up region, we plot mean (lines) and 1 standard deviation intervals (shaded regions) from three methods: our proposed method (solid red), the dependent version of the single-population 3BP (dash-dot blue), and the independent version of the 3BP (dotted green). \emph{Center and Right}: For each $\vecoccurrence$ with component values up to 10, we plot the relative residual for predictions from our method (right) and the dependent 3BP (center). Note that the color scales are fixed across a row but vary across a column. A square is gray when the observed value is strictly less than 2 in at least one fold or $\vecoccurrence = (0,0)$, where by definition having 0 observed. 
    }
    \label{fig:proposed}
\end{figure}

We provide further checks that our method is able to handle datasets generated from the (mis-specified) assumptions of the independent and dependent 3BP extensions in~\cref{sec:misspecified}.

\subsection{Real data}
\label{sec:real_data}

We next confirm that our method performs well on a real dataset in cancer genomics and semi-synthetic dataset in population genetics. In particular, we confirm that our method performs better than the independent and dependent extensions of one-population methods described above.

\paragraph{Comparisons.} For the real datasets, we find that different one-population methods exhibit substantially different performance for different types of data. So, for each type of data (cancer, population genetics), we test a variety of different one-population methods. We compare them on the one-population task and choose the best-performing method to serve as the comparison (via independent and dependent extensions) for our two-population method. For one-population methods, we consider the fourth-order jackknife \citep{burnham1978estimation, gravel2014predicting}, the Good-Toulmin estimator \citep{Orlitsky2016,chakraborty2019using}, linear programming \citep{zou2016quantifying}, the 3BP approach of \citet{Masoero2022} and the scaled process by~\citet{Camerlenghi2021} (See~\cref{sec:single_pop_methods} for this comparison). 

\subsubsection{Two cancer genomic datasets}

In cancer genomics, biologists are actively searching for (rare) shared genetic variants across cancer types that might lend insight into common causes of different cancers \citep{bojesen2013multiple, rashkin2020pan}. In our analysis that follows, we find that both our method and the independent extension perform well at predicting the number of new variants, while the dependent extension performs poorly. But unlike the independent extension, our method is able to predict the (non-trivial) number of shared variants across populations -- and provides a better prediction than the dependent extension.

In these experiments, we use the cancer genome atlas (TCGA) and MSK-IMPACT datasets \citep{cancer2008comprehensive, cancer2013cancer, cheng2015memorial}. 
We used the same processing as in \citet{chakraborty2019using} and \citet{Masoero2022}. 
To form our two populations, we focus on the two types of cancer with the largest numbers of samples; in each dataset, the resulting two types of cancer are lung cancer and breast cancer. In the MSK-IMPACT dataset, there are total of 749 lung cancer samples and 1532 breast cancer samples. In the TCGA dataset, there are total of 492 lung cancer samples and 811 breast cancer samples. So the pilot sizes are challenging in each case: 38, 57, 24, and 40, respectively.

For each population (lung cancer, breast cancer) within each dataset (TCGA, MSK-IMPACT), we compare performance of the one-population methods above; see \cref{sec:single_pop_methods}. For each population in each dataset, we find that the best-performing one-population method is the 3BP. So use the 3BP for the independent and dependent one-population extensions in this analysis.

\begin{figure}[htp]
    \centering
    \includegraphics[width = 0.9\textwidth]{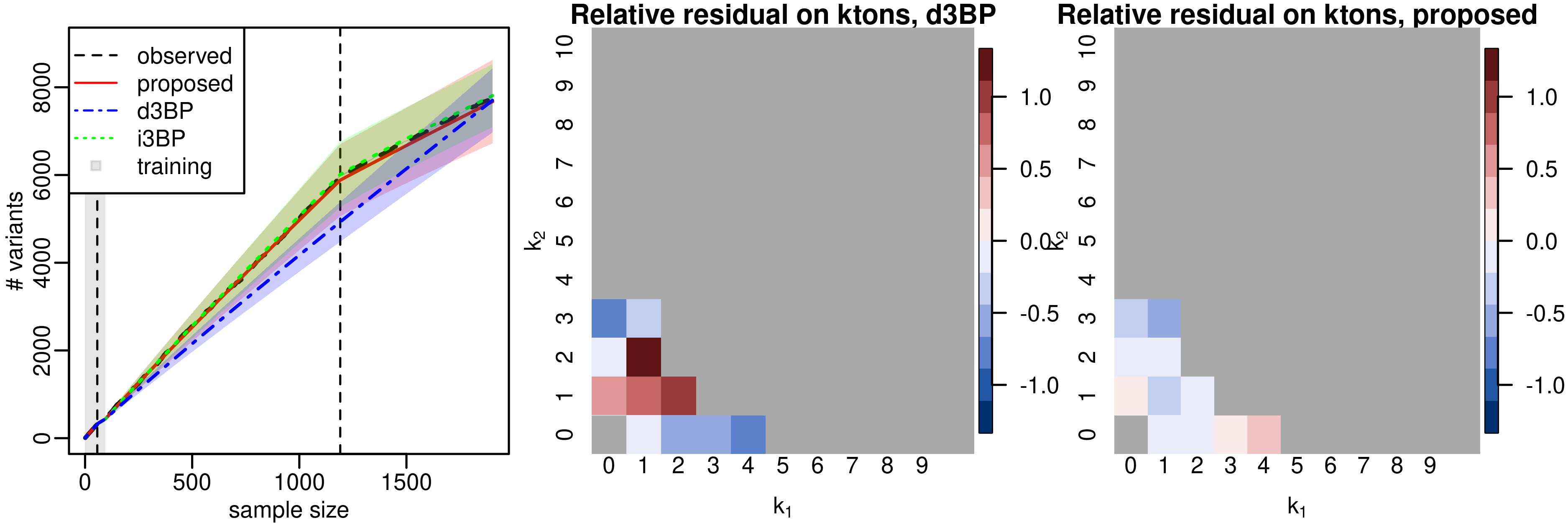}
    \includegraphics[width = 0.9\textwidth]{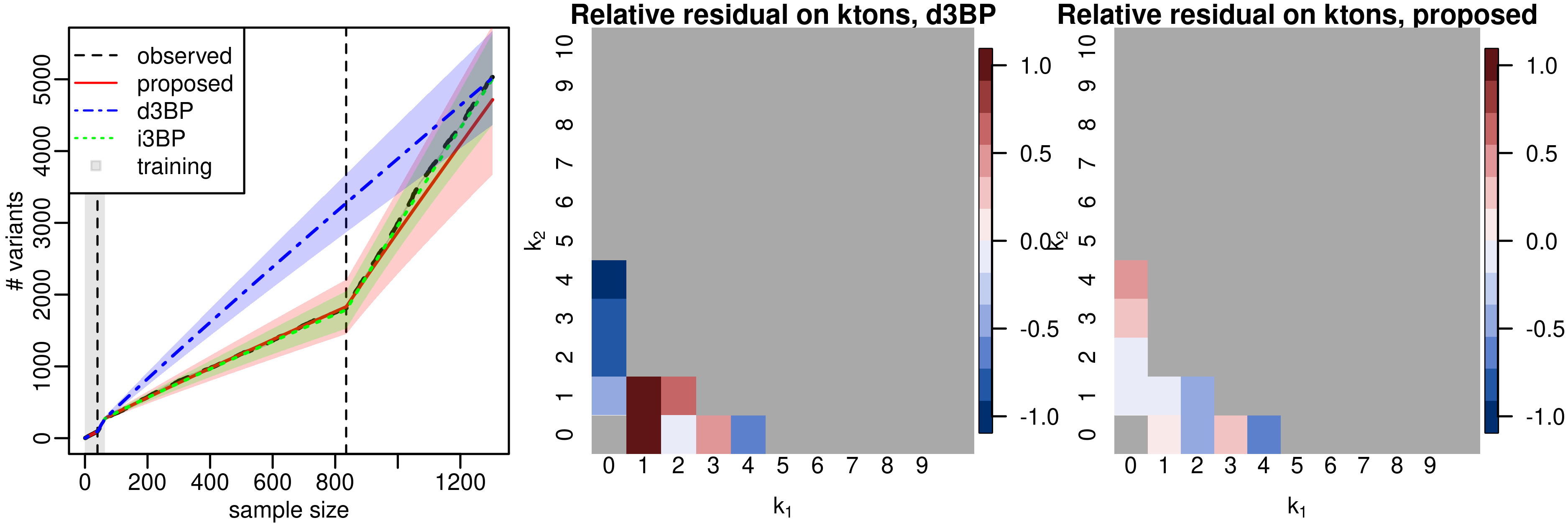}
    \caption{Predictions for the cancer genomics data. The rows correspond to datasets (upper: MSK-IMPACT, lower: TCGA). Population 1 is breast cancer, and population 2 is lung cancer. Other details of the plots are as in \cref{fig:proposed}.
    }
    \label{fig:cancer}
\end{figure}

The results in \cref{fig:cancer} demonstrate that our method outperforms the two one-population baselines. In particular, both our method and the independent extension agree closely with the ground truth number of new variants across the various follow-up sizes (left panels). Indeed, we expect a low level of sharing between cancer types since most variants in cancer are \textit{de novo} variants -- not subject to the natural selection that might make variants more likely to be shared across populations. As expected the dependent extension agrees with ground truth when the composition of the follow-up matches the pilot, but its predictions can be very far from the ground truth for other follow-up sizes. In the center and right plots, we see that there are still a number of shared variants with non-trivial counts in the ground truth; see also~\cref{fig:brca_luad_kton} for direct plots of the ground truth. The independent extension is unable to predict such counts. And while the dependent extension can predict a non-zero number in these cases, we see from the center and right plots in \cref{fig:cancer} that our method's predictions are more accurate. Overall then, our method is the only one able to provide both useful predictions of the numbers of variants in a follow-up as well as useful predictions of the number of shared variants across populations in the follow-up.


\subsubsection{A semi-synthetic population genetics dataset}
\label{sec:gnomad}

For population genetics, we use the GnomAD dataset \cite{karczewski2020mutational}. The dataset provides variant frequencies, but it does not provide exact variant data for each sample. Therefore, we take the semi-synthetic approach of \citet{Masoero2022}; in particular, for each individual, we randomly and independently sample the presence of each variant according to the reported frequency of that variant in the data. In general, prediction is more challenging (a) when the pilot is small in absolute terms since less data is available about the trend in variant growth and (b) when the follow-up is as large as possible since we have a priori more possibility for any observed trend to break down. To accomplish goal (b), we choose large populations, with several thousands of individuals: the Korean (1909 samples), Bulgarian (1335), and Southern European (5752) populations. With these large populations, we choose 30 folds in our evaluation to ensure the pilots are still somewhat small and similar to cancer dataset (64, 45, and 192, respectively), per goal (a).

For each population (Korean, Bulgarian, Southern European), we compare performance of the one-population methods at the start of \cref{sec:real_data}; see \cref{sec:single_pop_methods}. For each population in this case, we find that the fourth-order jackknife (4JK) is either the best performer or tied for best performance. So we use the 4JK for the independent and dependent one-population extensions in this analysis.

The results in \cref{fig:pop_gen} demonstrate that our method outperforms the two one-population methods at predicting the number of new variants in a follow-up. When the follow-up composition matches the pilot (the most favorable case for the dependent extension), our method is at least on par with the dependent one-population extension at predicting the number of shared variants in a follow-up. 
We see in the upper left plot that the dependent extension miscounts variants when the composition of the follow-up differs from the pilot. The independent extension predicts the first population well but seems to double count shared variants when the second population appears in the follow-up. In the lower left, the two populations are more similar, so the dependent extension performs reasonably; the independent extension, though, seems to double count shared variants in the second population in the follow-up. In both examples, our method tracks closely with ground truth. 

In the upper center plot, we see that the dependent one-population extension can be over an order of magnitude off in its predictions. In the upper right plot, we see that our method is much closer to ground truth across the $\vecoccurrence$ values here. Unlike in the cancer data case, the ground truth at all $\vecoccurrence$ values in this plot is substantially greater than zero; see \cref{fig:brca_luad_kton} for more information about the ground-truth counts. In particular, the signed residual plot in \cref{fig:brca_luad_kton} shows that the dependent extension is providing an overestimate along the diagonal. Indeed, we expect that the Korean and Bulgarian populations have little overlap and thus few shared variants -- whereas the dependent extension assumes they are the same population. Conversely, in the lower center and right plots, we see that our method is often farther from the ground truth than the dependent extension. We note that the magnitude of the error is smaller than in the upper plots, but it is still quite high (larger than 100\%). It may be that the complex history between the related Southern European and Bulgarian populations cannot be captured by the simple models we consider here.  We reiterate that the independent extension predicts zero for all $\vecoccurrence$ values that do not have a zero component and thus fails to predict all such values in these examples.





\begin{figure}[htp]
    \centering
    \includegraphics[width = 0.9\textwidth]{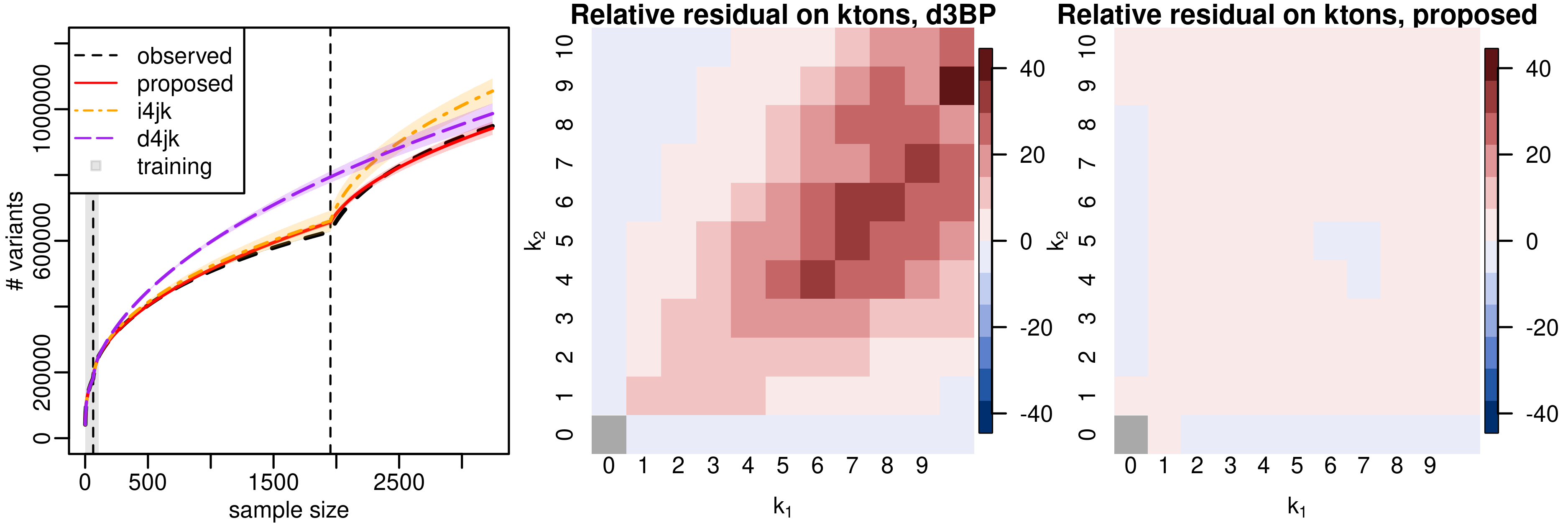}
    \includegraphics[width = 0.9\textwidth]{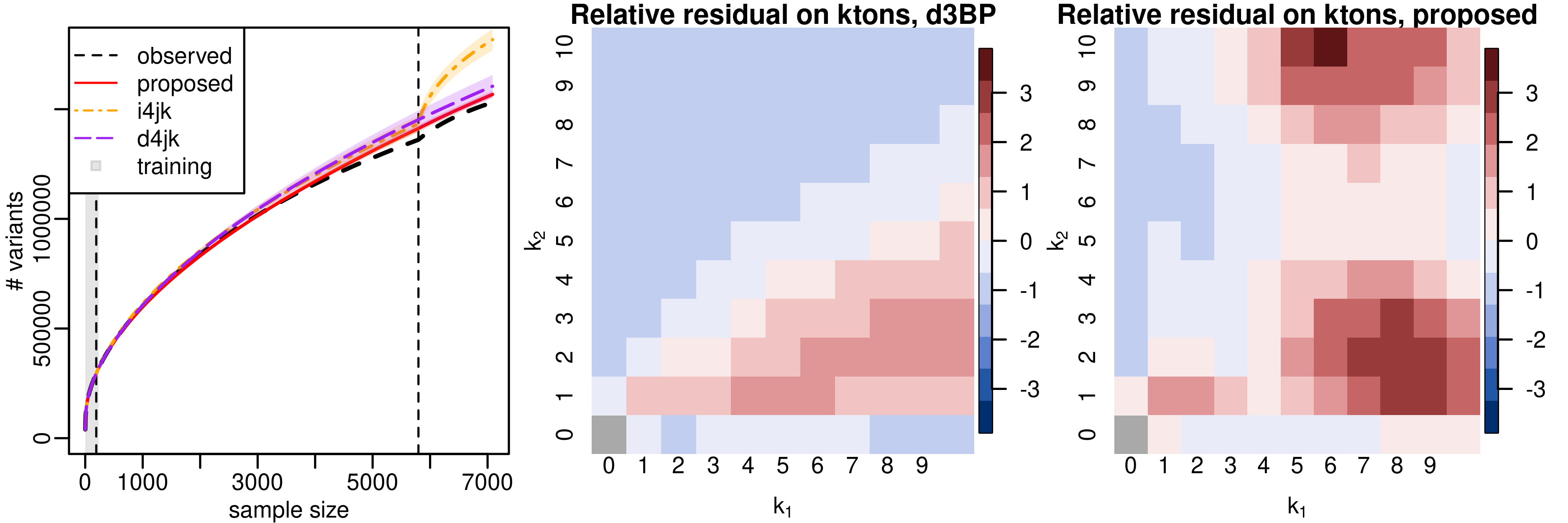}
    \caption{Predictions for the gnomAD data. The rows correspond to datasets (upper: Korean--Bulgarian, lower: Southern European--Bulgarian). In particular, in the top row, population 1 is Korean, and population 2 is Bulgarian. In the bottom row, population 1 is Southern European, and population 2 is Bulgarian. Other details of the plots are as in \cref{fig:proposed}. Other details of the plots are as in \cref{fig:proposed}. We highlight the very different color scales between the upper and lower rows.
    }
   \label{fig:pop_gen}
\end{figure}

\section{Discussion}
\label{sec:discussion}
We proposed a novel approach to estimating the number of new genetic variants that will be observed in a follow-up sequencing study of two potentially heterogeneous populations. In this setting, a researcher has already conducted a pilot sequencing study and may be interested in the best allocation of resources to maximize the detection of new genetic variants in a larger follow-up study. Given the success of the Bayesian nonparametric methodology of \citet{Masoero2022} in the one-population case and the flexibility of the Bayesian framework, we chose to take a BNP approach for two populations. In particular, we decided to use a model where variant frequencies are generated according to a Poisson point process, parameterized by a rate measure.

Surprisingly, we proved that a product rate measure is incompatible with the requirements that any finite sample only obtains a finite number of variants while the number of variants will grow without bound as the sample sizes increase. This difficulty arises due to the fact that variants may be shared between groups; thus we proposed a novel conjugate measure that satisfies these desiderata. Given this incompatibility we used a Poisson point process with a novel non-product rate measure. Although the form of our new approach prohibits analytical calculations, we showed that numerical integrals still facilitate robust and practical evaluation of quantities of interest.

We provided upper and lower bounds on the growth in the number of new variants in our model, such that the bounds exhibit power laws. 
We checked the bounds empirically and also noted that empirically the variant-number growth from our model seems to exhibit a power law itself.
Nonetheless, explicit proofs that this growth exhibits power law behavior will provide deeper understanding of this process. 

Using simulations, we showed that our approach works to predict the number of novel variants from both real and simulated data. Since we are not aware of other two-population methods, we compared to simple baselines based on one-population methods: (1) treating the two populations as separate, which we call the independent extension and (2) treating all the data as from a single population, which we call the dependent extension. As expected, we find that the independent extension tends to overestimate the number of new variants when there are any shared variants, and it is not able to predict shared variants. The dependent extension is accurate when the populations appear in the same proportions in the pilot and the follow-up, but otherwise can provide poor estimates. Our method, by contrast, can correctly account for shared variants and differing population proportions between a pilot and follow-up. 

Several opportunities remain for future work. First, we rely on numerical integration for our predictions. Even in the two-population case, these computations can be somewhat expensive. We expect further computational (and statistical) challenges when considering higher numbers of populations. We suspect that alternative approaches may prove fruitful. In addition, we explored only one possible option for generalizing the beta process to higher dimensions. Although our experiments validated our approach, even better performance might be possible with an alternative model -- especially one carefully aligned with biological prior knowledge.

\FloatBarrier

\section*{Acknowledgments}
YS and TB were supported in part by the DARPA I2O LwLL program, an NSF Career Award, and an ONR Early Career Grant. JGS was supported by NIH grant R35GM137758 to Michael D. Edge. The authors acknowledge the MIT SuperCloud and Lincoln Laboratory Supercomputing Center for providing HPC resources that have contributed to the research results reported within this paper. 

\bibliography{refs}

\renewcommand{\theequation}{S\arabic{equation}}
\renewcommand{\thesection}{S\arabic{section}}  
\renewcommand{\thefigure}{S\arabic{figure}}  
\renewcommand{\thetable}{S\arabic{table}} 
\renewcommand{\themyLemma}{S\arabic{myLemma}} 
\renewcommand{\themyRemark}{S\arabic{myRemark}} 
\renewcommand{\themyTheorem}{S\arabic{myTheorem}} 
\renewcommand{\themyProposition}{S\arabic{myProposition}}
\setcounter{equation}{0}
\setcounter{section}{0}
\setcounter{subsection}{0}
\setcounter{subsubsection}{0}
\setcounter{myLemma}{0}
\setcounter{myTheorem}{0}
\setcounter{myProposition}{0}
\setcounter{myRemark}{0}

\newpage
\FloatBarrier
\appendix

\section{Proofs of results in \Cref{sec:cant_have_it_all}}
\label{sec:product_measure}

In this section, we provide additional details and proofs for the results presented in \Cref{sec:cant_have_it_all} for a generic finite number of populations. We state and prove a slight generalization of \cref{eq:generative_model}, where the number of populations can be two or greater: $\numpops > 1$. The two-population case of \cref{eq:generative_model} follows as a corollary.

Before stating our result, we extend our notation to encompass any strictly positive integer number of populations $\numpops > 0$.
We let $\samplesize{\popidx}$ denote the number of observed pilot samples in population $p$ with $1 \le p \le P$.
We collect all the pilot sizes in the vector $\vecsamplesize := (\samplesize{1}, \ldots,\samplesize{\numpops})$.
We collect the observations from the $\unitidx$-th sample in the $\popidx$-th population in $\genericmeasurevariantcount$ as in \cref{eq:observation_measure_inf}.
We write $\measurevariantcount{\popidx}{1:\samplesize{\popidx}}$ for the pilot samples in population $\popidx$.
And we write $X_{1:\numpops, \bm{1}:\vecsamplesize}$ for the full collection of pilot samples across populations.

We collect the variant frequencies across populations as $\genericvecvariantfreq = (\variantfreq{1}{\variantidx}, \ldots, \variantfreq{\numpops}{\variantidx})$.
The vector-valued measure $\randommeas :=  \sum_{\variantidx=1}^{\infty} \genericvecvariantfreq \dirac{\genericvariantlabel}$ matches variant frequency
vectors with their labels. We write $\randommeas \sim \PPP(\levync)$ to indicate that $\randommeas$
is generated by drawing $\{\genericvecvariantfreq\}_{\variantidx=1}^{\infty}$ from a Poisson
point process (PPP) with rate measure $\levync(\de\vecvariantfreq{})$ and drawing the $\genericvariantlabel$
i.i.d.\ uniform on $[0,1]$. We model $\genericvariantcount \sim \bernoullirv{\genericvariantfreq}$, i.i.d.\ across $\unitidx$
and independent across $\popidx$ and $\variantidx$; when the $\genericvariantcount$ are drawn this way, we say that
$\pilotdata$
is drawn according to a \emph{multiple-population Bernoulli process} ($\mBeP$) with measure parameter $\randommeas$ and count-vector parameter $\vecsamplesize$.
The $\numpops$-population generalization of \cref{eq:generative_model} is then
\begin{align}
	\begin{split}
		\randommeas & \sim \PPP(\nu) \quad \textrm{ (Prior) } \\
		X_{1:\numpops, \bm{1}:\vecsamplesize} \mid \randommeas & \sim \mBeP(\randommeas, \vecsamplesize). \quad \textrm{ (Likelihood) }
		\label{eq:appx_generative_model}
	\end{split}
\end{align}
In particular, the notation and development in the main text is the $\numpops=2$ special case of what is written above.
Observe that \cref{req:finite_mean} and \cref{req:inf_mass} do not specifically reference $\numpops=2$ and can be
applied to the general $\numpops > 0$ case as well.

To prove our main result, we will find it useful to show that each of \cref{req:finite_mean} and \cref{req:inf_mass} is equivalent to
an integral constraint on the L\'{e}vy  rate measure $\levync$ of the prior distribution $\randommeas$. To that end, we state and prove the following lemma.
\begin{myLemma}[Integral requirements]
	Take a number of populations $\numpops$ with $1 < \numpops < \infty$.
	Take data generated according to the model in \Cref{eq:appx_generative_model}. Then \Cref{req:finite_mean} holds if and only if the rate measure $\levync$ satisfies  
	\begin{equation}
		\forall \popidx \in \{1,\ldots,\numpops\}, \quad \int_{[0,1]^\numpops} \variantfreqperpop{\popidx} \levync(\de\vecvariantfreq{})<\infty.  \label{eq:finite_first_moment}
	\end{equation} 
	And \Cref{req:inf_mass} holds if and only if the rate measure $\levync$ satisfies 
	\begin{equation}
		\forall \popidx \in \{1,\ldots,\numpops\}, \quad \int_{(0,1]}  \levync_{-\popidx} (\de \variantfreqperpop{\popidx})=\infty. \label{eq:infinite_mass}
	\end{equation}
		Here $\levync_{-\popidx} (\de \variantfreqperpop{\popidx}) := \int_{[0,1]^{\numpops-1}} \levync(\de\vecvariantfreq{-\popidx})$, where $\vecvariantfreq{-\popidx} = [\variantfreqperpop{1},\ldots,\variantfreqperpop{\popidx-1}, \variantfreqperpop{\popidx+1}, \ldots,\theta_P]^\top$ denotes a vector containing all population indices other than $\popidx$.
    \label{lemma:integral_requirements}
\end{myLemma}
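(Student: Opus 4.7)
The plan is to prove the two equivalences separately. In each case I would translate the desideratum into a statement about the random variant-frequency configuration generated by the PPP, and then apply standard tools for Poisson point processes.

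\textbf{Equivalence with Desideratum A.} First I would condition on $\randommeas$. Given the frequencies, the indicators $\genericvariantcount$ are conditionally independent across $\variantidx$ with $\P(\genericvariantcount = 1 \mid \randommeas) = \genericvariantfreq$. By a standard conditional Borel--Cantelli / first-moment argument, the sum $\sum_\variantidx \genericvariantcount$ is almost surely finite if and only if $\sum_\variantidx \genericvariantfreq < \infty$ almost surely. Next I would apply Campbell's theorem (plus a Borel--Cantelli argument on the atoms with $f > 1$) to the PPP with intensity $\levync$: for any non-negative measurable $f$ on $[0,1]^\numpops$, the random sum $\sum_\variantidx f(\vecvariantfreq{\variantidx})$ is almost surely finite if and only if $\int \min\{f(\vecvariantfreq{}), 1\}\, \levync(\de\vecvariantfreq{}) < \infty$. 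Taking $f(\vecvariantfreq{}) = \variantfreqperpop{\popidx}$ and using $\variantfreqperpop{\popidx} \in [0,1]$ collapses this condition to $\int_{[0,1]^\numpops} \variantfreqperpop{\popidx}\, \levync(\de\vecvariantfreq{}) < \infty$. Quantifying over $\popidx$ yields the equivalence with \eqref{eq:finite_first_moment}.

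\textbf{Equivalence with Desideratum B.} The key observation is that the projection of $\randommeas$ onto its $\popidx$-th coordinate is itself a PPP on $[0,1]$ with intensity $\levync_{-\popidx}$. Hence the number of atoms of $\randommeas$ whose $\popidx$-th coordinate is strictly positive is Poisson distributed with mean $\int_{(0,1]} \levync_{-\popidx}(\de\variantfreqperpop{\popidx})$; this is almost surely infinite if and only if the integral is infinite, and almost surely finite otherwise. I would then translate cardinality into Desideratum B. For the ``if'' direction, when the integral is infinite, infinitely many atoms with $\variantfreqperpop{\popidx} > 0$ exist almost surely; combined with part A, which guarantees that only finitely many variants appear in any finite sample, a second Borel--Cantelli argument applied to the conditionally independent Bernoulli draws shows that each such atom is eventually observed as we sample more from population $\popidx$, so new variants always remain to be discovered. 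For the ``only if'' direction, when the integral is finite, there are almost surely only finitely many atoms with $\variantfreqperpop{\popidx} > 0$, and all of them are observed almost surely after sufficiently many samples, contradicting Desideratum B. Quantifying over $\popidx$ yields the equivalence with \eqref{eq:infinite_mass}.

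\textbf{Anticipated main obstacle.} The subtler step is the formal translation of Desideratum B, which is phrased in terms of the sampling process, into a statement about the cardinality of the positive-frequency support of $\randommeas$. This requires carefully combining the PPP structure with the conditional Bernoulli likelihood, and also handling the ``for all $\popidx$'' quantifier symmetrically across populations. The remainder of the argument is routine measure-theoretic bookkeeping.
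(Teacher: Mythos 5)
Your proposal is correct and follows essentially the same route as the paper: project the multivariate PPP onto each coordinate via the mapping theorem and reduce to the standard single-population characterizations of finitely-many-features-per-sample and infinitely-many-features-to-discover. The only difference is that you rederive those one-population facts from first principles (Campbell's theorem plus Borel--Cantelli) where the paper simply cites the corresponding results of Broderick et al.\ (2018, Section 2.3).
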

\begin{proof}
	By construction, a point drawn from the Poisson point process with rate measure $\levync$ is of the form 
	$\vecvariantfreq{} = [\variantfreqperpop{1}, \ldots,\theta_P]^\top \in [0,1]^\numpops$. 
	By standard properties of multivariate Poisson processes, the $\popidx$-th coordinate of the multivariate Poisson process is a univariate Poisson point process with rate measure $\levync_{-\popidx} (\de \variantfreqperpop{\popidx}) := \int \levync(\de\vecvariantfreq{-\popidx})$ \citep[Mapping theorem,][ch. 2.3]{kingman1992poisson}.

	In our setting, then, an individual $\genericmeasurevariantcount$ in population $\popidx$ is drawn from a Bernoulli process, where the parameter encoding the variants with their frequency follows a Poisson point process with rate measure $\levync_{-\popidx}$. Following the same argument in \citet[][Section 2.3]{Broderick2018}, $\genericmeasurevariantcount$ shows finitely many variants (\cref{req:finite_mean}) with probability one if and only if the underlying rate measure has finite expectation on $[0,1]$. In our case, population $\popidx$'s frequency is from the marginal rate; thus individuals exhibit finitely many variants if and only if $\int\theta_p \levync_{-\popidx}(\de \variantfreqperpop{\popidx}) < \infty$.
	By substituting the definition of $\levync_{-\popidx}(\de \variantfreqperpop{\popidx})$, $\int\theta_p \levync_{-\popidx}(\de \variantfreqperpop{\popidx}) < \infty$ can be rewritten as \cref{eq:finite_first_moment}.
	 

	\Cref{req:inf_mass} is satisfied if and only if for each population there are infinitely many variants that can be discovered. Again by \citet[][Section 2.3]{Broderick2018} population $\popidx$ has infinitely many variants to be seen if and only if $\levync_{-\popidx} (\de \variantfreqperpop{\popidx})$ has infinite mass on $(0,1]$, which is the requirement stated in \cref{eq:infinite_mass}.  
\end{proof}

\begin{myRemark}[On \cref{req:inf_mass}]
	\label{remark:inf_mass}
	A traditional requirement of one-population Bayesian nonparametric feature models is that the rate measure has infinite mass on $(0,1]$.
	\Cref{req:inf_mass} provides an extension of this requirement to multiple populations. If we 
	think of \cref{req:inf_mass} as an extension from a single dimensional $\variantfreq{}{}$ to a multidimensional $ \vecvariantfreq{}$, 
	there a number of other possible extensions that might a priori seem reasonable.
	For instance, we might instead require infinite mass on $(0,1]^\numpops$ (i.e., all vectors in the unit cube with strictly positive entries) or
	infinite mass on $[0,1]^\numpops\setminus{\bm{0}}$ where $\bm{0}$ is a $\numpops$-long 0-vector (i.e., the unit cube without the origin).
	
	We next observe that these alternative extensions are not equivalent to \cref{req:inf_mass}.
	To help us construct our counterexamples, we will use $\numpops$ (unidimensional) measures $\mu_{1}(\de\variantfreqperpop{1}),\dots ,\mu_{\numpops}(\de\variantfreqperpop{\numpops})$. We assume that for $\popidx=1,\dots,\numpops$, we have $\int_{(0,1]} \mu_{\popidx}(\de\variantfreqperpop{\popidx})=\infty$ and $\int_{(0,1]}\variantfreqperpop{\popidx} \mu_{\popidx}(\de\variantfreqperpop{\popidx})<\infty$. We denote the delta measure at $0$ as $\delta_0(\cdot)$. 
	
	First, we construct a measure that has 0 mass on $(0,1]^\numpops$ but satisfies \cref{req:inf_mass}. To construct our measure, we put mass only on the axes; namely, we consider the measure $\sum_{\popidx=1}^\numpops\mu_{\popidx}(\de\variantfreqperpop{\popidx})\prod_{k: k\ne \popidx}\delta_{0}(\de\variantfreqperpop{k})$. This measure can be interpreted as $\numpops$ populations that do not share variants; each population separately draws its variant frequencies from $\mu_{\popidx}$. Since by assumption $\mu_{\popidx}$ satisfies the single-population requirements, the full measure satisfies \cref{req:inf_mass}. But the measure has 0 mass on $(0,1]^\numpops$ because its mass lies entirely along the axes.

	Next, we construct a measure that has infinite mass on $[0,1]^\numpops\setminus{\bm{0}}$ but does not satisfy \cref{req:inf_mass}. In this case, we put mass on exactly one axis. Then all but one population exhibit zero variants. In particular, we consider the measure $\mu_{1}(\de\variantfreqperpop{1})\prod_{k=2}^\numpops\delta_0(\de\variantfreqperpop{k})$. This measure has infinite mass on $[0,1]^\numpops\setminus{\bm{0}}$. But when we sample from, e.g., population 2, we will not see any variants since all variants have frequency 0. Thus this measure does not satisfy \cref{req:inf_mass}. 
\end{myRemark}

We now extend the statement of \Cref{prop:incomp} of the main text to the general setting for an arbitrary number of populations $\numpops$. 
To do so, we first restate~\Cref{req:product_measure}, now for a generic number of populations.
\begin{conditionp}{C'} \label{req:appx_product_measure}
	The rate measure for the frequency-generating PPP in $\numpops$ populations factorizes across populations:
	$\levync(\de\btheta) = \prod_{\popidx=1}^{\numpops} \levync_{\popidx}(\de\variantfreqperpop{\popidx})$.  
\end{conditionp}
Notice that \Cref{req:product_measure} is a special case of the condition above, when $\numpops = 2$.

We can now state the result which generalizes \Cref{prop:incomp}.

\begin{myTheorem} \label{prop:incomp_general}
	Take the general $\numpops$ population model of \cref{eq:appx_generative_model}. If the Poisson point process
	rate measure in the prior factorizes across populations according to \cref{req:appx_product_measure},
	the model cannot simultaneously generate samples with finitely many variants (\cref{req:finite_mean})
	and guarantee that there are always more variants to discover (\cref{req:inf_mass}).
\end{myTheorem}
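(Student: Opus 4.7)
The plan is to invoke \cref{lemma:integral_requirements} to restate \cref{req:finite_mean,req:inf_mass} as integral conditions on the rate measure $\levync$, substitute the product form from \cref{req:appx_product_measure}, use Fubini to factorize each condition into one-dimensional integrals over the marginals $\levync_{\popidx}$, and then obtain a contradiction by a short case analysis on which marginals have a strictly positive first moment.

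For each population $\popidx\in\{1,\ldots,\numpops\}$ I would introduce the shorthand
\[
m_{\popidx} := \int_{[0,1]} \levync_{\popidx}(\de\variantfreqperpop{\popidx}),\quad \mu_{\popidx} := \int_{(0,1]} \levync_{\popidx}(\de\variantfreqperpop{\popidx}),\quad t_{\popidx} := \int_{[0,1]} \variantfreqperpop{\popidx}\,\levync_{\popidx}(\de\variantfreqperpop{\popidx}),
\]
and observe that under \cref{req:appx_product_measure} the marginal rate measure factors as $\levync_{-\popidx}(\cdot) = \levync_{\popidx}(\cdot)\prod_{k\neq \popidx}m_k$. By Fubini, \cref{eq:finite_first_moment} reduces to $t_{\popidx}\prod_{k\neq \popidx}m_k<\infty$ for every $\popidx$, and \cref{eq:infinite_mass} reduces to $\mu_{\popidx}\prod_{k\neq \popidx}m_k=\infty$ for every $\popidx$. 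Assume for contradiction that both families hold, and let $S:=\{\popidx : t_{\popidx}>0\}$.

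If $|S|\ge 2$, choose distinct $\popidx_1,\popidx_2\in S$: finiteness of $t_{\popidx_1}\prod_{k\neq \popidx_1}m_k$ forces $m_k<\infty$ for all $k\neq \popidx_1$, and the analogous statement for $\popidx_2$ fills in $m_{\popidx_1}<\infty$ as well, so $m_k<\infty$ for every $k$. Since $\mu_{\popidx}\le m_{\popidx}<\infty$, every product $\mu_{\popidx}\prod_{k\neq \popidx}m_k$ is finite, contradicting the requirement that it be infinite. If $|S|\le 1$, any $\popidx\notin S$ has $t_{\popidx}=0$, so $\levync_{\popidx}$ is supported at $\{0\}$ and $\mu_{\popidx}=0$; then the $\levync$-measure of the cylinder $\{\variantfreqperpop{\popidx}>0\}$ is unambiguously zero by Fubini, contradicting \cref{eq:infinite_mass} for that $\popidx$. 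The one delicate point is this last step, which superficially looks like a $0\cdot\infty$ situation; I would sidestep it by working with the product measure of the cylinder rather than with the formal product of scalars, so that the $\levync_{\popidx}$-nullity of $\{\variantfreqperpop{\popidx}>0\}$ kills the cylinder regardless of the sizes of the other $m_k$. With all cases exhausted, \cref{req:finite_mean,req:inf_mass} cannot hold jointly under \cref{req:appx_product_measure}, which proves the theorem.
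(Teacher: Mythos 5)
Your proof is correct and takes essentially the same route as the paper's: both convert \cref{req:finite_mean,req:inf_mass} into integral conditions via \cref{lemma:integral_requirements}, factorize those conditions under \cref{req:appx_product_measure}, and reach a contradiction by tracking which one-dimensional masses and first moments are zero, positive-finite, or infinite (your single case split on $S$ is a slightly tidier packaging of the paper's two separate implications). The one step you leave implicit is in the case $|S|\ge 2$: concluding $m_k<\infty$ for all $k\ne p_1$ from $\prod_{k\ne p_1}m_k<\infty$ requires that no $m_k$ vanish, which does hold here because \cref{req:inf_mass} for any population $j$ makes $\mu_j\prod_{k\ne j}m_k=\infty$ and hence forces every factor in that product to be strictly positive --- worth one explicit sentence, but not a gap in substance.
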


\begin{proof}
We first show that if \Cref{req:inf_mass} and \Cref{req:appx_product_measure} hold, then \Cref{req:finite_mean} cannot hold:
\[
	\text{If }\Cref{req:inf_mass} \text{ and } \Cref{req:appx_product_measure} \text{ hold, } \Cref{req:finite_mean} \text{ does not hold}.
\]
By \Cref{req:appx_product_measure} we have that $\levync$ is a ``product measure'', i.e. $\levync(\de\vecvariantfreq{})=\prod_{\popidx} \levync_{\popidx}(\de\variantfreqperpop{\popidx})$ and in turn we have $\levync_{-\popidx} (\de \variantfreqperpop{\popidx}) = \levync_{\popidx}(\de\variantfreqperpop{\popidx}) \prod_{k: k\ne \popidx} \int_{[0,1]}\levync_{k}(\de\variantfreqperpop{k})$. By \Cref{req:inf_mass} and its equivalent~\cref{eq:infinite_mass}, for all $\popidx$, we have $\int_{(0,1]} \levync_{-\popidx} (\de \variantfreqperpop{\popidx})=\infty$ thus either:
\begin{itemize}
	\item[(i)] $\int_{(0,1]}\levync_{\popidx}(\de\variantfreqperpop{\popidx})=\infty$ and all other $k\ne \popidx$, $\int_{[0,1]}\levync_{k}(\de\variantfreqperpop{k})>0$, or 
	\item[(ii)] $0<\int_{(0,1]}\levync_{\popidx}(\de\variantfreqperpop{\popidx})<\infty$ and there exists one $k\ne \popidx$ for which $\int_{[0,1]}\levync_{k}(\de\variantfreqperpop{k})=\infty$.
\end{itemize}
We now show that in both cases (i), (ii),  \Cref{req:finite_mean} does not hold.

In case (i), for $k\ne \popidx$:
\[
	\int_{[0,1]^{\numpops}}\variantfreqperpop{k}\levync(\de\vecvariantfreq{}) = \int_{[0,1]}\variantfreqperpop{k} \levync_{k}(\de\variantfreqperpop{k}) \int_{[0,1]} \levync_{\popidx}(\de\variantfreqperpop{\popidx})\prod_{i: i\ne k, \popidx} \int_{[0,1]}\levync_{i}(\de\variantfreqperpop{i})=\infty.
\]
Since we have $\prod_{i: i\ne k, \popidx} \int_{[0,1]}\levync_{i}(\de\variantfreqperpop{i})>0$ and $\int_{[0,1]}\levync_{\popidx}(\de\variantfreqperpop{\popidx})=\infty$, it remains to check $\int_{[0,1]} \variantfreqperpop{k} \levync_{k}(\de\variantfreqperpop{k})>0$. 
Recall that we assumed~\cref{eq:infinite_mass} is satisfied, we must have the ``projected'' measure $\levync_{k}(\de\variantfreqperpop{k}) \int_{[0,1]} \levync_{\popidx}(\de\variantfreqperpop{\popidx})\prod_{i: i\ne k, \popidx} \int_{[0,1]}\levync_{i}(\de\variantfreqperpop{i})$ to have mass on $(0,1]$, which implies $\levync_{k}(\de\variantfreqperpop{k})$ must have mass on $(0,1]$. Then we must also have $\int_{[0,1]}\variantfreqperpop{k} \levync_{k}(\de\variantfreqperpop{k})>0$, hence implying that \Cref{req:finite_mean} does not hold.

In case (ii), by direct factorization of the rate measure it holds that
\[
	\int_{[0,1]}\variantfreqperpop{\popidx}\levync(\de\vecvariantfreq{}) = \int_{(0,1]}\variantfreqperpop{\popidx} \levync_{\popidx}(\de\variantfreqperpop{\popidx}) \prod_{k: k\ne \popidx} \int_{[0,1]}\levync_{k}(\de\variantfreqperpop{k})=\infty.
 \] 
Hence, also in this case, \Cref{req:finite_mean} does not hold. We have then showed that if \Cref{req:inf_mass} and \Cref{req:appx_product_measure} hold, then \Cref{req:finite_mean} cannot hold.



We now show that if \Cref{req:appx_product_measure} and \Cref{req:finite_mean} hold, then \Cref{req:inf_mass} cannot hold. 
\[
	\text{If }\Cref{req:finite_mean} \text{ and } \Cref{req:appx_product_measure} \text{ hold, } \Cref{req:inf_mass} \text{ does not hold}.
\]
Under \Cref{req:finite_mean} and its equivalent~\cref{eq:finite_first_moment} and \Cref{req:appx_product_measure}, for a given $\popidx$, 
\begin{align*}
    \int_{[0,1]^\numpops} \variantfreqperpop{\popidx} \levync(\de\vecvariantfreq{})&=\prod_{k\ne \popidx}\int_{[0,1]} \levync_k(\de\variantfreqperpop{k}) \times\int_{[0,1]} \variantfreqperpop{\popidx}\levync_{\popidx}(\de\variantfreqperpop{\popidx})<\infty.
\end{align*}
For the condition above to be verified, one of the following two cases must hold:
\begin{itemize}
	\item[(i)] $\int_{[0,1]} \variantfreqperpop{\popidx}\levync_{\popidx}(\de\variantfreqperpop{\popidx})=0$, or
	\item[(ii)] $\int_{[0,1]} \variantfreqperpop{\popidx}\levync_{\popidx}(\de\variantfreqperpop{\popidx})>0$. 
\end{itemize}

We first show that when (i) holds, \Cref{req:inf_mass} can't be verified. Indeed, $\int_{[0,1]} \variantfreqperpop{\popidx}\levync_{\popidx}(\de\variantfreqperpop{\popidx})=0$ only holds when $\levync_{\popidx}(\de\variantfreqperpop{\popidx})$ is a $\delta$ measure at 0. And in this case, the marginal $\levync_{-\popidx}$ has no support on $(0,1]$, thus~\cref{eq:infinite_mass} cannot hold for coordinate $\popidx$. 

For case (ii), we have for all coordinates $k\ne \popidx $, $\int_{[0,1]} \levync_k(d\variantfreqperpop{k})<\infty$. We can repeat the argument with a different coordinate and conclude also $\int_{[0,1]} \levync_\popidx(d\variantfreqperpop{\popidx})<\infty$, thus the total mass is finite, $\int \levync(\de\vecvariantfreq{})<\infty$, and \Cref{req:inf_mass} cannot hold.  


Last, suppose both  \Cref{req:finite_mean} and \Cref{req:inf_mass}. Using the two results above, by contradiction it cannot be that \Cref{req:appx_product_measure} holds.
\end{proof}

\begin{proof}[Proof of \Cref{prop:incomp}]
	\Cref{prop:incomp} is a special case of \Cref{prop:incomp_general} when $\numpops = 2$. 
\end{proof}

\section{Multi-population BNP models satisfying Desiderata \ref{req:finite_mean} and \ref{req:inf_mass}}
\label{app:unified_view}

In this section, in light of the incompatibility result stated in \Cref{prop:incomp} and proved in \Cref{sec:product_measure}, we provide details on BNP models that instead satisfiy \Cref{req:finite_mean} and~\Cref{req:inf_mass}.

\subsection{Proposed prior satisfies \Cref{req:finite_mean} and~\Cref{req:inf_mass}}
\label{sec:proper_proof}

In \Cref{eq:proposed_prior}, we proposed a prior distribution that is \emph{not} a product measure factorizing across multiple populations, i.e.\ it does not satisfy \Cref{req:product_measure}.
We now prove in ~\cref{prop:near_conjugate_proper} that our proposed prior of \cref{eq:proposed_prior}  satisfies \Cref{req:finite_mean} and~\Cref{req:inf_mass}. 
\begin{myProposition}
    \label{prop:near_conjugate_proper}
    Consider data generated from \Cref{eq:generative_model}, where we replace the generic rate measure $\levync$ with the following rate measure defined on the unit square $[0,1]^2$
    \begin{equation}
    	\ratemeasproposed(\de\vecvariantfreq{})=\frac{\mass}{B(\corr{1},\conc{1})B(\corr{2},\conc{2})}\frac{(\variantfreqperpop{1}+\variantfreqperpop{2}^{\rate{2}/\rate{1}})^{-\rate{1}}}{(\variantfreqperpop{1}+\variantfreqperpop{2})^{\corr{1}+\corr{2}}}\variantfreqperpop{1}^{\corr{1}-1}\variantfreqperpop{2}^{\corr{2}-1}(1-\variantfreqperpop{1})^{\conc{1}-1}(1-\variantfreqperpop{2})^{\conc{2}-1} \de \variantfreqperpop{1} \de \variantfreqperpop{2}. \label{eq:proposed_levy}
    \end{equation}
    When $\mass,\conc{1},\conc{2}, \corr{1},\corr{2}>0$ and $\rate{1},\rate{2}\in (0,1)$, desiderata \ref{req:finite_mean} and \ref{req:inf_mass} hold, i.e., by virtue of \Cref{lemma:integral_requirements},
    \begin{itemize}
    	\item[(i)] $\iint_{[0,1]^2}\variantfreqperpop{\popidx}\ratemeasproposed(\de\btheta)<\infty$ for $\popidx=1,2$, and
    	\item[(ii)] $\iint_{(0,1]^2}\ratemeasproposed(\de \vecvariantfreq{})=\infty$.
	\end{itemize} 
    Note that in~\cref{remark:inf_mass} we discussed that (ii) is not equivalent to~\cref{eq:infinite_mass}, but it is sufficient for~\cref{eq:infinite_mass}.\end{myProposition}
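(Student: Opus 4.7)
My plan is to invoke \Cref{lemma:integral_requirements}, which restates \Cref{req:finite_mean} as the finiteness of $\iint \variantfreqperpop{\popidx}\,\ratemeasproposed(\de\vecvariantfreq{})$ for each $\popidx$, and restates \Cref{req:inf_mass} as an infinite-mass condition on marginals. By \Cref{remark:inf_mass}, the stronger condition $\iint_{(0,1]^2}\ratemeasproposed(\de\vecvariantfreq{}) = \infty$ (claim (ii)) implies \Cref{req:inf_mass}, so it suffices to verify the two integral conditions (i) and (ii) stated in the proposition.

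For (i), I would split $[0,1]^2$ into four dyadic sub-squares according to whether each $\variantfreqperpop{\popidx}$ lies in $[0,1/2]$ or $[1/2,1]$. On the three sub-squares where at least one coordinate is bounded away from $0$, the factors $(\variantfreqperpop{1}+\variantfreqperpop{2})^{-(\corr{1}+\corr{2})}$ and $(\variantfreqperpop{1}+\variantfreqperpop{2}^{\rate{2}/\rate{1}})^{-\rate{1}}$ are uniformly bounded, and the remaining factors form a standard beta-type kernel, integrable under $\conc{\popidx},\corr{\popidx}>0$. The delicate region is the corner $[0,1/2]^2$. There I would apply the bound $(\variantfreqperpop{1}+\variantfreqperpop{2}^{\rate{2}/\rate{1}})^{-\rate{1}}\le \variantfreqperpop{1}^{-\rate{1}}$ when bounding the $\variantfreqperpop{1}$-moment, and $(\variantfreqperpop{1}+\variantfreqperpop{2}^{\rate{2}/\rate{1}})^{-\rate{1}}\le \variantfreqperpop{2}^{-\rate{2}}$ when bounding the $\variantfreqperpop{2}$-moment (both valid since the summands are nonnegative and the exponent is negative). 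Passing to polar coordinates $\variantfreqperpop{1}=r\cos\phi$, $\variantfreqperpop{2}=r\sin\phi$ makes the radial part reduce to $\int_0^{\sqrt{2}} r^{-\rate{\popidx}}\,dr$, finite since $\rate{\popidx}\in(0,1)$; the angular integral reduces to beta-like integrals on $[0,\pi/2]$ that converge because $\corr{1},\corr{2}>0$ and $\rate{\popidx}<1$ together keep every angular exponent strictly greater than $-1$.

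For (ii), I would restrict the integral to $[0,1/2]^2$ and apply constant lower bounds to the well-behaved factors: $(\variantfreqperpop{1}+\variantfreqperpop{2}^{\rate{2}/\rate{1}})^{-\rate{1}}\ge 2^{-\rate{1}}$ since both summands lie in $[0,1]$, and each $(1-\variantfreqperpop{\popidx})^{\conc{\popidx}-1}$ is bounded below by a positive constant on $[0,1/2]$ regardless of the sign of $\conc{\popidx}-1$. The problem therefore reduces to showing divergence of
\begin{equation*}
\iint_{[0,1/2]^2}\variantfreqperpop{1}^{\corr{1}-1}\variantfreqperpop{2}^{\corr{2}-1}(\variantfreqperpop{1}+\variantfreqperpop{2})^{-(\corr{1}+\corr{2})}\,\de\variantfreqperpop{1}\,\de\variantfreqperpop{2}.
\end{equation*}
In polar coordinates, the integrand (times Jacobian $r$) becomes $r^{-1}$ multiplied by an angular factor that is bounded and strictly positive on a subinterval such as $[\pi/6,\pi/3]$, so the radial integral $\int_0^{\epsilon} r^{-1}\,dr$ forces divergence.

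The main obstacle is controlling the cross term $(\variantfreqperpop{1}+\variantfreqperpop{2}^{\rate{2}/\rate{1}})^{-\rate{1}}$ near the origin in part (i), because the two coordinates enter asymmetrically through the exponent $\rate{2}/\rate{1}$, and a single one-sided bound is not strong enough for both moments simultaneously. The resolution is to use a coordinate-matched bound when estimating each moment: $\variantfreqperpop{1}^{-\rate{1}}$ for the $\variantfreqperpop{1}$-moment and $\variantfreqperpop{2}^{-\rate{2}}$ for the $\variantfreqperpop{2}$-moment, so that after combining with the $\variantfreqperpop{\popidx}\cdot\variantfreqperpop{\popidx}^{\corr{\popidx}-1}=\variantfreqperpop{\popidx}^{\corr{\popidx}}$ factor the radial exponent in polar coordinates is exactly $-\rate{\popidx}>-1$.
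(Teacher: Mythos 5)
Your proof is correct, but it takes a genuinely different route from the paper's. For part (i), on the corner region near the origin the paper splits the coupling denominator via $(\variantfreqperpop{1}+\variantfreqperpop{2})^{\corr{1}+\corr{2}} > \variantfreqperpop{1}^{\corr{1}+\smallpositive}\variantfreqperpop{2}^{\corr{2}-\smallpositive}$ for an auxiliary $\smallpositive \in (0,\min\{\corr{2},1-\rate{1}\})$, reducing everything to a product of two univariate beta-type integrals; you instead keep the coupling term intact and exploit its homogeneity in polar coordinates, so that the radial exponent is exactly $-\rate{\popidx} > -1$ and the angular integral converges because $\corr{1},\corr{2}>0$ and $\rate{\popidx}<1$. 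Your coordinate-matched bounds $(\variantfreqperpop{1}+\variantfreqperpop{2}^{\rate{2}/\rate{1}})^{-\rate{1}}\le\variantfreqperpop{1}^{-\rate{1}}$ and $\le\variantfreqperpop{2}^{-\rate{2}}$ are the same first step the paper uses, but the polar-coordinate finish avoids the $\smallpositive$ bookkeeping. For part (ii), the difference is more pronounced: the paper retains the factor $(\variantfreqperpop{1}+\variantfreqperpop{2}^{\rate{2}/\rate{1}})^{-\rate{1}}$, lower-bounds it by $(2\max\{\variantfreqperpop{1},\variantfreqperpop{2}^{\rate{2}/\rate{1}}\})^{-\rate{1}}$ on a triangular region, and must split into the cases $\rate{1}\le\rate{2}$ and $\rate{1}>\rate{2}$, obtaining divergence of the form $\int_0^\epsilon \variantfreqperpop{\popidx}^{-\rate{\popidx}-1}\de\variantfreqperpop{\popidx}$; you discard that factor entirely via the constant lower bound $2^{-\rate{1}}$ and observe that $\variantfreqperpop{1}^{\corr{1}-1}\variantfreqperpop{2}^{\corr{2}-1}(\variantfreqperpop{1}+\variantfreqperpop{2})^{-\corr{1}-\corr{2}}$ is homogeneous of degree $-2$, forcing an $\int_0^\epsilon r^{-1}\de r$ divergence with no case analysis and no dependence on $\rate{1},\rate{2}$ at all. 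Your version is shorter and makes the mechanism more transparent; the paper's version has the side benefit that its sharper lower bounds on the rate measure are reused later for the power-law asymptotics in \cref{sec:projection_scheme_proof}, which your constant lower bound would be too weak to support. One cosmetic point: on $[0,1/2]^2$ the radial variable only ranges up to $\sqrt{2}/2$, not $\sqrt{2}$, though this is immaterial for convergence.
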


\begin{proof} \label{proof:proper}
    To show that the desiderata \ref{req:finite_mean} and \ref{req:inf_mass} for our data hold, we equivalently prove the integral requirements provided in \Cref{lemma:integral_requirements} are verified when using $\ratemeasproposed$, with hyperparameters $\mass,\conc{1},\conc{2}, \corr{1},\corr{2}>0$ and $\rate{1},\rate{2}\in (0,1)$.
    The crux of the proof relies on the analysis of the behavior of the integral requirements in a neighborhood of the origin  $A:=[0,\epsilon]^2$ and $D:=(0,\epsilon]^2$ for an arbitrarily small scalar $\epsilon > 0$. 
    In what follows, we use the notation $I_A(x) = \mathbbm{1}(x \in A)$, $I_{D}(x)=\mathbbm{1}(x \in D)$ and
    \[
    	\phi:=\frac{\mass}{B(\corr{1},\conc{1})B(\corr{2},\conc{2})}<\infty.
    \]
    We start with the integral requirement (i).
    \paragraph{Proof of the integral requirement (i) (\Cref{eq:finite_first_moment} in \Cref{lemma:integral_requirements}).}
    Notice that for an arbitrary $\popidx \in \{1,2\}$, we can rewrite \Cref{eq:finite_first_moment} as follows:
    	\begin{equation*}
    		\iint_{[0,1]^2}\variantfreqperpop{\popidx}\ratemeasproposed(\de \vecvariantfreq{} )= \int_{A^C}\variantfreqperpop{\popidx}\ratemeasproposed(\de\vecvariantfreq{}) + \int_A \variantfreqperpop{\popidx}\ratemeasproposed(\de\vecvariantfreq{}).
	\end{equation*}
	We will now show that for our proposed $\ratemeasproposed$ both integrals above are finite.
%
    First, consider the integral over $A^C$ above. For $\popidx\in\{1,2\}$,
    	\begin{equation}
    		\int_{A^C}\variantfreqperpop{\popidx}\ratemeasproposed(\de\vecvariantfreq{}) =  \phi \int I_{A^C}(\vecvariantfreq{})\frac{(\variantfreqperpop{1}+\variantfreqperpop{2}^{\rate{2}/\rate{1}})^{-\rate{1}}}{(\variantfreqperpop{1}+\variantfreqperpop{2})^{\corr{1}+\corr{2}}}\variantfreqperpop{\popidx}\variantfreqperpop{1}^{\corr{1}-1}\variantfreqperpop{2}^{\corr{2}-1}(1-\variantfreqperpop{1})^{\conc{1}-1}(1-\variantfreqperpop{2})^{\conc{2}-1}  \de\vecvariantfreq{}. \label{eq:int_AC}
    	\end{equation}
    Notice that  we can bound the first part of the integrand as follows:
   	\begin{equation}
   		I_{A^C}( \vecvariantfreq{})\frac{(\variantfreqperpop{1}+\variantfreqperpop{2}^{\rate{2}/\rate{1}})^{-\rate{1}}}{(\variantfreqperpop{1}+\variantfreqperpop{2})^{\corr{1}+\corr{2}}}
		\le \min\{\epsilon,\epsilon^{\rate{2}/\rate{1}}\}^{-\rate{1}-\corr{1}-\corr{2}}<\infty. \label{eq:bound_betas}
	\end{equation}
	Indeed, if $\vecvariantfreq{}\notin A^C$, then the inequality holds trivially. If instead $\vecvariantfreq{}\in A^C$, then $ \variantfreqperpop{1} + \variantfreqperpop{2} >\epsilon$. In turn, 
	\[
		(\variantfreqperpop{1}+\variantfreqperpop{2})^{-\corr{1}-\corr{2}}< \epsilon^{-\corr{1}-\corr{2}} \le \min\{\epsilon, \epsilon^{\rate{2}/\rate{1}}  \}^{-\corr{1}-\corr{2}},
	\] 
	since $\epsilon < 1$. 
	Similarly, 
	$
		(\variantfreqperpop{1} + \variantfreqperpop{2}^{\rate{2}/\rate{1}})> \min \{ \epsilon, \epsilon^{\rate{2}/\rate{1}} \}
	$
	thus 
	\[
		(\variantfreqperpop{1}+\variantfreqperpop{2}^{\rate{2}/\rate{1}})^{-\rate{1}} < \min \{ \epsilon, \epsilon^{\rate{2}/\rate{1}} \}^{-\rate{1}}.
	\]
	 The two bounds combined yield \Cref{eq:bound_betas}. Recognizing that the remaining part of the integrand in \Cref{eq:int_AC} is the kernel of a beta random variable for either value of $\popidx$, it holds
	 \begin{align}
	 	\int_{A^C}  \variantfreqperpop{\popidx}\ratemeasproposed(\de\vecvariantfreq{}) &= \phi \int I_{A^C}(\btheta)\frac{(\variantfreqperpop{1}+\variantfreqperpop{2}^{\rate{2}/\rate{1}})^{-\rate{1}}}{(\variantfreqperpop{1}+\variantfreqperpop{2})^{\corr{1}+\corr{2}}}\variantfreqperpop{1}^{(\corr{1}+1)-1}\variantfreqperpop{2}^{\corr{2}-1}(1-\variantfreqperpop{1})^{\conc{1}-1}(1-\variantfreqperpop{2})^{\conc{2}-1}\de\vecvariantfreq{}  \nonumber \\
		&<\phi \min\{\epsilon,\epsilon^{\rate{2}/\rate{1}}\}^{-\rate{1}-\corr{1}-\corr{2}} \int_{A^C}  \variantfreqperpop{\popidx} \variantfreqperpop{1}^{\corr{1}-1} \variantfreqperpop{2}^{(\corr{2}-1}(1-\variantfreqperpop{1})^{\conc{1}-1}(1-\variantfreqperpop{2})^{\conc{2}-1}\de\vecvariantfreq{} \nonumber \\
		&<\infty. \label{eq:req_p0}
   	 \end{align}
    Thus, it remains for us to check that also on the set $A$, around the origin, the integral is finite, i.e.\ $\int_A \variantfreqperpop{\popidx}\ratemeasproposed(\de\vecvariantfreq{})<\infty$.
First, notice that on the set $A$, we can obtain a pointwise upper bound for the L\'{e}vy rate $\ratemeasproposed$ of \Cref{eq:proposed_prior} as follows:
    \[
    	\ratemeasproposed(\de \btheta) I_A(\btheta) \le  \eta \frac{(\variantfreqperpop{1}+\variantfreqperpop{2}^{\rate{2}/\rate{1}})^{-\rate{1}}}{(\variantfreqperpop{1}+\variantfreqperpop{2})^{\corr{1}+\corr{2}}}\variantfreqperpop{1}^{\corr{1}-1}\variantfreqperpop{2}^{\corr{2}-1} I_A(\btheta) \de \variantfreqperpop{1} \de \variantfreqperpop{2},
    \]
    with 
    \[
    	\eta := \phi \max\{1, (1-\epsilon)^{\conc{1}-1}\} \max\{ 1, (1-\epsilon)^{\conc{2}-1}\}.
    \]
    We here consider two separate cases. 
    
    First, for $p=1$,
    \begin{align*}
    \int_A \variantfreqperpop{1}\ratemeasproposed(\de\vecvariantfreq{}) & < \eta \int_A \frac{(\variantfreqperpop{1}+\variantfreqperpop{2}^{\rate{2}/\rate{1}})^{-\rate{1}}}{(\variantfreqperpop{1}+\variantfreqperpop{2})^{\corr{1}+\corr{2}}}\variantfreqperpop{1}^{(1+\corr{1})-1}\variantfreqperpop{2}^{\corr{2}-1}\de\btheta.
        \end{align*} 
    Now notice that $(\variantfreqperpop{1}+\variantfreqperpop{2}^{\rate{2}/\rate{1}})^{\rate{1}}>\variantfreqperpop{1}^{\rate{1}}$, and for $0<\delta<\corr{2}$
    \begin{align*}
    	(\variantfreqperpop{1}+\variantfreqperpop{2})^{\corr{1}+\corr{2}} &= (\variantfreqperpop{1}+\variantfreqperpop{2})^{\corr{1}+\corr{2}+\delta-\delta} 
												=(\variantfreqperpop{1}+\variantfreqperpop{2})^{\corr{1}+\delta} (\variantfreqperpop{1}+\variantfreqperpop{2})^{\corr{2}-\delta}
												>(\variantfreqperpop{1})^{\corr{1} + \delta } (\variantfreqperpop{2})^{\corr{2} - \delta},
	\end{align*}
	so we can further obtain the upper bound
    \begin{align}
     \int_A \variantfreqperpop{1}\ratemeasproposed(\de\vecvariantfreq{}) & < \eta \int_A (\variantfreqperpop{1})^{-\rate{1}} \variantfreqperpop{1}^{-\corr{1}-\delta} \variantfreqperpop{2}^{-\corr{2}+\delta} \variantfreqperpop{1}^{(1+\corr{1})-1}\variantfreqperpop{2}^{\corr{2}-1}\de\btheta \nonumber \\
   &= \eta \int_0^1 \variantfreqperpop{1}^{(1-\rate{1}-\delta) -1} \de \variantfreqperpop{1} \int_0^1 \variantfreqperpop{2}^{\delta-1}\de\variantfreqperpop{2} < \infty, \label{eq:req_p1}
    \end{align}    
	where the last inequality holds as long as $1-\rate{1}-\delta > 0$, i.e.\ $\delta \in (0,1-\rate{1})$.

    For $\popidx = 2$, 
     \begin{align*}
    \int_A \variantfreqperpop{2}\ratemeasproposed(\de\vecvariantfreq{}) & < \eta \int_A \frac{(\variantfreqperpop{1}+\variantfreqperpop{2}^{\rate{2}/\rate{1}})^{-\rate{1}}}{(\variantfreqperpop{1}+\variantfreqperpop{2})^{\corr{1}+\corr{2}}}\variantfreqperpop{1}^{\corr{1}-1}\variantfreqperpop{2}^{(1+\corr{2})-1}\de\btheta.
        \end{align*} 
	We use symmetric bounds as the derivation above for $\popidx=1$ now for the case $\popidx=2$, i.e.\ $(\variantfreqperpop{1}+\variantfreqperpop{2}^{\rate{2}/\rate{1}})^{\rate{1}}>(\variantfreqperpop{2})^{\rate{1}}$, and, for $0<\delta < \corr{1}$, $(\variantfreqperpop{1}+\variantfreqperpop{2})^{\corr{1}+\corr{2}}>(\variantfreqperpop{1})^{\corr{1} - \delta } (\variantfreqperpop{2})^{\corr{2} + \delta}$, and plugging them in we get
     \begin{align}
    \int_A \variantfreqperpop{2}\ratemeasproposed(\de\vecvariantfreq{}) & < \eta \int_A (\variantfreqperpop{1})^{\delta-1}\variantfreqperpop{2}^{(1-\rate{1}-\delta)-1}\de\btheta
    =\eta \int_0^1 (\variantfreqperpop{1})^{\delta-1} \de \variantfreqperpop{1} \int_0^1 \variantfreqperpop{2}^{(1-\rate{1}-\delta)-1}\de \variantfreqperpop{2} < \infty,  \label{eq:req_p2}
        \end{align} 	
        where the last inequality holds as long as $1-\rate{1}-\delta > 0$, i.e.\ $\delta \in (0,1-\rate{1})$. Thus, \Cref{eq:req_p0,eq:req_p1,eq:req_p2} combined yield requirement (i) (\Cref{eq:finite_first_moment}).
        
        This concludes the proof of integral requirement (i) (see \Cref{eq:finite_first_moment} in \Cref{lemma:integral_requirements}).

    \paragraph{Proof of the integral requirement (ii)  which implies \Cref{eq:infinite_mass}.}
    Recall $D=(0,\epsilon]^2$. We here recur to a decomposition similar to the first part of the proof, and rewrite the integral as
	\[
		\iint_{(0,1]^2} \ratemeasproposed(\de \vecvariantfreq{} )= \int_{(0,1]^2\setminus D} \ratemeasproposed(\de\vecvariantfreq{}) + \int_D \ratemeasproposed(\de\vecvariantfreq{}) >  \int_D \ratemeasproposed(\de\vecvariantfreq{}).
	\]

    We now show that in any neighborhood $D$ of the origin, the integral above diverges.  First, notice that on the set $D$, we can pointwise (lower) bound the L\'{e}vy rate of our proposed prior (\Cref{eq:proposed_prior}) as follows:
    \[
    	\ratemeasproposed(\de \btheta) I_{D}(\btheta) \ge  \kappa \frac{(\variantfreqperpop{1}+\variantfreqperpop{2}^{\rate{2}/\rate{1}})^{-\rate{1}}}{(\variantfreqperpop{1}+\variantfreqperpop{2})^{\corr{1}+\corr{2}}}\variantfreqperpop{1}^{\corr{1}-1}\variantfreqperpop{2}^{\corr{2}-1} I_D(\btheta) \de \variantfreqperpop{1} \de \variantfreqperpop{2},
    \]
    with 
    \[
    	\kappa := \frac{\mass}{B(\corr{1},\conc{1})B(\corr{2},\conc{2})} \min\{1, (1-\epsilon)^{\conc{1}-1}\} \min\{ 1, (1-\epsilon)^{\conc{2}-1}\},
    \]
    since on $D$, $\variantfreqperpop{p} < 1$. Hence,
    \begin{align*}
       \int_D \ratemeasproposed(\de\vecvariantfreq{}) \ge \int_{D \cap \{\variantfreqperpop{1} \ge \variantfreqperpop{2} \}} \nu (\de \btheta)  
        \ge \kappa \int_0^{\epsilon} \int_0^{\variantfreqperpop{1}} \frac{(\variantfreqperpop{1}+\variantfreqperpop{2}^{\rate{2}/\rate{1}})^{-\rate{1}}}{(\variantfreqperpop{1}+\variantfreqperpop{2})^{\corr{1}+\corr{2}}}\variantfreqperpop{1}^{\corr{1}-1}\variantfreqperpop{2}^{\corr{2}-1} \de \variantfreqperpop{1} \de \variantfreqperpop{2}.
    \end{align*}

    In the case in which $\rate{1} \le \rate{2}$  when $0\le\variantfreqperpop{2} \le \variantfreqperpop{1}\le1$ it is also the case that $\variantfreqperpop{2}^{\rate{2}/\rate{1}}\le \variantfreqperpop{1} $ (since $\rate{2}/\rate{1}\ge 1$). In turn,
    	$\variantfreqperpop{1}+\variantfreqperpop{2}^{\rate{2}/\rate{1}} \le 2 \variantfreqperpop{1}$ which implies $\left( \variantfreqperpop{1}+\variantfreqperpop{2}^{\rate{2}/\rate{1}}  \right)^{-\rate{1}} \ge \left( 2 \variantfreqperpop{1} \right)^{-\rate{1}}$. Similarly, $\variantfreqperpop{1} + \variantfreqperpop{2} \le 2 \variantfreqperpop{1}$ which implies $(\variantfreqperpop{1} + \variantfreqperpop{2})^{-\corr{1}-\corr{2}} \ge (2\variantfreqperpop{1})^{-\corr{1}-\corr{2}}$. The two observations together allow us to further bound the integral above:
	\begin{align*}
        \int_{(0,1]^2} \ratemeasproposed(\de\vecvariantfreq{}) \ge \int_D \ratemeasproposed(\de\vecvariantfreq{})&\ge \kappa \int_0^{\epsilon}\int_0^{\variantfreqperpop{1}} (2\variantfreqperpop{1})^{-\rate{1}-\corr{1}-\corr{2}}\variantfreqperpop{1}^{\corr{1}-1}\variantfreqperpop{2}^{\corr{2}-1} \de \variantfreqperpop{1} \de \variantfreqperpop{2} \\
        &= \kappa 2^{-\rate{1}-\corr{1}-\corr{2}} \int_0^{\epsilon} \variantfreqperpop{1}^{-\rate{1}-\corr{2}-1} \left[ \int_0^{\variantfreqperpop{1}} \variantfreqperpop{2}^{\corr{2}-1}\de \variantfreqperpop{2} \right] \de \variantfreqperpop{1} \\
        &=  \frac{\kappa 2^{-\rate{1}-\corr{1}-\corr{2}} }{\corr{2}} \int_0^{\epsilon}\variantfreqperpop{1}^{-\rate{1}-1} \de \variantfreqperpop{1} \\
        &= + \infty,
\end{align*}
where the last equality holds whenever $\rate{1} \ge 0$.

In the case in which $\rate{1} > \rate{2}$, when $0 \le \variantfreqperpop{1}\le\variantfreqperpop{2}\le1$ it is also the case that $\variantfreqperpop{2}^{\rate{2}/\rate{1}}\ge \variantfreqperpop{1} $ (since $\rate{2}/\rate{1}\le 1$). In turn,
$\variantfreqperpop{1}+\variantfreqperpop{2}^{\rate{2}/\rate{1}} \le 2 \variantfreqperpop{2}^{\rate{2}/\rate{1}}$ which implies $\left( \variantfreqperpop{1}+\variantfreqperpop{2}^{\rate{2}/\rate{1}}  \right)^{-\rate{1}} \ge \left( 2 \variantfreqperpop{2}^{\rate{2}/\rate{1}} \right)^{-\rate{1}}$. Similarly, $\variantfreqperpop{1} + \variantfreqperpop{2} \le 2 \variantfreqperpop{2}$ which implies $(\variantfreqperpop{1} + \variantfreqperpop{2})^{-\corr{1}-\corr{2}} \ge (2\variantfreqperpop{2})^{-\corr{1}-\corr{2}}$. 

The two observations together allow us to further bound the integral above:
	\begin{align*}
        \int_{(0,1]^2} \ratemeasproposed(\de\vecvariantfreq{}) \ge \int_D \ratemeasproposed(\de\vecvariantfreq{}) &\ge \kappa \int_0^{\epsilon}\int_0^{\variantfreqperpop{2}} (2\variantfreqperpop{2}^{\rate{2}/\rate{1}})^{-\rate{1}}(2\variantfreqperpop{2})^{-\corr{1}-\corr{2}}\variantfreqperpop{2}^{\corr{2}-1}\variantfreqperpop{1}^{\corr{1}-1} \de \variantfreqperpop{1} \de \variantfreqperpop{2} \\
        &= \kappa 2^{-\rate{1}-\corr{1}-\corr{2}} \int_0^{\epsilon} \variantfreqperpop{2}^{-\rate{2}-\corr{1}-1} \left[ \int_0^{\variantfreqperpop{2}} \variantfreqperpop{1}^{\corr{1}-1}\de \variantfreqperpop{1} \right] \de \variantfreqperpop{2} \\
        &=  \frac{\kappa 2^{-\rate{1}-\corr{1}-\corr{2}} }{\corr{1}} \int_0^{\epsilon}\variantfreqperpop{2}^{-\rate{2}-1} \de \variantfreqperpop{2} \\
        &= + \infty,
\end{align*}
where the last equality holds whenever $\rate{2} \ge 0$.

This concludes the proof of integral requirement (ii), in turn yielding the thesis.
\end{proof}

\subsection{Another model defines proper rate measure satisfies \Cref{req:finite_mean} and~\Cref{req:inf_mass}}

We emphasize that the prior distribution we proposed in \Cref{eq:proposed_prior} is not the \emph{only} prior that satisfies \Cref{req:finite_mean} and~\Cref{req:inf_mass}. Next, we provide another example of a prior satisfying such requirements.

\begin{myExample}[Hierarchical-3BP \citep{Masoero2021}]
    A hierarchical Bayesian nonparametric multi-population model satisfying \Cref{req:finite_mean} and~\Cref{req:inf_mass} was proposed by \cite{Masoero2021}.
     The model consists of two layers: first, a latent common ``super-population'' of frequencies pairs drawn from a Poisson Point Process (PPP) with L\'{e}vy measure given by the three parameter beta process (3BP):
	\[
    		\nu_{\mathrm{3BP}}(\de\theta) = \mass \frac{\Gamma(1+\conc{})}{\Gamma(1-\rate{})\Gamma(\conc{}+\rate{})} \theta^{-1-\rate{}}(1-\theta)^{\conc{}+\rate{}-1}1(\theta \in [0,1]) \de \theta.
	\] 
	Then, conditionally on the latent common frequencies, population specific frequencies are drawn using a fixed parametric form: 
    \begin{align*}
        \{\theta_{0,1},\theta_{0,2},\ldots,\}&\sim \mathrm{PPP}(\nu_{\mathrm{3BP}}(\de \theta))\\
        \theta_{1,k} \mid \theta_{0,k} \sim \Beta(\theta_1 \mid a_1\theta_{0,k} , b_1(1-\theta_{0,k}))\quad&\text{and} \quad
        \theta_{2,k} \mid \theta_{0,k}  \sim \Beta(\theta_2 \mid a_2\theta_{0,k} , b_2(1-\theta_{0,k})).
    \end{align*}

    Using the theory of marked Poisson processes, the population-specific rates are themselves Poisson point processes with rate measure given by
    \begin{align*}
        \mu(\de\btheta)&=\int_{0}^1 \prod_{j=1}^2 \Beta(\theta_j|a_j s, b_j(1- s))\nu_{\mathrm{3BP}}(\de s) \de \theta_1 \de \theta_2.
    \end{align*}
        The integral requirements (i), (ii) of \cref{eq:finite_first_moment,eq:infinite_mass} can be shown analytically.
    For (i), we here show finiteness of the integral in the first population. The argument for the second population is symmetric and omitted.
    \begin{align*}
        \iint \variantfreqperpop{1}\mu(\de\btheta)&=\iint\variantfreqperpop{1}\int_{s=0}^{s=1} \prod_{j=1}^2 \Beta(\theta_j|a_j s, b_j(1- s))\nu_{\mathrm{3BP}}(\de s)\de\btheta\\
        &=\int_{s=0}^{s=1} \iint \left\{ \variantfreqperpop{1}\prod_{j=1}^2 \Beta(\theta_j|a_j s, b_j(1-s))\de\btheta \right\}\nu_{\mathrm{3BP}}(\de s)\\
        &=\int_{0}^1\int\variantfreqperpop{1} \Beta(\variantfreqperpop{1}|a_1 s, b_1(1-s))\de\variantfreqperpop{1}\nu_{\mathrm{3BP}}(\de s)\\
        &=\int_0^1 \frac{a_1 s}{b_1+(a_1-b_1)s} \nu_{\mathrm{3BP}}(\de s) \\
        & < \frac{\mass}{\Gamma(1-\rate{})\Gamma(\conc{}+\rate{})\min\{a_1, b_1\}} \int s^{(1-\rate{})-1}(1-s)^{\conc{}+\rate{}-1} \de s < \infty.
    \end{align*}
    For (ii),
    \begin{align*}
        \iint \mu(\de\btheta)&=\iint \int_{s=0}^{s=1} \prod_{j=1}^2 \Beta(\theta_j|a_j s, b_j(1- s))\nu_{\mathrm{3BP}}(\de s)\de\btheta\\
        &=\int_{s=0}^{s=1} \iint \left\{ \prod_{j=1}^2 \Beta(\theta_j|a_j s, b_j(1-s))\de\btheta \right\}\nu_{\mathrm{3BP}}(\de s)\\
        &=\int_{0}^1 \nu_{\mathrm{3BP}}(\de s) = \infty.
    \end{align*}    

\end{myExample}

\section{Conjugacy}
\label{app:conjugacy}

We now show that our proposed prior introduced in \Cref{eq:proposed_prior} when paired with the two-population Bernoulli process model yields a posterior that is conjugate.
\begin{myLemma}
	The multipopulation Bayesian nonparametric hierarchical model obtained by pairing the variants prior rate measure $\ratemeasproposed$ (of \Cref{eq:proposed_prior}) with independent Bernoulli process likelihoods within each subpopulation (i.e., $X_{\popidx, \unitidx, \variantidx} \mid \variantfreq{\popidx}{\variantidx}\sim \BER(\variantfreq{\popidx}{\variantidx})$, independently across $\popidx$ and $\variantidx$, and i.i.d.\ across $\unitidx$ for the same $\popidx, \variantidx$), yields a conjugate posterior distribution.
\end{myLemma}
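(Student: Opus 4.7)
The plan is to apply the conjugacy framework of \citet{Broderick2018} for Poisson-point-process-driven priors paired with Bernoulli-process likelihoods. In that framework, conjugacy follows from two checks: (i) the rate measure governing \emph{unobserved} variants must remain in the same parametric family after conditioning on $\pilotdata$, and (ii) the conditional distribution at each \emph{observed} variant atom must also remain in the same parametric family. I would carry out each check by direct algebraic substitution into the definition of $\ratemeasproposed$ in \cref{eq:proposed_prior}.

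First, I would recall from \citet{Broderick2018} (specialized to our product Bernoulli likelihood across the two populations) that the posterior of $\randommeas$ given $\pilotdata$ decomposes as (a) an independent $\PPP$ over the unseen variants with rate measure $(1-\variantfreqperpop{1})^{\samplesize{1}}(1-\variantfreqperpop{2})^{\samplesize{2}}\,\ratemeasproposed(\de\vecvariantfreq{})$, and (b) for each of the $\genericnumvariants$ observed variants with per-population counts $k_{1,\variantidx}$ and $k_{2,\variantidx}$, an independent fixed atom whose frequency vector has density proportional to $\prod_{\popidx=1}^{2} \variantfreqperpop{\popidx}^{k_{\popidx,\variantidx}}(1-\variantfreqperpop{\popidx})^{\samplesize{\popidx}-k_{\popidx,\variantidx}} \, \ratemeasproposed(\de\vecvariantfreq{})$. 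For (a) I would substitute and observe that multiplying $\ratemeasproposed$ by $\prod_{\popidx}(1-\variantfreqperpop{\popidx})^{\samplesize{\popidx}}$ only modifies the $(1-\variantfreqperpop{\popidx})^{\conc{\popidx}-1}$ factors, yielding a measure of the same form as \cref{eq:proposed_prior} with updated hyperparameters $\hat{\allhyper} = (\hat\mass, \rate{1}, \rate{2}, \corr{1}, \corr{2}, \conc{1}+\samplesize{1}, \conc{2}+\samplesize{2})$, where $\hat\mass$ is a closed-form ratio of beta functions absorbing the change in normalizing constants. For (b) I would similarly check that the additional factor $\prod_{\popidx}\variantfreqperpop{\popidx}^{k_{\popidx,\variantidx}}(1-\variantfreqperpop{\popidx})^{\samplesize{\popidx}-k_{\popidx,\variantidx}}$ combines with the Beta kernels in $\ratemeasproposed$ to give a density of the same form with updated hyperparameters $(\corr{\popidx}+k_{\popidx,\variantidx}, \conc{\popidx}+\samplesize{\popidx}-k_{\popidx,\variantidx})$ in each population $\popidx$.

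The observation that makes the whole argument go through is that the Bernoulli likelihood factorizes across populations and therefore only interacts with the population-wise Beta factors $\variantfreqperpop{\popidx}^{\corr{\popidx}-1}(1-\variantfreqperpop{\popidx})^{\conc{\popidx}-1}$ appearing in $\ratemeasproposed$; the non-factorized correlation kernel $(\variantfreqperpop{1}+\variantfreqperpop{2}^{\rate{2}/\rate{1}})^{-\rate{1}}/(\variantfreqperpop{1}+\variantfreqperpop{2})^{\corr{1}+\corr{2}}$ is carried through unchanged in both the ordinary and the fixed parts of the posterior, which is precisely why the hyperparameters $\mass, \rate{1}, \rate{2}, \corr{1}, \corr{2}$ either stay fixed or update only through a normalizer. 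I therefore expect the main difficulty to be essentially bookkeeping---tracking how the normalizing constants $B(\corr{\popidx},\conc{\popidx})$ update---together with verifying that the updated hyperparameters $\conc{\popidx}+\samplesize{\popidx}$ and $\corr{\popidx}+k_{\popidx,\variantidx}$ remain in the valid parameter range, which is immediate since both quantities stay strictly positive.
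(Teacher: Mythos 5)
Your proposal is correct and follows essentially the same route as the paper's proof: both multiply $\ratemeasproposed$ by the factorized Bernoulli likelihoods, observe that the non-factorized correlation kernel passes through unchanged while the per-population Beta kernels update to $(\corr{\popidx}+k_{\popidx,\variantidx},\,\conc{\popidx}+\samplesize{\popidx}-k_{\popidx,\variantidx})$, and treat observed and unobserved variants separately. The only detail you gloss over is normalizability of the fixed-atom densities, which the paper handles by noting that an observed variant has some $x_{\popidx,\unitidx,\variantidx}=1$, so the posterior density is dominated by $\variantfreqperpop{\popidx}\ratemeasproposed(\de\vecvariantfreq{})$, which is integrable by Desideratum A.
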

\begin{proof}
    Recall that our proposed prior for variants frequencies $\ratemeasproposed$ was defined in \Cref{eq:proposed_levy}.
    Our proof technique relies on the fact that, if we condition on an observed sample of $\samplesize{1}$ units from population $1$ and $\samplesize{2}$ units from population $2$, the \emph{posterior} distribution of the variants frequencies in our Bayesian hierarchical model can be obtained by thinning the Poisson process underlying the model. I.e., for variants $\psi_{\variantidx}$, $\ell=1,2,\ldots$, the corresponding variants' frequency (posterior) law $f_{\variantidx}(\vecvariantfreq{})$ can be obtained by thinning the rate measure $\ratemeasproposed$  by the probability of the realized sample induced by the likelihood component of our model:
    \begin{align*}
        f_{\variantidx}(\vecvariantfreq{})&\propto\ratemeasproposed(\de\vecvariantfreq{})\cdot \prod_{\popidx=1,2} \prod_{\unitidx=1}^{\samplesize{\popidx}} \BER(x_{\popidx, \unitidx, \variantidx} \mid \variantfreqperpop{\popidx})\\
        &\propto \frac{(\variantfreqperpop{1}+\variantfreqperpop{2}^{\rate{2}/\rate{1}})^{-\rate{1}}}{(\variantfreqperpop{1}+\variantfreqperpop{2})^{\corr{1}+\corr{2}}}\variantfreqperpop{1}^{\corr{1}+\sum_{\unitidx=1}^{\samplesize{1}} x_{1,\variantidx, \unitidx}-1}\variantfreqperpop{2}^{\corr{2} +\sum_{\unitidx=1}^{\samplesize{2}} x_{2,\variantidx, \unitidx} -1}\\
        &  \qquad \qquad \qquad \qquad  \times(1-\variantfreqperpop{1})^{\conc{1}+\sum_{\unitidx=1}^{\samplesize{1}} (1-x_{1,\variantidx, \unitidx})-1}(1-\variantfreqperpop{2})^{\conc{2}+\sum_{\unitidx=1}^{\samplesize{2}} (1-x_{2,\variantidx, \unitidx})-1}.
    \end{align*}

    This distribution is within the same family as the one induced by the prior $\ratemeasproposed$ in the sense that the two share the same beta form, up to the same multiplicative leading term $\frac{(\variantfreqperpop{1}+\variantfreqperpop{2}^{\rate{2}/\rate{1}})^{-\rate{1}}}{(\variantfreqperpop{1}+\variantfreqperpop{2})^{\corr{1}+\corr{2}}}$. 
    Notice that for a given $\ell$, it is either the case that $\psi_\variantidx$ has been observed in at least one sample, or that is has not yet been observed. In the former case ($\psi_{\variantidx}$ observed), there exists at least one pair $(\popidx, \unitidx)$ for which $x_{\popidx, \unitidx, \variantidx}=1$. This implies that $f_{\variantidx}(\vecvariantfreq{})\le \variantfreqperpop{\popidx}\ratemeasproposed(\de\vecvariantfreq{})$. Since by assumption we have $\int \variantfreqperpop{\popidx}\ratemeasproposed(\de\vecvariantfreq{}) <\infty$, this further implies that the density above can be normalized.
    Viceversa, in the latter case (variant not yet observed), the distribution above simplifies to
    \begin{align*}
        f_{\variantidx}(\vecvariantfreq{})&\propto\ratemeasproposed(\de\vecvariantfreq{}) \prod_{\popidx=1,2} \prod_{\unitidx=1}^{\samplesize{\popidx}} \BER(0 \mid \variantfreqperpop{\popidx})\\
        &=\frac{\mass}{B(\corr{1}, \conc{1})B(\corr{2}, \conc{2})}
        \frac{(\variantfreqperpop{1}+\variantfreqperpop{2}^{\rate{2}/\rate{1}})^{-\rate{1}}}{(\variantfreqperpop{1}+\variantfreqperpop{2})^{\corr{1}+\corr{2}}}
        \variantfreqperpop{1}^{\corr{1}-1}\variantfreqperpop{2}^{\corr{2}-1}(1-\variantfreqperpop{1})^{\conc{1}+\samplesize{1}-1}(1-\variantfreqperpop{2})^{\conc{2}+\samplesize{2}-1} \de\btheta,
    \end{align*}
    which is again within the same family of our PPP prior, satisfying~\cref{req:finite_mean} and~\cref{req:inf_mass}. 
\end{proof}

\section{Competing Bayesian nonparametric baselines for prediction in multi-population settings}
\label{app:method_for_prediction}

In this section we introduce in detail the two alternative Bayesian nonparametric baselines used in our experiments of \Cref{sec:simulation,sec:real_data} to benchmark our novel proposed approach when forming predictions for future variant counts in the presence of multiple populations.

\subsection{Dependent three parameter beta processes [d3BP]} \label{sec:d3BP}


The first competing model we consider is one where we ignore the population label, and ``pool'' our observations as if they were being generated from the same population. We call this (fully) dependent model because we can view it as sample one population's variant frequency and the other fully dependent on it. We use a three parameter beta process prior to model the variants' frequencies, and again assume that samples are drawn from Bernoulli processes conditionally given the latent variant frequencies. Formally, for $P_0$ an arbitrary diffuse measure on the space of variants' labels $\Psi$,
\begin{align}
\begin{split}
	\Theta &\sim \PPP(\nu_{\mathrm{3BP}}(\de \theta) \times P_0(\de \psi)) \\
	\genericmeasurevariantcount \mid\Theta &\sim \BeP(\randommeas). 
\end{split} \label{eq:d3BP}
\end{align}

where 
\begin{align*}
    \nu_{\mathrm{3BP}}(\de \theta)=\alpha \frac{\Gamma(1+c)}{\Gamma(1-\sigma)\Gamma(c+\sigma)}\theta^{-1-\sigma}(1-\theta)^{c+\sigma-1}1(\theta\in (0,1))\de\theta,
\end{align*}

with $X_{p,n}$ i.i.d.\ across $p$ and $n$.
This is equivalent to adopting the approach of \citet{Masoero2021} after discarding information about the individuals' population. 
Recall that, given a tuple of integers $\bm{k} = [k_1, k_2]^\top$, we call a ``$\bm{k}$-ton'' any variant appearing exactly $k_\popidx$ times in population $\popidx$.

\begin{myProposition}[Predicting the number of future $\bm{k}$-tons under the d3BP]
	Assume to have collected $N_1$ samples $X_{1, 1:N_1}$ from population $1$, and $N_2$ samples $X_{2, 1:N_2}$ from population $2$ under the data generating process of \Cref{eq:d3BP}. Then the number of new variants that have not been observed in any of the first $N_1+N_2$ samples and appear exactly $\bm{k}$ times in additional $\bm{M} = [M_1, M_2]^\top$ samples collected from population $1$ and $2$ respectively is given by:
    \label{prop:kr_ton_diag}
    \begin{align*}
        \news{\bm{N}}{\bm{M}, \bm{k}} \mid X_{1,1:N_1},X_{2,1:N_2} \sim \Poisson\left(\lambda^{(\bm{M},\bm{k})}_{\mathrm{d3BP},\bm{N}}\right),
    \end{align*} 
    where, for $(a)_{b\uparrow}:=\Gamma(a+b)/\Gamma(a+1)$ the rising factorial,
    \begin{align*}
        \lambda^{(\bm{M},\bm{k})}_{\mathrm{d3BP},\bm{N}}=&\alpha \binom{M_1}{k_1}\binom{M_2}{k_2} \frac{(c+\sigma)_{(N_1+N_2+M_1+M_2-k_1-k_2)\uparrow }(1-\sigma)_{(k_1+k_2-1)\uparrow}}{(c+1)_{(N_1+N_2+M_1+M_2-1)\uparrow }}.
            \end{align*}
\end{myProposition}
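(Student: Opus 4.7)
The plan is to exploit the fact that the d3BP treats the observations as exchangeable across the combined $N_1+N_2$ pilot samples and the subsequent $M_1+M_2$ follow-up samples, so the model distinguishes populations only through the counting statistic we want to predict. Concretely, I would work directly with the Poisson point process representation of the prior on variant frequencies $\Theta \sim \PPP(\nu_{\mathrm{3BP}})$, and condition on having observed $N_1+N_2$ pilot samples $X_{1,1:N_1}, X_{2,1:N_2}$.

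First I would restrict attention to the unseen atoms of $\Theta$: these are the variants with count zero across the pilot. Standard Poisson process thinning (and the conjugacy result cited in Appendix \ref{app:conjugacy} applied in the one-population case) tells us that the unseen variants form a Poisson point process with posterior rate measure
\begin{equation*}
\nu^{\mathrm{post}}_{\mathrm{3BP}}(\de\theta) \;=\; \nu_{\mathrm{3BP}}(\de\theta)\,(1-\theta)^{N_1+N_2}.
\end{equation*}
Next, for each unseen atom with frequency $\theta$, the follow-up samples give independent Bernoulli draws, and the event ``this atom is picked up exactly $k_1$ times in the $M_1$ new samples from population $1$ \emph{and} exactly $k_2$ times in the $M_2$ new samples from population $2$'' has probability
\begin{equation*}
\binom{M_1}{k_1}\binom{M_2}{k_2}\,\theta^{k_1+k_2}\,(1-\theta)^{M_1+M_2-k_1-k_2},
\end{equation*}
by independence of the Bernoulli draws within and across the two follow-up subsamples. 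A second round of Poisson thinning (or the Mapping/Marking Theorem) then yields that $\news{\vecsamplesize}{\vecfuturesamplesize,\vecoccurrence}$ is Poisson with rate equal to the integral of this per-atom probability against $\nu^{\mathrm{post}}_{\mathrm{3BP}}$.

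The remaining step is a direct computation:
\begin{equation*}
\lambda^{(\bm{M},\bm{k})}_{\mathrm{d3BP},\bm{N}}
= \binom{M_1}{k_1}\binom{M_2}{k_2}\!\int_0^1 \theta^{k_1+k_2}(1-\theta)^{N_1+N_2+M_1+M_2-k_1-k_2}\,\nu_{\mathrm{3BP}}(\de\theta).
\end{equation*}
Substituting the 3BP density, the integrand collapses to the kernel of a $\Beta(k_1+k_2-\sigma,\,c+\sigma+N_1+N_2+M_1+M_2-k_1-k_2)$ distribution (here one uses $\sigma\in[0,1)$, $k_1+k_2\ge 1$, and $c>-\sigma$ so that the integral converges), yielding a beta function that I would then rewrite as a ratio of Gamma functions. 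Simplifying using $\Gamma(c+1)/[\Gamma(1-\sigma)\Gamma(c+\sigma)]$ from the 3BP normalizer and expressing the Gamma ratios in terms of the rising factorials $(a)_{b\uparrow}$ produces exactly the stated formula.

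I do not expect a major obstacle here: the main subtlety is just verifying that the two-population counting event factorizes conditionally on $\theta$ under d3BP (which it does because d3BP pools populations and treats all $M_1+M_2$ follow-up draws as conditionally i.i.d.\ Bernoulli$(\theta)$), and keeping track of which Gamma factors go into the rising-factorial bookkeeping. Convergence of the beta integral near $\theta=0$ requires $k_1+k_2\ge 1$, which is implicit since we are counting non-trivial $\vecoccurrence$-tons; convergence near $\theta=1$ follows from $c+\sigma>0$.
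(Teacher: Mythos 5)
Your proposal is correct and follows essentially the same route as the paper: both condition on the pilot by thinning the 3BP Poisson point process with the zero-count Bernoulli probabilities, thin again by the probability of appearing exactly $k_1$ and $k_2$ times in the $M_1+M_2$ conditionally i.i.d.\ follow-up draws, and identify the Poisson rate as the integral of the resulting measure, which reduces to a beta integral (the paper outsources this last computation to \citet[Proposition 4]{Masoero2021} while you carry it out explicitly). Your explicit remarks on convergence ($k_1+k_2\ge 1$ near $\theta=0$, $c+\sigma>0$ near $\theta=1$) are a small but welcome addition the paper leaves implicit.
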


The proof largely follows \citet{Masoero2021}, by observing that the variant frequencies not seen being a thinned Poisson point process. 
\begin{proof}
    Under this model, variants' frequencies follow a 3BP with rate measure $\nu_{\mathrm{3BP}}(\de\theta)$. Then, the  frequencies of those variants that (i) have not been seen in the first $N_1+N_2$ samples $X_{1,1:N_1}$ and $X_{2,1:N_2}$ and (ii) will be seen exactly $\bm{k}$ times in the subsequent $\bm{M}$ samples, come from a thinned Poisson point process with rate measure $\nu_{\mathrm{3BP}}(\de\theta)\Bernoulli(0|\theta)^{\futuresamplesize{1}}\Bernoulli(0|\theta)^{\futuresamplesize{2}}$ and is independent of the collection of frequencies that did appear in the first $N_1,N_2$ samples, 
        \begin{align}
        \nu_{\mathrm{3BP}}(\de\theta)&\Bernoulli(0 \mid \theta)^{\samplesize{1}}\Bernoulli(0 \mid \theta)^{\samplesize{2}} 
        \nonumber\\
        &
        \binom{\futuresamplesize{1}}{k_1}\Bernoulli(1 \mid \theta)^{k_1}\Bernoulli(0 \mid \theta)^{\futuresamplesize{1}-k_1}
        \nonumber\\
        & 
        \binom{\futuresamplesize{2}}{k_2}\Bernoulli(1 \mid \theta)^{k_2}\Bernoulli(0 \mid \theta)^{\futuresamplesize{2}-k_2}\nonumber\\
        =&\binom{\futuresamplesize{1}}{k_1}\binom{\futuresamplesize{2}}{k_2}\nu(d\theta)\Bernoulli(1 \mid \theta)^{k_1+k_2}\Bernoulli(0 \mid \theta)^{\futuresamplesize{1}+\futuresamplesize{2}+\samplesize{1}+\samplesize{2}-k_1-k_2}.
        \end{align}
    By the properties of Poisson point processes, the number of $\bm{k}$-tons follows a Poisson distribution with expectation being the integral of the thinned rate measure. Following \citet[Proposition 4]{Masoero2021}, the integral is given by
    \begin{align*}
        \lambda^{(\bm{M}, \bm{k})}_{\mathrm{d3BP}, \bm{N}}=&\alpha \binom{\samplesize{1}}{k_1}\binom{\samplesize{2}}{k_2}\frac{(c+\sigma)_{(\futuresamplesize{1}+\futuresamplesize{2}+\samplesize{1}+\samplesize{2}-k_1-k_2)\uparrow }(1-\sigma)_{(k_1+k_2-1)\uparrow}}{(c+1)_{(\futuresamplesize{1}+\futuresamplesize{2}+\samplesize{1}+\samplesize{2}-1)\uparrow }}.
    \end{align*}
\end{proof}

We can compute the distribution of the total number of future variants by summing across $\bm{k}$-tons, or, equivalently, by leveraging the fact that the results for the pooled population coincide with those provided in \citet{Masoero2021}.

\subsection{i3BP} \label{sec:i3BP}

The alternative baseline method we consider is one in which in each of the two populations, variants frequencies follow a three parameter beta process, independent of the other population. In each population, variants occur according to independent Bernoulli processes.  In this case, the two populations are modeled as completely independent, and we therefore call this model the ``independent'' 3BP (i3BP):
\begin{align}
\begin{split}
	\Theta_\popidx &\overset{\text{ind}}{\sim} \PPP(\nu_{\mathrm{3BP}}(\de \theta; \mass_\popidx, c_\popidx, \sigma_\popidx) \times P_0(\de \psi)) \\
	X_{p,n}\mid\Theta &\sim \BeP(\Theta_\popidx).
\end{split} \label{eq:i3BP}
\end{align}
Here, we overload the notation $\nu_{\mathrm{3BP}}(\de \theta; \mass_\popidx, c_\popidx, \sigma_\popidx)$ to emphasize that in each population $\popidx$ the variants frequencies are driven by a different three-parameter beta process with its own hyperaprameters.
 Because of independency, and the fact that we use a diffuse measure for locations, the probability that the same variant is observed in both population under this model is zero. As a consequence, under this model the total number of \emph{shared} variants (past, present, future) is zero. Therefore, we can't use this model to make any prediction about shared variants, but we still could use it to predict the \emph{total} number of new future variants, by summing the predictions arising from each population and ignoring the fact that in the real data some of the past and future variants are shared. 


\begin{myProposition}[total number in i3BP]
    Assume to have collected $N_1$ samples $X_{1, 1:N_1}$ from population $1$, and $N_2$ samples $X_{2, 1:N_2}$ from population $2$ under the data generating process of the i3BP in \Cref{eq:i3BP}. The variant frequencies in population 1 and 2 are governed by two independent 3BP, the first with hyperparameters $\alpha_1, \sigma_1, c_1$ and the second with hyperparameters $\alpha_2, \sigma_2, c_2$. The total number of new variants that is going to be observed in $\bm{M} = [M_1, M_2]^\top$ additional samples follows a Poisson distribution with parameter $\lambda^{(\bm{M})}_{\text{i3BP},\bm{N}}$ where 
    \begin{align*}
        \lambda^{(\bm{M})}_{\mathrm{i3BP},\bm{N}} = \alpha_1 \sum_{k_1=1}^{\futuresamplesize{1}} \frac{(c_1+\sigma_1)_{(\samplesize{1}+k_1-1)\uparrow}}{(c_1+1)_{(\samplesize{1}+k_1-1)\uparrow}}+ \alpha_2 \sum_{k_2=1}^{\futuresamplesize{2}} \frac{(c_2+\sigma_2)_{(\samplesize{2}+k_2-1)\uparrow}}{(c_2+1)_{(\samplesize{2}+k_2-1)\uparrow}}.
    \end{align*}
\end{myProposition}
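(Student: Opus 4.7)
My plan is to exploit the independence of the two populations under the i3BP to reduce the claim to a single-population 3BP prediction, then use additivity of independent Poisson random variables.

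The first step would be to note that under \Cref{eq:i3BP} the two driving PPPs $\Theta_1$ and $\Theta_2$ are drawn independently, and since the label measure $P_0$ is diffuse, no atom location is shared between them almost surely. Because the observations $X_{1,\unitidx}$ depend only on $\Theta_1$ (and symmetrically $X_{2,\unitidx}$ only on $\Theta_2$), the number $U_1$ of new variants observed in the $\futuresamplesize{1}$ follow-up samples of population~1 is independent of the analogous count $U_2$ for population~2, and $\news{\vecsamplesize}{\vecfuturesamplesize} = U_1 + U_2$ almost surely. In particular, there are no ``shared'' new variants to worry about, which is precisely what makes the i3BP simpler than the model of \Cref{sec:our_model}.

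Next I would handle each $U_p$ via an incremental decomposition in the spirit of \Cref{coro:new_vars}: write $U_p = \sum_{k_p=1}^{\futuresamplesize{p}} V_{p,k_p}$, where $V_{p,k_p}$ is the number of variants that are absent from the first $\samplesize{p}+k_p-1$ samples of population~$p$ but present in the $(\samplesize{p}+k_p)$-th sample. Repeating the PPP thinning argument used in the proof of \Cref{prop:kr_ton_diag} but restricted to a single 3BP-driven population, and specializing to pilot size $\samplesize{p}+k_p-1$ with $(M,k)=(1,1)$, yields
\begin{align*}
V_{p,k_p} \;\sim\; \Poisson\!\left(\alpha_p\,\frac{(c_p+\sigma_p)_{(\samplesize{p}+k_p-1)\uparrow}}{(c_p+1)_{(\samplesize{p}+k_p-1)\uparrow}}\right).
\end{align*}
Crucially, the events ``new at increment $k_p$'' are pairwise disjoint in terms of which atoms of the PPP they involve (any variant appearing as new at step $k_p$ has been observed at step $k_p$, so cannot be new at any later step), so the $V_{p,k_p}$ are in fact obtained from disjoint thinnings of the underlying PPP and are therefore mutually independent Poissons. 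Hence $U_p$ is itself Poisson with rate equal to the sum of the means above.

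Finally, combining the two populations, $U_1+U_2$ is a sum of independent Poissons and is therefore Poisson with rate $\lambda^{(\vecfuturesamplesize)}_{\mathrm{i3BP},\vecsamplesize}$, matching the claimed expression. I do not anticipate any serious obstacle; the only bookkeeping items are (i) verifying independence of the incremental counts $V_{p,k_p}$ via the disjoint-thinning property of PPPs, and (ii) simplifying the $(M=1,k=1)$-ton Beta integral into the rising-factorial form used in \Cref{prop:kr_ton_diag}. If preferred, an equivalent route would instead sum the single-population $k$-ton rates (the single-population specialization of \Cref{prop:kr_ton_diag}) over $k=1,\ldots,\futuresamplesize{p}$ for each $p$, arriving at the same marginal Poisson distribution for $U_p$ after an elementary hypergeometric identity.
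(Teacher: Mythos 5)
Your proposal is correct and follows essentially the same route as the paper: exploit independence of the two populations (guaranteed by the independent PPP draws and the diffuse label measure), obtain the single-population Poisson law with the rising-factorial rate for each population, and conclude by additivity of independent Poisson random variables. The only difference is cosmetic -- the paper simply cites Proposition~1 of \citet{Masoero2021} for the single-population count, whereas you re-derive it via the incremental disjoint-thinning decomposition (the same device the paper uses in \cref{sec:proof_total_number}), which is a valid unpacking of that cited result.
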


\begin{proof}
    The result is a trivial application of \citet[Proposition 1]{Masoero2021}. Indeed, by \citet[Proposition 1]{Masoero2021},  the number of new variants in population $\popidx$ is  Poisson distributed with mean $\alpha_\popidx \sum_{k=1}^{\futuresamplesize{\popidx}} \frac{(c_\popidx+\sigma_\popidx)_{(\samplesize{\popidx}+k-1)\uparrow}}{(c_\popidx+1)_{(\samplesize{\popidx}+k-1)\uparrow}}$. Moreover, because the two populations are independent and the sum of two independent Poisson random variables is a Poisson random variable with parameter given by the sum of the two parameters, the thesis follows.
    \end{proof}


\section{Proofs for the posterior predictive quantities in the two-population proposed model}
\label{app:detail_on_prediction}

In this section we provide the proofs for the results stated in \Cref{sec:prediction_fixed}.

\subsection{Proof of Proposition \ref{prop:predict_kr_tons}}
\label{sec:kton_proof}

\begin{proof}
    We assume to have collected $\vecsamplesize= [\samplesize{1}, \samplesize{2}]^\top$ samples $X_{1,1:\samplesize{1}}, X_{2, 1:\samplesize{2}}$ from the model of \Cref{eq:generative_model} where $\ratemeasproposed$ is  the rate measure defined in \Cref{eq:proposed_levy}. It follows from the same method used in \citet[Proposition 2]{Masoero2022} on thinning of a Poisson point process, that, given the data $X_{1,1:\samplesize{1}}, X_{2,1:\samplesize{2}}$, the  frequencies corresponding to variants that (i) have not yet been observed in any of the first $\samplesize{1}+\samplesize{2}$ and are going to be observed exactly $\bm{k} = [k_1, k_2]^\top$ times in each population when additional $\vecfuturesamplesize = [\futuresamplesize{1}, \futuresamplesize{2}]^\top$ samples are collected, with $0 \le k_{\popidx} \le \futuresamplesize{\popidx}$ for $\popidx = 1, 2$ and $k_1+k_2>0$, follows a thinned Poisson point process with rate measure:
\begin{equation}
    \begin{aligned}
    \ratemeasproposed(\de\boldsymbol{\theta})&\Bernoulli(0 \mid \variantfreqperpop{1})^{\samplesize{1}}\Bernoulli(0 \mid \variantfreqperpop{2})^{\samplesize{2}}\\
    &\binom{\futuresamplesize{1}}{\occurrence{1}}\Bernoulli(1 \mid \variantfreqperpop{1})^{\occurrence{1}}\Bernoulli(0 \mid \variantfreqperpop{1})^{\futuresamplesize{1}-\occurrence{1}}\\
    & \binom{\futuresamplesize{2}}{\occurrence{2}}\Bernoulli(1 \mid \variantfreqperpop{2})^{\occurrence{2}}\Bernoulli(0 \mid \variantfreqperpop{2})^{\futuresamplesize{2}-\occurrence{2}}\\
    =&\mass\binom{\futuresamplesize{1}}{\occurrence{1}}\binom{\futuresamplesize{2}}{\occurrence{2}}\frac{(\variantfreqperpop{1}+\variantfreqperpop{2}^{\rate{2}/\rate{1}})^{-\rate{1}}}{(\variantfreqperpop{1}+\variantfreqperpop{2})^{\corr{1}+\corr{2}}}\\ 
    &\times\variantfreqperpop{2}^{\corr{2}+\occurrence{2}-1}(1-\variantfreqperpop{2})^{\conc{2}+\futuresamplesize{2}+\samplesize{2}-\occurrence{2}-1}\variantfreqperpop{1}^{\corr{1}+\occurrence{1}-1}(1-\variantfreqperpop{1})^{\conc{1}+\samplesize{1}-\occurrence{1}-1}\\
    &\times\frac{1}{B(\corr{1},\conc{1})B(\corr{2},\conc{2})} \label{eq:rate_news_two_pop}
    \end{aligned}
\end{equation}
 Thus thus number of these variants, exist exactly $\bm{k}$ times in $\vecfuturesamplesize$ new samples, $\news{\vecsamplesize}{\vecfuturesamplesize, \bm{k}}$, is a Poisson distributed random variable whose parameter is the integral of the rate measure in \Cref{eq:rate_news_two_pop}:
\begin{align}
    \begin{split}
      \genericfreqnewsparam =&\mass\binom{\futuresamplesize{1}}{\occurrence{1}}\binom{\futuresamplesize{2}}{\occurrence{2}}\frac{B(\corr{2}+\occurrence{2},\conc{2}+\futuresamplesize{2}+\samplesize{2}-\occurrence{2})B(\corr{1}+\occurrence{1},\conc{1}+\futuresamplesize{1}+\samplesize{1}-\occurrence{1})}{B(\corr{1},\conc{1})B(\corr{2},\conc{2})}\\
    &\times\E_{Z,W}\left[\frac{(Z+W^{\rate{2}/\rate{1}})^{-\rate{1}}}{(Z+W)^{\corr{1}+\corr{2}}}\right],
    \end{split} \label{eq:expected_ktons}
\end{align}
where $B(a,b)$ is the beta function, and $\E_{Z,W}$ is the expectation over independent beta distributed random variables $Z, W$:
\begin{align*}
    Z\sim \Beta(\corr{1}+\occurrence{1},\conc{1}+\futuresamplesize{1}+\samplesize{1}-\occurrence{1}), \quad 
    W\sim \Beta(\corr{2}+\occurrence{2},\conc{2}+\futuresamplesize{2}+\samplesize{2}-\occurrence{2}).
\end{align*}
\end{proof}

\subsection{Total number of variants}
\label{sec:proof_total_number}
Next, we give the proof of~\Cref{coro:new_vars}.

\begin{proof}
    The proof closely resembles the proof of \citet[Proposition 1]{Masoero2022}. The idea is to consider the corresponding frequencies of newly appeared variants in one sample that does not appear in previous samples. These frequencies follow a(thinned) Poisson point process. 
    Under \Cref{eq:generative_model} where $\nu$ is the rate measure defined in \Cref{eq:proposed_levy}, the frequencies $\vecvariantfreq{\variantidx}$ follow a Poisson point process with rate measure $\ratemeasproposed(\de\vecvariantfreq{})$. 
    Under independent two population Bernoulli process (where each sample is an independent Bernoulli trails), given an individual from population $\popidx$, variant $\variantidx$ appears with probability $\variantfreq{\variantidx}{\popidx}$ independently across all samples of population 1 and 2. The probability a variant did not appear in those samples is $\BER(0 \mid \variantfreq{\variantidx}{1})^{\samplesize{1}}\BER(0 \mid \variantfreq{\variantidx}{2})^{\samplesize{2}}$. Therefore, the collection of variants that did not appear in the first $\vecsamplesize=(\samplesize{1}, \samplesize{2})$ are characterized by a collection of frequencies whose distribution is given by a thinned multivariate Poisson point process with rate $\ratemeasproposed(\de\vecvariantfreq{})\BER(0 \mid \variantfreqperpop{1})^{\samplesize{1}}\BER(0 \mid \variantfreqperpop{2})^{\samplesize{2}}$ and are independent to the collection of frequencies that did appear in the first $\vecsamplesize$ samples.
    
    Within this collection of variant frequencies, there are some will appear in the next sample from population 1 with probability $\BER(1|\variantfreqperpop{1})$ the collection of these variant frequencies is inturn a furthering thinning. That is, the variant frequencies whose corresponding variants did not appear in the first $\vecsamplesize=(\samplesize{1}, \samplesize{2})$ samples but is going to appear if we were to collect an additional sample from population 1 samples is characterized by a thinned Poisson point process with rate measure 
    $$
    \ratemeasproposed(\de\vecvariantfreq{})\BER(0|\variantfreqperpop{1})^{\samplesize{1}}\BER(0|\variantfreqperpop{2})^{\samplesize{2}} \BER(1|\variantfreqperpop{1})
    $$
    and independent of the collection of frequencies that did not appear in the first $(\samplesize{1}, \samplesize{2})$ samples.

    Similarly if the first follow up sample is from population 2 the (thinned) rate measure is  
    $$
    \ratemeasproposed(\de\vecvariantfreq{})\BER(0|\variantfreqperpop{1})^{\samplesize{1}}\BER(0|\variantfreqperpop{2})^{\samplesize{2}} \BER(1|\variantfreqperpop{2})
    $$
    and is independent of the collection of frequencies that did not appear in the first $(\samplesize{1}, \samplesize{2})$ samples.

    Recursively, suppose we first sample follow up samples from population 1, for $\followupidx{1}\ge 1$, the collection of variant frequencies corresponding to variants that did not appear in the first $(\samplesize{1}+\followupidx{1}-1, \samplesize{2})$ samples but then did appear in the $\followupidx{1}$th follow-up samples of population 1 comes from a thinned Poisson point process with rate measure
    \begin{align*}
        \ratemeasproposed(\de\vecvariantfreq{})\BER(0|\variantfreqperpop{1})^{\samplesize{1}+\followupidx{1}-1}\BER(0|\variantfreqperpop{2})^{\samplesize{2}} \BER(1|\variantfreqperpop{1})
    \end{align*}

    The number of such variants, by property of Poisson point process, is a Poisson distribution with mean being the integral of the rate measure, and by~\Cref{eq:expected_ktons}, it is exactly $\newsparam{(\samplesize{1}+\followupidx{1}-1,\samplesize{2})}{[1,0],[1,0]}$. 
    
    Now suppose we finished all $\futuresamplesize{1}$ follow up of population 1 and start the follow up sample of population 2. For $\followupidx{2}\ge 1$, the collection of variant frequencies corresponding to variants that did not appear in the first $(\samplesize{1}+\futuresamplesize{1}, \samplesize{2}+\followupidx{2}-1)$ samples but then did appear in the $\followupidx{2}$th follow-up samples of population 2 comes from a thinned Poisson point process with rate measure
    \begin{align*}
        \ratemeasproposed(\de\vecvariantfreq{})\BER(0|\variantfreqperpop{1})^{\samplesize{1}+\futuresamplesize{1}}\BER(0|\variantfreqperpop{2})^{\samplesize{2}+\followupidx{2}-1} \BER(1|\variantfreqperpop{2})
    \end{align*}

    The number of such variants is a Poisson distribution with mean being the integral of the rate measure, and by~\Cref{eq:expected_ktons}, it is exactly $\newsparam{(\samplesize{1}+\futuresamplesize{1},\samplesize{2}+\followupidx{2}-1)}{[0,1],[0,1]}$. 

    Each of these Poisson point processes are independent. Since the total number of new variants $\genericnews$ are sum of new variants discovered at these $\futuresamplesize{1}+\futuresamplesize{2}$ samples, which are independently Poisson distributed, the total number of new variants are also Poisson distributed with mean  
    \begin{align*}
        \sum_{\followupidx{1}=1}^{\futuresamplesize{1}}\newsparam{(\samplesize{1}+\followupidx{1}-1,\samplesize{2})}{(1,0),(1,0)}+\sum_{\followupidx{2}=1}^{\futuresamplesize{2}} \newsparam{(\samplesize{1}+\futuresamplesize{1},\samplesize{2}+\followupidx{2}-1)}{(0,1),(0,1)}
    \end{align*}
    The recursive scheme of population 2 following population 1 is not the only method to get the distribution, one can also start with population 2 then population 1 or any other recursive scheme to get $\vecfuturesamplesize=(\futuresamplesize{1}, \futuresamplesize{2})$.
\end{proof}

\section{Details on simulating data}
\label{app:taking_sample}

\subsection{ Sampling from the proposed two population model via (truncated) Poisson point processes}
\label{app:sample_from_pois}

We here provide a practical algorithm to generate synthetic data from the proposed hierarchical model introduced in \Cref{eq:generative_model} where the L\'{e}vy rate measure is given in \Cref{eq:proposed_levy}.

At a high level, there exist two main strategies to generate data from nonparametric hierarchical models like the one in \Cref{eq:generative_model}. The first, is to devise an exact ``marginal'' representation, where the infinitude of latent atoms is integrated out and observations are generated by leveraging an iterative predictive ``marginal'' scheme (e.g., the Chinese restaurant process, the Indian buffet process and related ``urn-schemes''). The other approach is to rely on an approximate ``conditional'' representation, in which the infinitude of latent atoms is approximated via a truncated representation of the infinite measure \citep{campbell2019truncated}, and observations are generated conditionally i.i.d.\ on this finite representation.

In principle we can derive a marginal representation of the proposed process following \citet{Broderick2018}. However such marginal representation would require us to solve many numerical integrals for every sample to determine whether a pre existing variant should exist in a follow up sample. 
To avoid this computational overhead, we decide to adopt the approximate (truncated) approach: we draw a truncated, approximate Poisson point process and obtain a sample $X_{\popidx, \unitidx}$ by repeatedly sample Bernoulli trials using the sampled variant frequencies. 


A general approach to sample inhomogeneous Poisson point process with an arbitrary target rate is to find a ``proposal'' rate measure dominating the target, and then adopt rejection sampling \citep{saltzman2012simulating}. 
For our specific proposal, we truncated the domain to be larger than $10^{-10}$, variants with frequency less than this will have little probability to exist in 500 samples as in our simulation settings. 

To sample this truncated measure we chose a proposal process being piecewise constant that dominates our (truncated) rate. To do so, we span a (log-scaled) mesh using $\{10^{-10},10^{-9},\dots, 10^{0}\}$ and we sample points within each grid independently. In each grid $\mathcal{G}$, we sample a homogeneous Poisson point process with rate being the maximum rate in that grid ($\lambda_m=\max_{\vecvariantfreq{}\in \mathcal{G}} \ratemeasproposed(\vecvariantfreq{})$) and we accept a sample at $\vecvariantfreq{\variantidx}$ with probability $\ratemeasproposed(\vecvariantfreq{\variantidx})/\lambda_m$.  

These procedure will generate an approximated realization of $\vecvariantfreq{\variantidx}$, then we sample the Bernoulli process condition on these weights.

\subsection{d3BP}
To obtain a d3BP-Bernoulli process of size $\vecsamplesize = [\samplesize{1}, \samplesize{2}]^\top$ we first obtain $\samplesize{1} + \samplesize{2}$ samples from a 3BP-Bernoulli process using the Indian buffet representation \citep{Broderick2012,Masoero2021}. Then, we randomly split individuals into two groups of size $\samplesize{1}$ and $\samplesize{2}$.

\section{Additional experiments}
\label{app:additional_experiments}

In this section we provide additional experimental evidence of the predictive performance on synthetic and real genomic data of our newly proposed BNP model as well as a number of preexisting competing (single population) methods.
Specifically, we consider the d3BP, i3BP (as defined in \Cref{sec:d3BP,sec:i3BP}), as well as the Good-Toulmin (GT) estimator --- recently employed by \citet{chakraborty2019using} in the context of rare genomic variants discovery --- and the fourth order Jackknife estimator (J4) \citet{gravel2014predicting}. As well as linear programming \citep[lp,]{zou2016quantifying}, and the scaled process (SP) method by \citet{Camerlenghi2021}. In real data experiments we choose the best performing single population methods and apply it in two different ways namely assuming populations do not share variants or are in fact from a single population.

\subsection{Additional synthetic experiments}
\label{sec:misspecified}

Here we generate data from i3BP and d3BP respectively to test if our model has the ability to handle data generated from these misspecified models. For i3BP, we generate data using three parameter Indian buffet process, \citep{Broderick2012, Masoero2022}, assuming there is no shared variants. For d3BP, we generate variants from a single Indian buffet process then randomly split individuals as two populations. 


\paragraph{Data from i3BP}

For i3BP, we sample from parameter $(\alpha_1,c_1,\sigma_1,\alpha_2,c_2,\sigma_2)=(20,1,.6,20,1,.3)$, and $(40,1,.1,40,1,.3)$ where $(\alpha_\popidx, c_\popidx, \sigma_\popidx)$ are the corresponding 3BP's mass, concentration and discount parameters in population $\popidx$. 

\begin{figure}[htp]
    \centering
    \includegraphics[width = 0.9\textwidth]{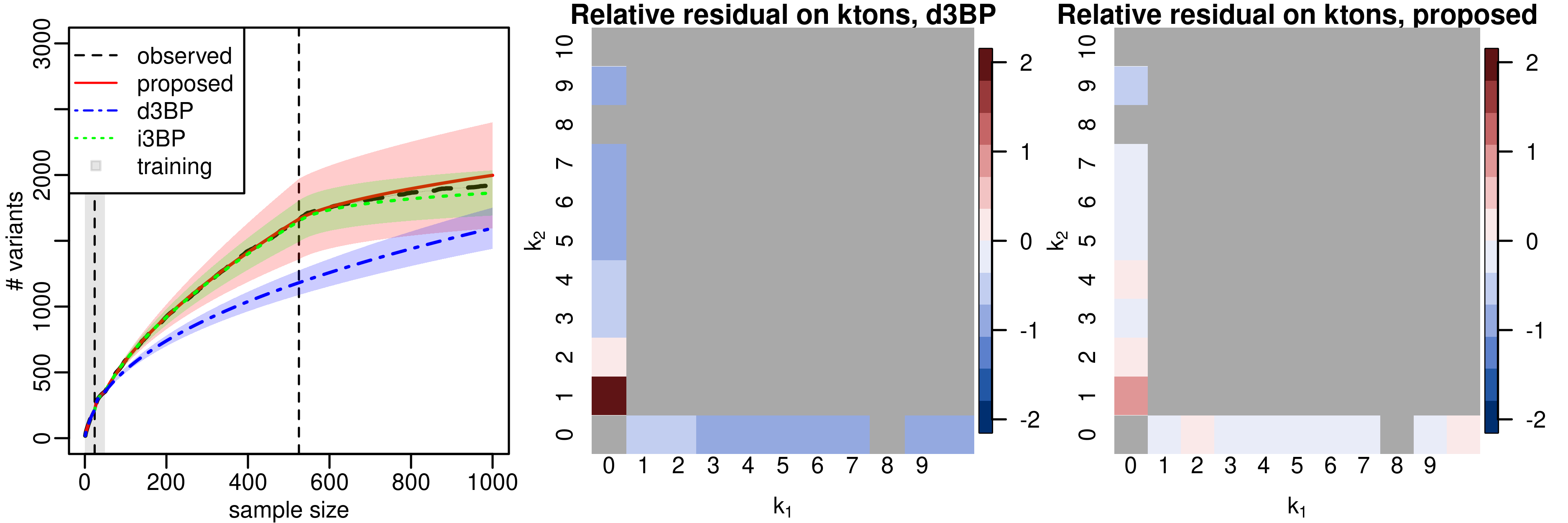}
    \includegraphics[width = 0.9\textwidth]{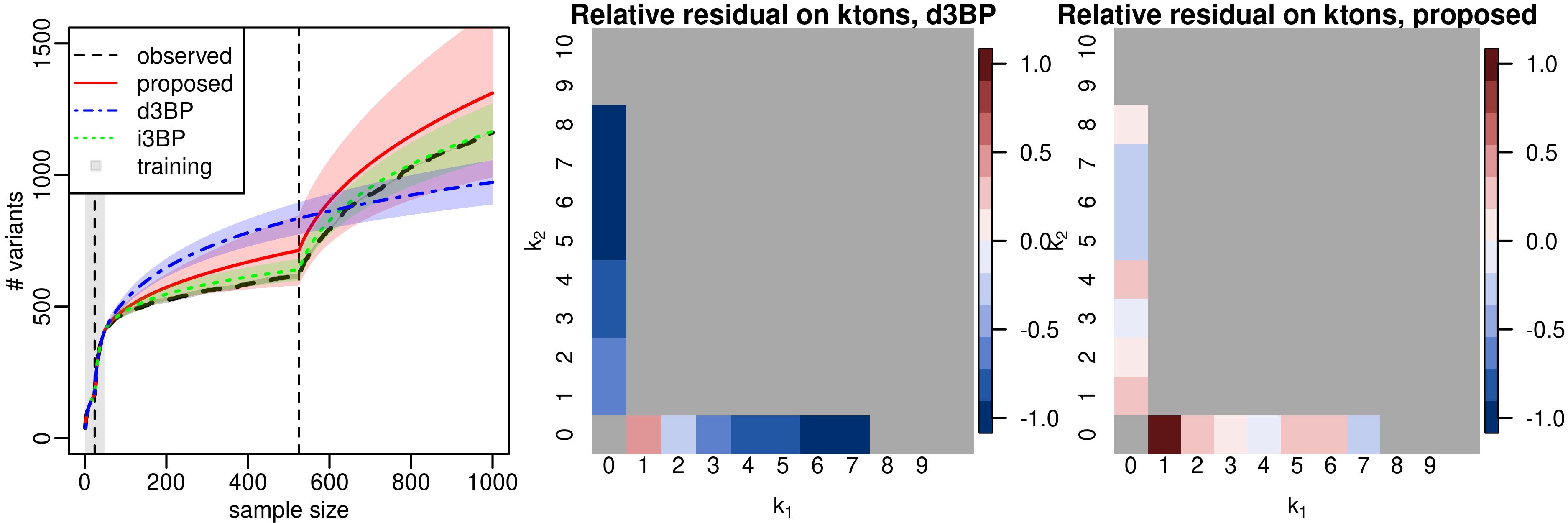}
    \caption{Prediction when data is sampled from i3BP. Each row represents a different simulated dataset, as described in \cref{sec:simulation}. \emph{Left}: The dashed black line shows the observed number of variants as a function of the sample number. The samples appear in the following order: the pilot from population 1, the pilot from population 2, the follow-up from population 1, the follow-up from population 2. A vertical dashed gray line separates the two populations in the pilot and follow-up, respectively. The gray shading covers the pilot data. All curves agree exactly on the pilot data. In the follow-up region, we plot mean (lines) and 1 standard deviation intervals (shaded regions) from three methods: our proposed method (solid red), the dependent version of the single-population 3BP (dash-dot blue), and the independent version of the 3BP (dotted green).  \emph{Center and Right}: For each $\vecoccurrence$ with component values up to 10, we plot the relative residual for predictions from our method (right) and the dependent 3BP (center). Note that the color scales are fixed across a row but vary across a column. A square is gray when the observed value is strictly less than 2 in at least one fold and black for $\vecoccurrence = (0,0)$.  
    }
    \label{fig:i3BP}
\end{figure}

\Cref{fig:i3BP} visualize the ground truth growth curve and predicted growth curve by all three models we tested when data is from i3BP, as well as the heat map of relative residual on the number of k-tons. In this setting our proposed method works as well as the correctly specified i3BP model on average and had larger variance, potentially due to the wrong assumption that all variants will eventually become shared. Meanwhile d3BP vastly underestimate number of ktons especially for $k_1,k_2>1$ as it assumes variants share the same frequencies. The overestimating of singletons in d3BP methods is not surprising since singletons are more compatible with the assumptions of all variants share the same variants and by chance we discovers it in one population. 

\paragraph{d3BP data}

We sample d3BP with parameter $(\alpha,c,\sigma)=(20,1,.5),(40,1,.1)$  respectively.

\begin{figure}[htp]
    \centering
    \includegraphics[width = 0.9\textwidth]{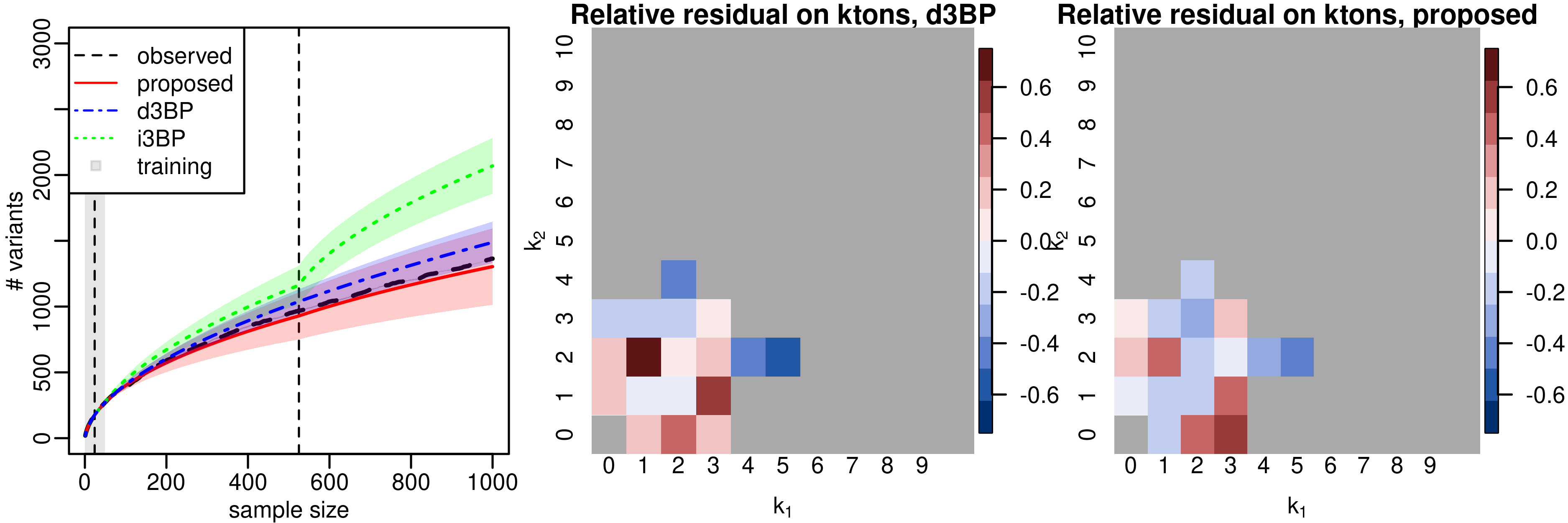}
    \includegraphics[width = 0.9\textwidth]{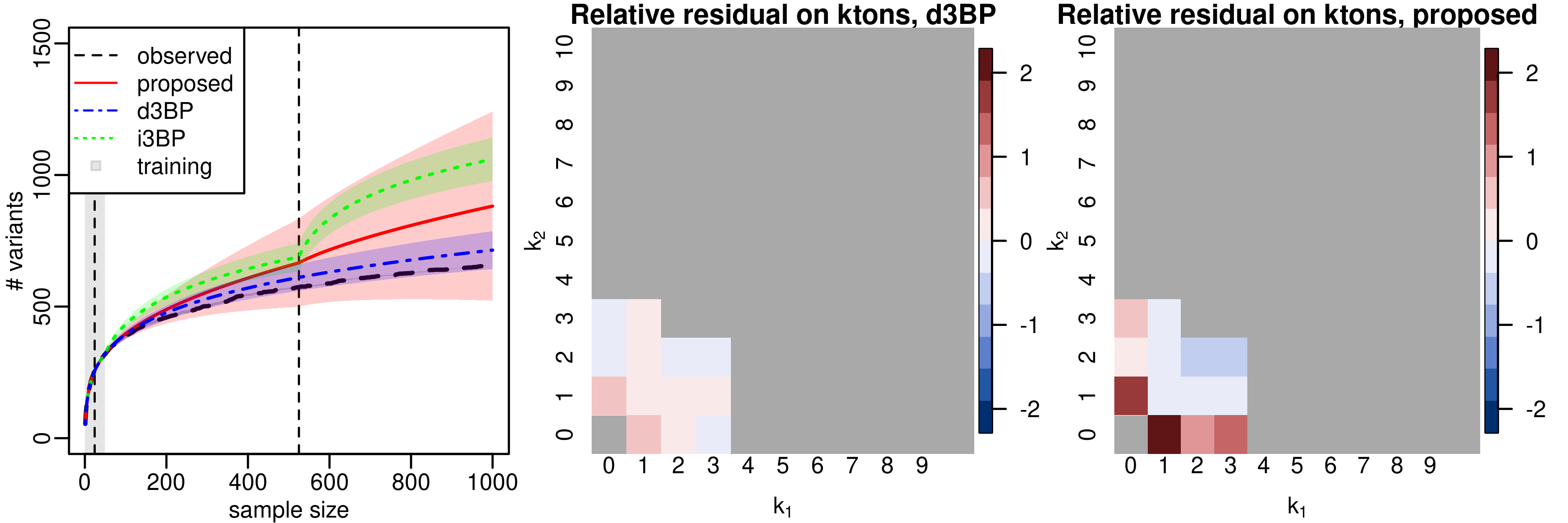}
    \caption{Prediction when data is sampled from d3BP. Each row represents a different simulated dataset, as described in \cref{sec:simulation}. \emph{Left}: The dashed black line shows the observed number of variants as a function of the sample number. The samples appear in the following order: the pilot from population 1, the pilot from population 2, the follow-up from population 1, the follow-up from population 2. A vertical dashed gray line separates the two populations in the pilot and follow-up, respectively. The gray shading covers the pilot data. All curves agree exactly on the pilot data. In the follow-up region, we plot mean (lines) and 1 standard deviation intervals (shaded regions) from three methods: our proposed method (solid red), the dependent version of the single-population 3BP (dash-dot blue), and the independent version of the 3BP (dotted green).  \emph{Center and Right}: For each $\vecoccurrence$ with component values up to 10, we plot the relative residual for predictions from our method (right) and the dependent 3BP (center). Note that the color scales are fixed across a row but vary across a column. A square is gray when the observed value is strictly less than 2 in at least one fold and black for $\vecoccurrence = (0,0)$. }
    \label{fig:d3BP}
\end{figure}

\Cref{fig:d3BP} visualize the ground truth growth curve and predicted growth curve by all three models we tested when data is from d3BP, as well as the heat map of relative residual on the number of k-tons. The data has relatively large amount of sharing thus we see a strong double counting problem faced by i3BP method that assumes no sharing. Our proposed method also struggle when the power law rate is small with perfect correlation. This is understandable since our method restrict the form of correlation and does not put mass perfectly along the diagonal especially for relatively common variants (\Cref{fig:log_rate_sigma}, third panel). Specifically when the power-law rate is low, the total number of variants are not dominated by extremely rare singletons and our method's mispecification for relatively common variants could struggle.

\subsection{Performance of single-population models on real data}

\label{sec:single_pop_methods}

We start by providing additional results on the performance of the array of model considered at the task of predicting the total number of genomic variance on real data when a single population is present.
For these comparisons, we tested fourth-order jackknife \citep[4jk,]{gravel2014predicting}, the Good-Toulmin estimator \citep[GT,]{Orlitsky2016,chakraborty2019using}, linear programming \citep[lp,]{zou2016quantifying}, and the 3BP approach of \citet{Masoero2022} as well as the scaled process (SP) method by \citet{Camerlenghi2021}.
We consider both data coming from the GnomAD dataset \citep{karczewski2020mutational}, as well as cancer data from the cancer genome atlas (TCGA) and the MSK-impact dataset \citep{cheng2015memorial}.
We observe that 3BP outperforms jackknife GT, and linear programming method in cancer datasets and perform very similarly with scaled process method which has larger variance. Jackknife performs better than other competing methods in GnomAD datasets. 

We provide a summary of our findings with results on a few cancer types in \Cref{fig:brca_luad_singlepop} and for a few subpopulations in the GnomAD dataset in \Cref{fig:seu_bgr_singlepop}. 

\begin{figure}[H]
    \centering
    \includegraphics[width = 0.6\textwidth]{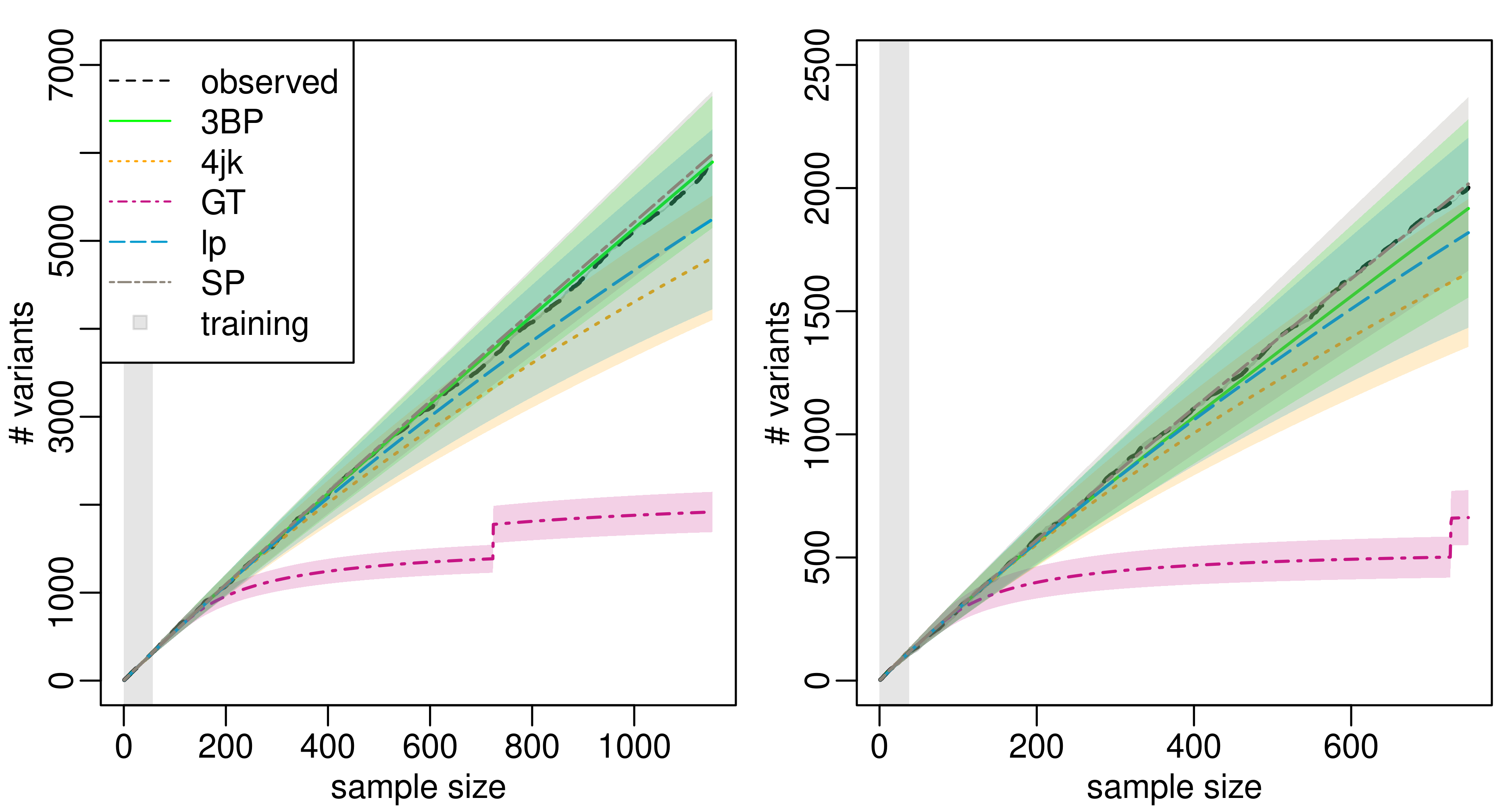}
    \includegraphics[width = 0.6\textwidth]{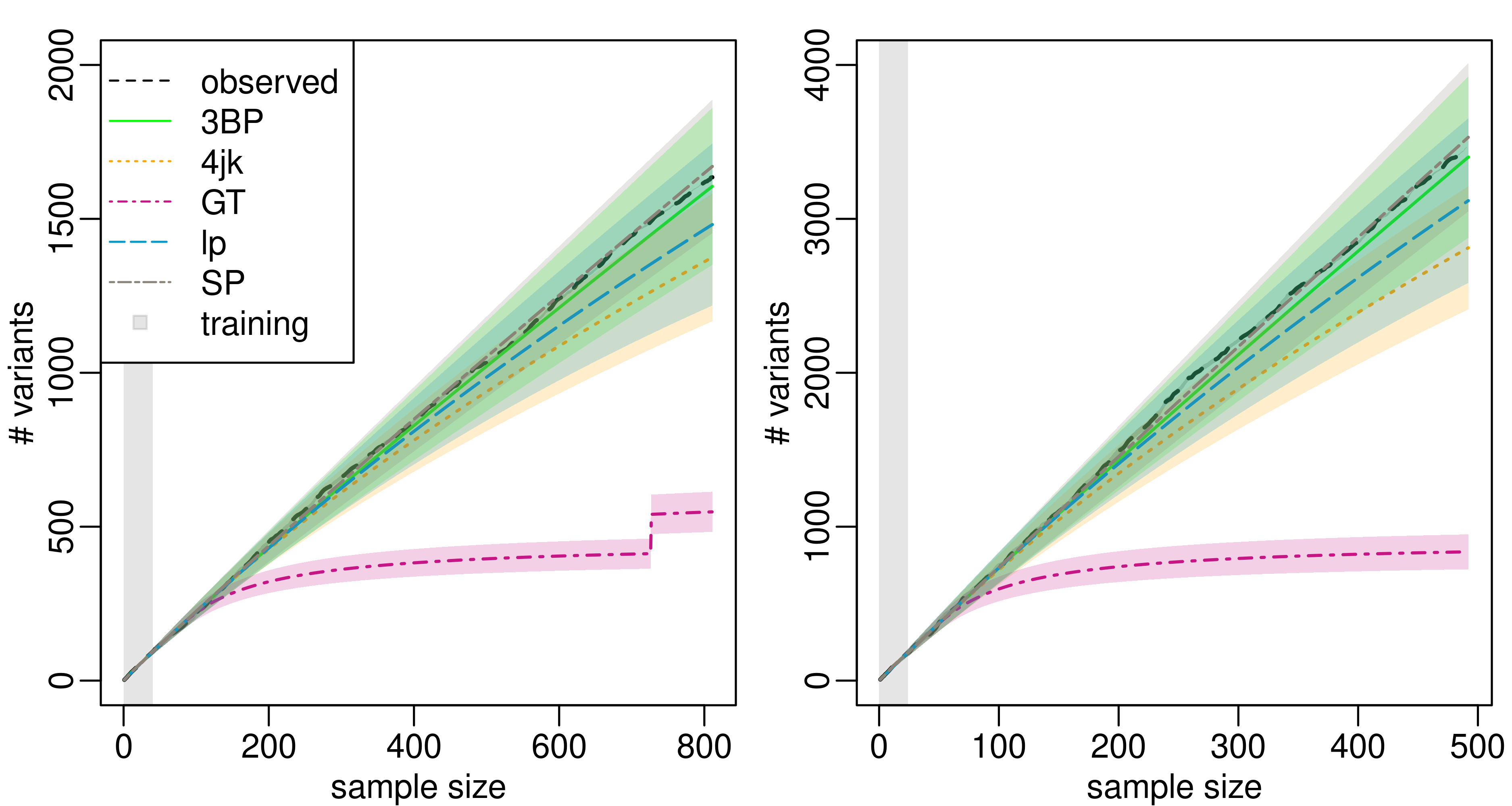}
    \caption{Prediction for the variant counts in, breast (left column) and lung cancer (right column). For  MSK-IMPACT dataset (first row) and TCGA dataset (second row) respectively. Shaded area being mean $\pm$ sd. The 3BP method outperforms alternatives in this dataset. }
    \label{fig:brca_luad_singlepop}
\end{figure}

\begin{figure}[H]
    \centering
    \includegraphics[width = 0.9\textwidth]{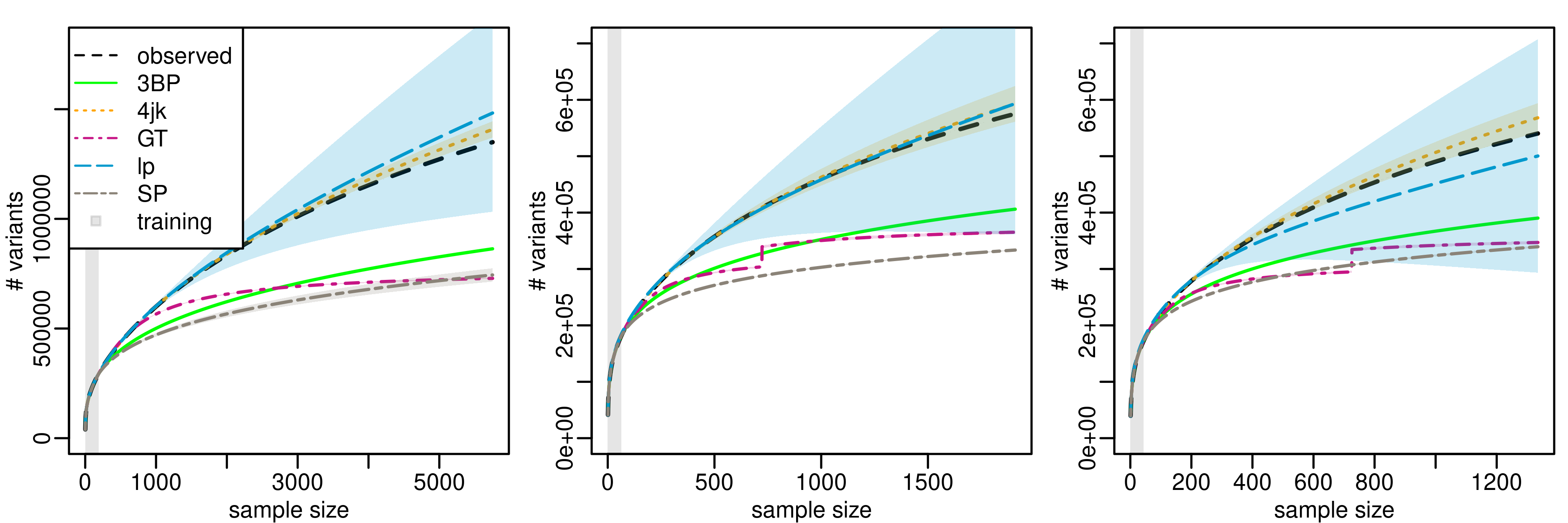}
    \caption{Prediction for the GnomAD dataset, Left column: Southern European, Middle column: Korean,  Right column: Bulgarian, shaded area being mean $\pm$ sd. The fourth order Jackknife outperforms competing alternatives in this dataset. 
    }
    \label{fig:seu_bgr_singlepop}
\end{figure}

\subsection{Raw number of shared variants and residuals in simulation and real data experiments}
\label{sec:raw_responses}

In this section we report raw number of shared variants and raw magnitude of residuals without normalization as in the main text. The general observations are similar to that of normalized residuals and our proposed methods performs well overall.

\subsubsection{Simulations}
\label{sec:raw_responses_simulated}

\Cref{fig:proposed_kton} to~\Cref{fig:d3BP_kton} showed the raw number of k-tons and residuals with simulated data. The observations are similar to those of normalized results that our model could do as well as well specified i3BP and may struggle when d3BP with low power law rate was used to simulate data. 

\begin{figure}[H]
    \centering
    \includegraphics[width = 0.9\textwidth]{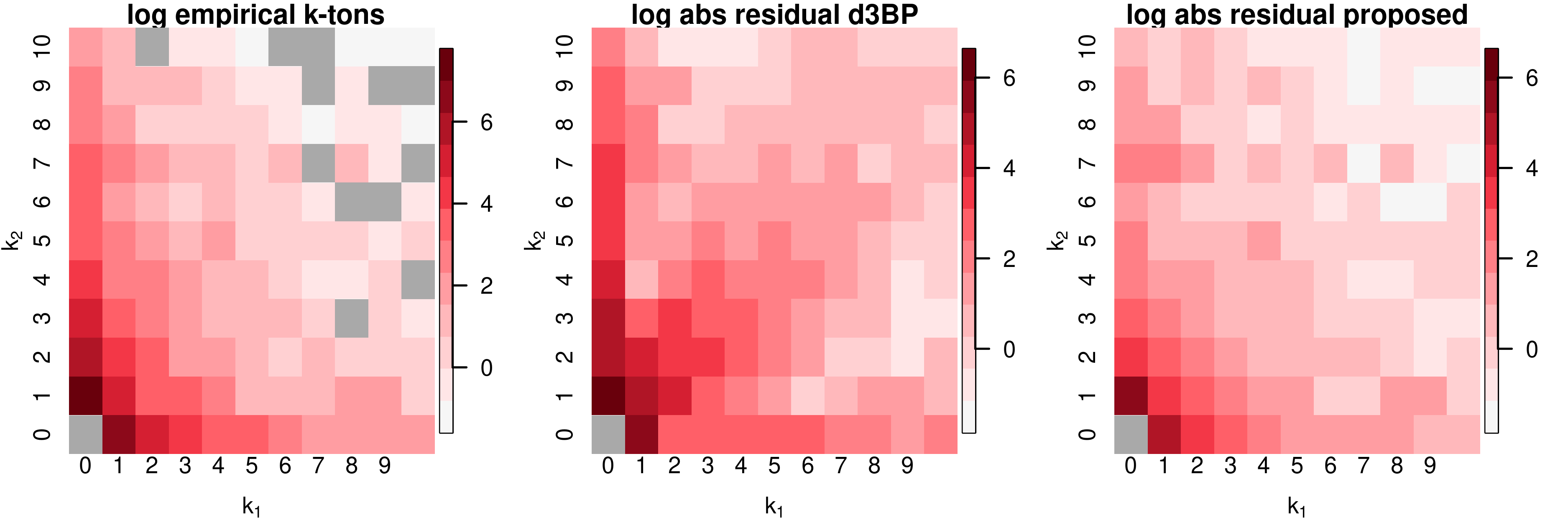}
    \includegraphics[width = 0.9\textwidth]{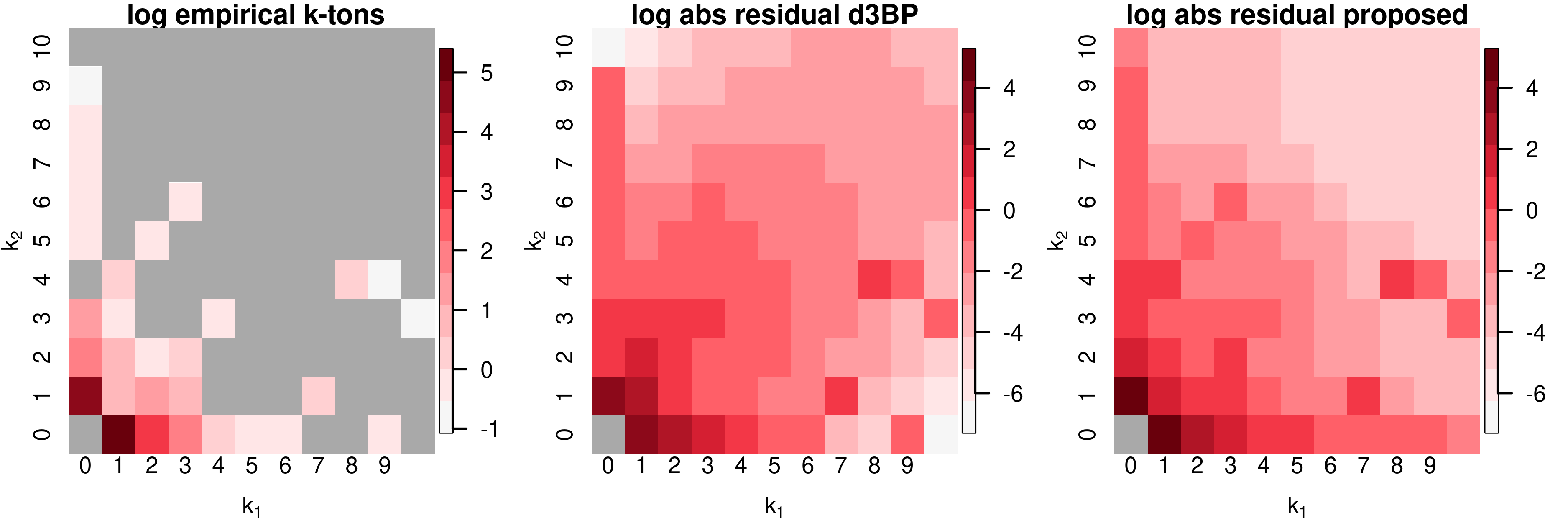}
    \caption{Ground truth of ktons and model prediction when the data sampled from the proposed model. Left most:log over average observed ktons. A square is gray when it the observed value is 0 for all folds or $\vecoccurrence=(0,0)$. Middle and right: Log average absolute residual in d3BP and proposed model.  
    }
    \label{fig:proposed_kton}
\end{figure}

\begin{figure}[H]
    \centering
    \includegraphics[width = 0.9\textwidth]{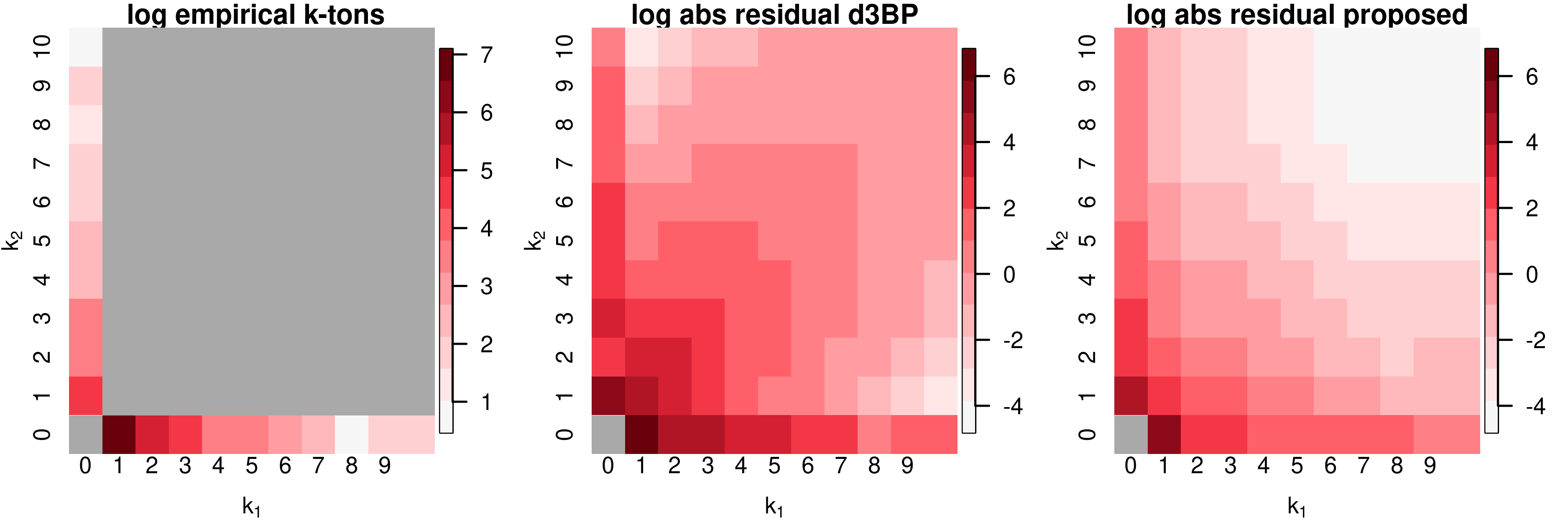}
    \includegraphics[width = 0.9\textwidth]{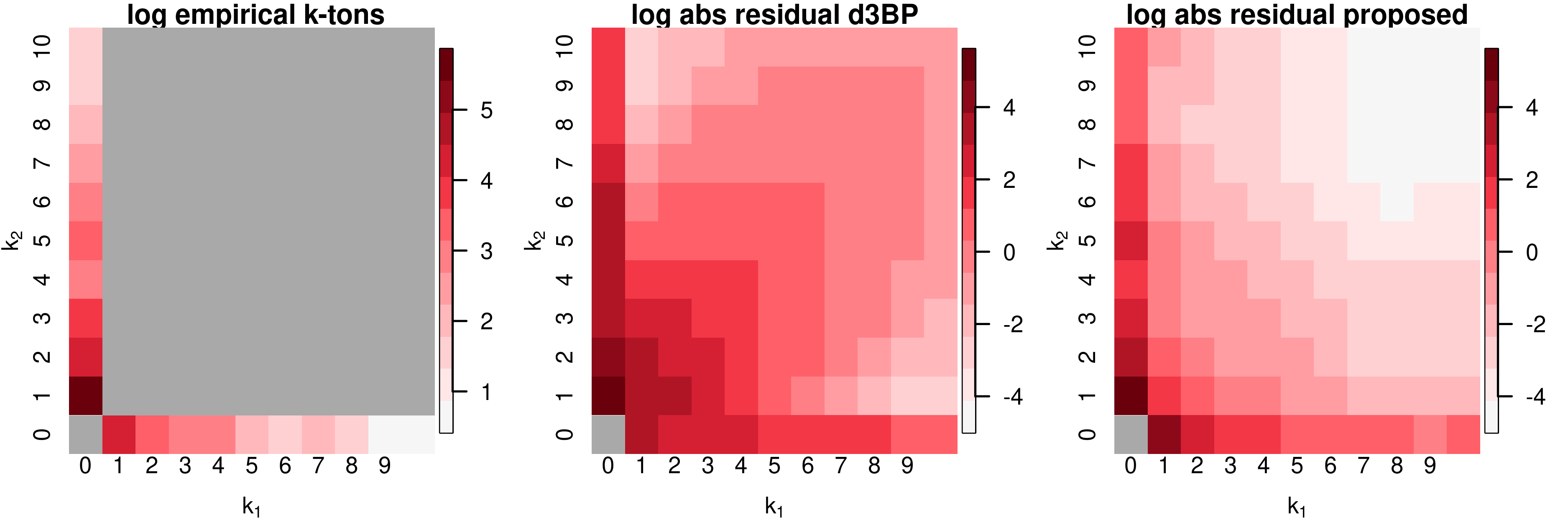}
    \caption{Ground truth of ktons and model prediction when the data sampled from i3BP. Left most:log over average observed ktons. A square is gray when it the observed value is 0 for all folds or $\vecoccurrence=(0,0)$. Middle and right: Log average absolute residual in d3BP and proposed model. 
    }
    \label{fig:i3BP_kton}
\end{figure}

\begin{figure}[H]
    \centering
    \includegraphics[width = 0.9\textwidth]{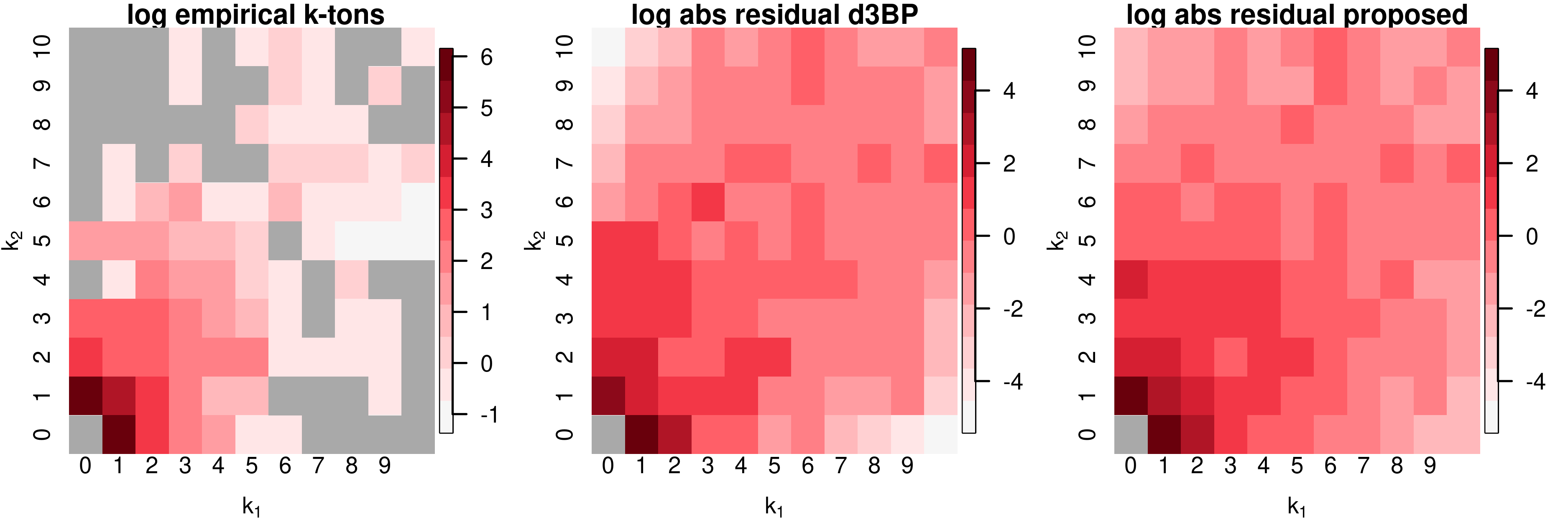}
    \includegraphics[width = 0.9\textwidth]{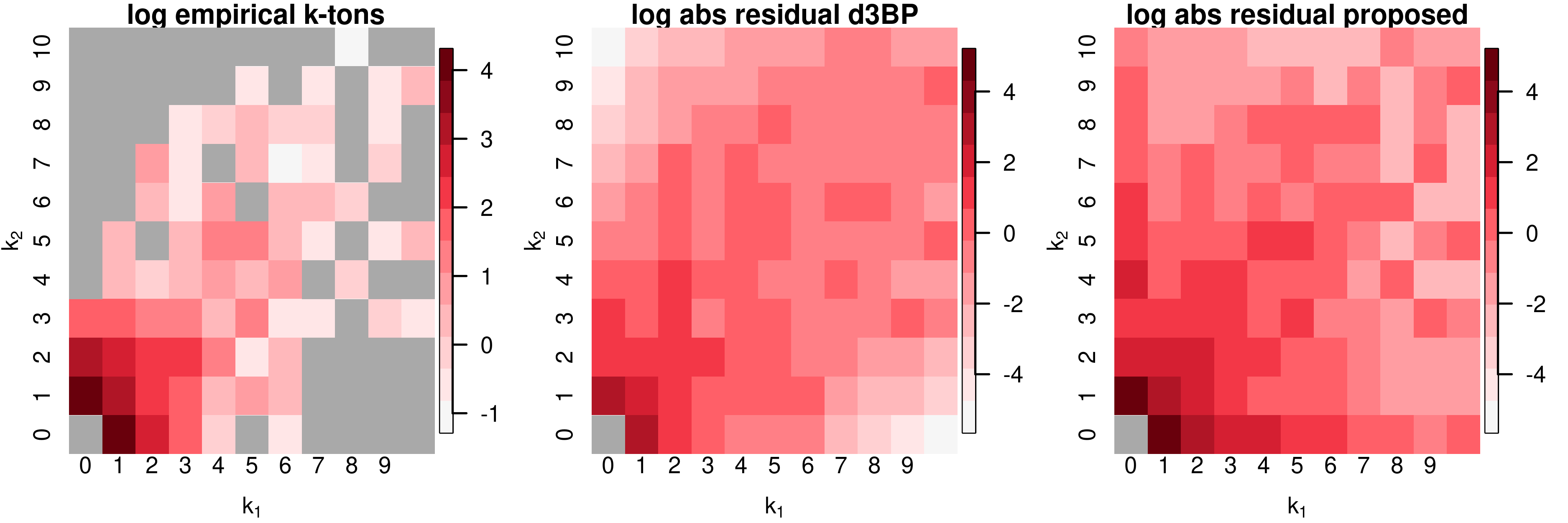}
    \caption{Ground truth of ktons and model prediction when the data sampled from d3BP. Left most:log over average observed ktons. A square is gray when it the observed value is 0 for all folds or $\vecoccurrence=(0,0)$. Middle and right: Log average absolute residual in d3BP and proposed model.}
    \label{fig:d3BP_kton}
\end{figure}

\subsection{Real data experiment}

\Cref{fig:brca_luad_kton} to~\Cref{fig:seu_bgr_kton} showed the raw number of k-tons that exists more than twice in all samples and magnitude of raw residuals with simulated data. We have the similar observation as normalized version that d3BP methods tend to have larger residual around the diagonal since by assumption of d3BP all variants are shared with the same frequencies. 

\begin{figure}[H]
    \centering
    \includegraphics[width = 0.9\textwidth]{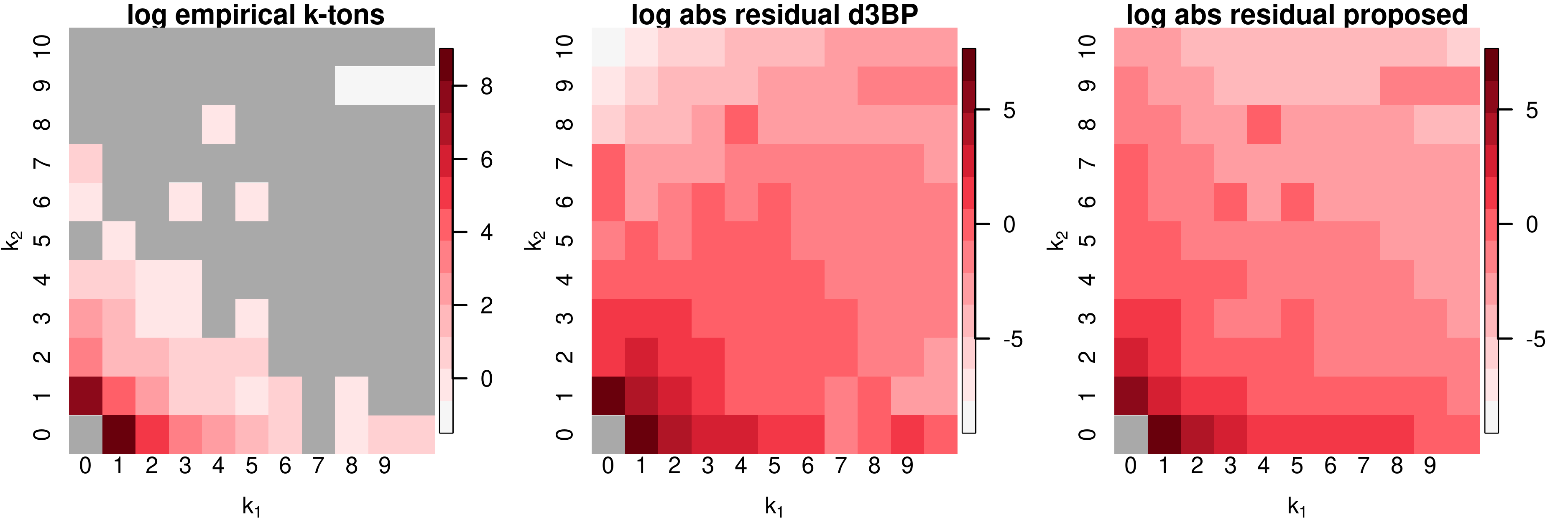}
    \includegraphics[width = 0.9\textwidth]{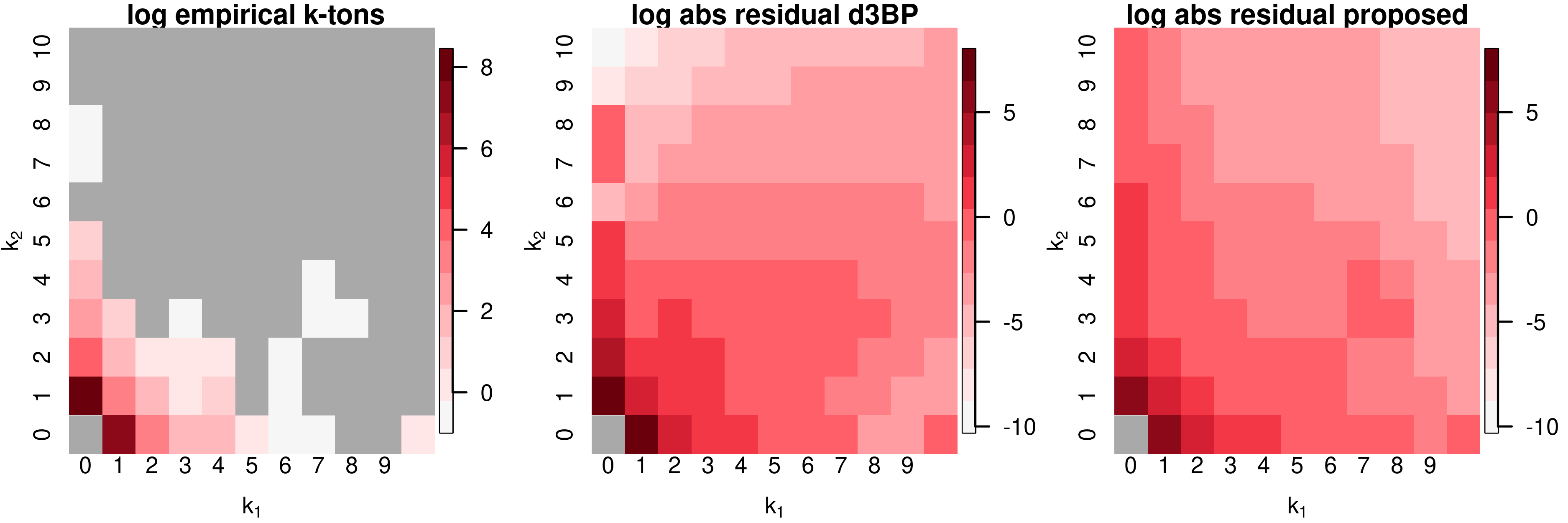}
    \caption{Ground truth of ktons and model prediction using cancer genome dataset. Left most:log over average observed ktons. A square is gray when it the observed value is 0 for all folds or $\vecoccurrence=(0,0)$. Middle and right: Log average absolute residual in d3BP and proposed model. }
    \label{fig:brca_luad_kton}
\end{figure}

\begin{figure}[H]
    \centering
    \includegraphics[width = 0.9\textwidth]{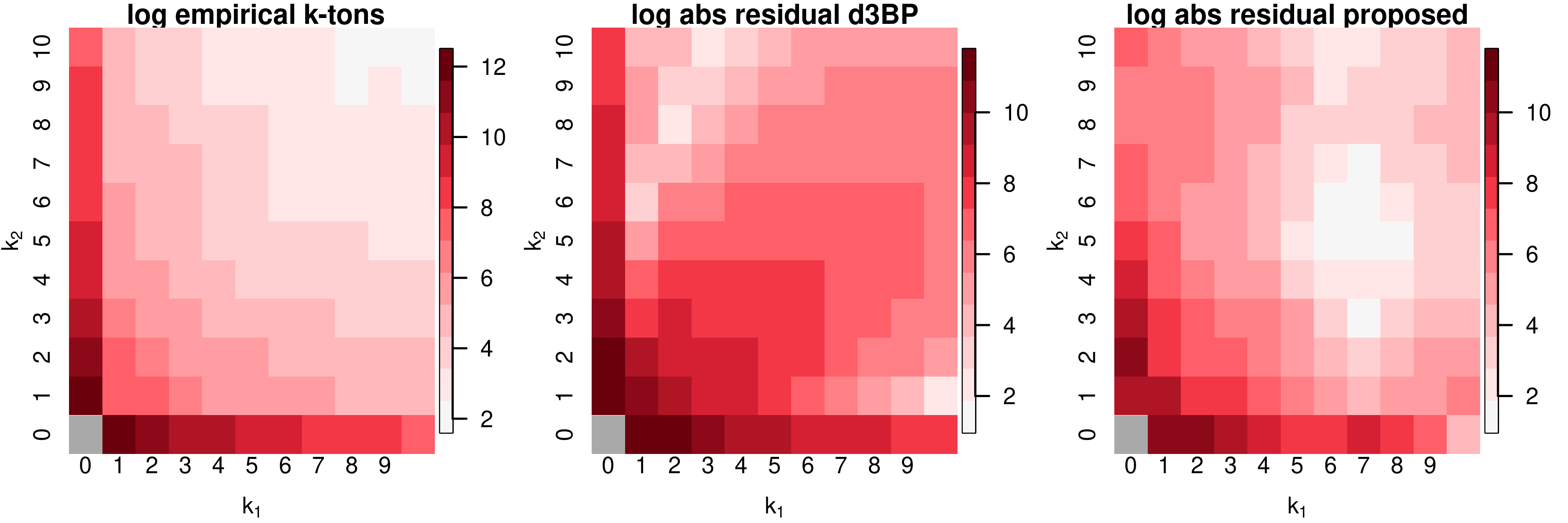}
    \includegraphics[width = 0.9\textwidth]{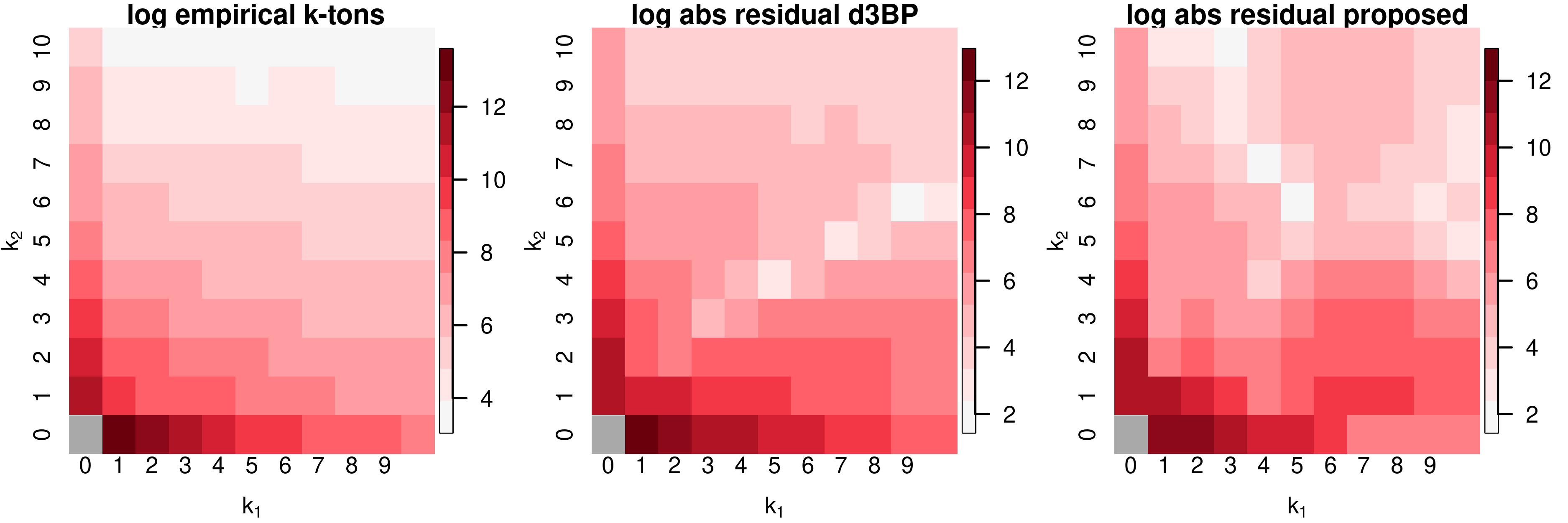}
    \caption{Ground truth of ktons and model prediction using GnomAD dataset. Left most:log over average observed ktons. A square is gray when it the observed value is 0 for all folds or $\vecoccurrence=(0,0)$. Middle and right: Log average absolute residual in d3BP and proposed model. }
    \label{fig:seu_bgr_kton}
\end{figure}

\subsection{Our method avoids double counting when variants grow slowly}

Here we show our model's behavior when fit to data with low power law rate of d3bp and from our proposed model. Our proposed method could handle both dependency models when the growth is slow while i3bp encounters sever double counting problems.

\begin{figure}[H]
    \centering
    \includegraphics[width = 0.9\textwidth]{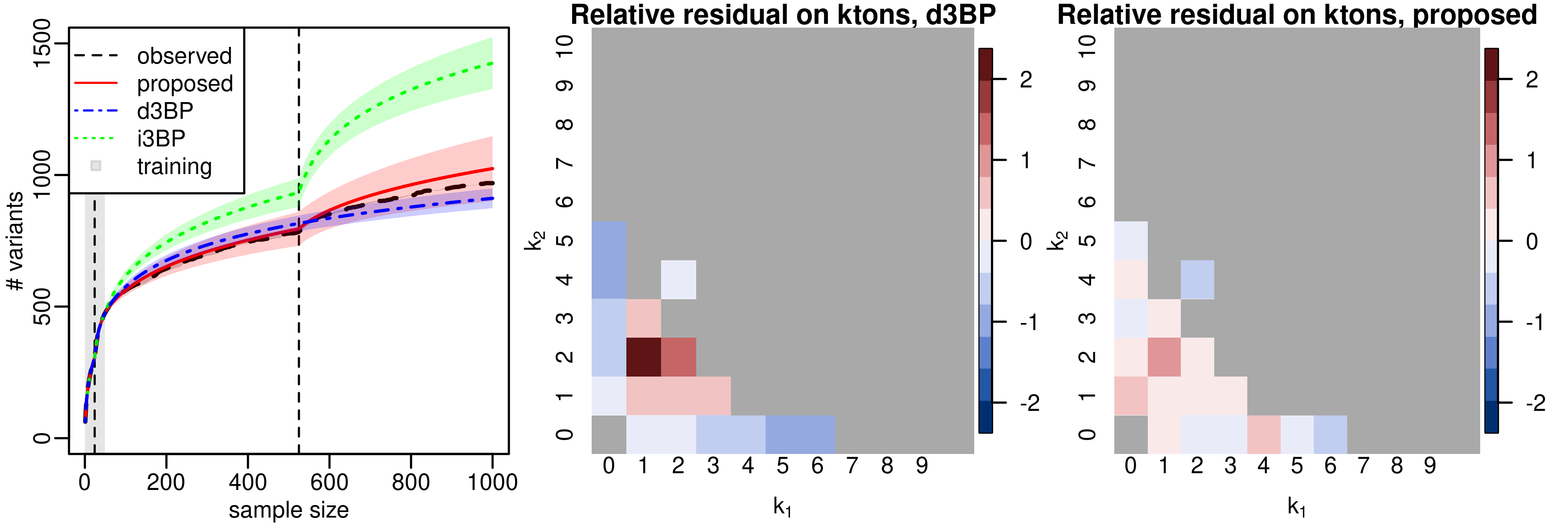}
    \includegraphics[width = 0.9\textwidth]{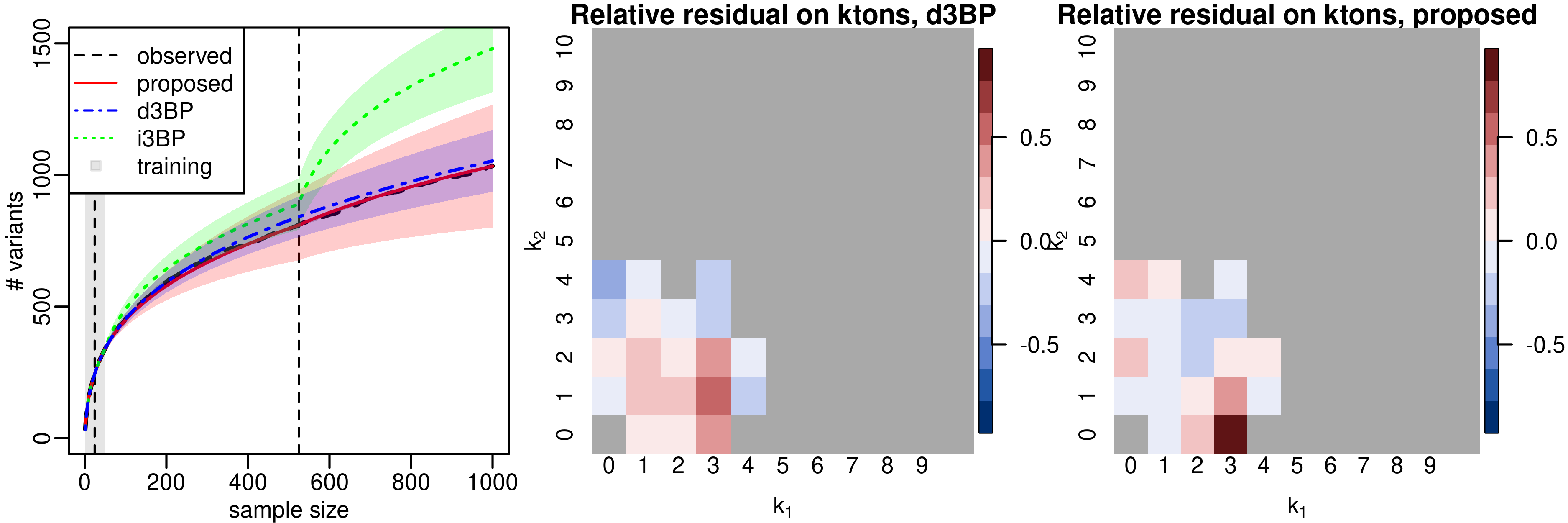}
    \caption{Prediction for slow growth correlated models. Top: proposed model $(\mass,\rate{1}, \rate{2}, \corr{1}, \corr{2}, \discount{1}, \discount{2})=(100,0.4,0.6,0.5,0.5,1,1)$. Bottom: d3bp $(\alpha,c,\sigma)=(40,1,.3)$.}
    \label{fig:slow_growth}
\end{figure}

\subsection{Larger trimming during training}
\label{sec:large_triming}
Determine of the hyperparameters of our model involving maximizing likelihood over trimmed k-tons. We present additional results here with larger trimming in dataset with enough training data to allow $k=15,20$ namely the GnomAD and MSK-IMPACT dataset. The results was not influenced by the trimming (Fig.~\ref{fig:seu_bgr_15}~\ref{fig:seu_bgr_20}~\ref{fig:bidc_la2_15}). 

\begin{figure}[H]
    \centering
    \includegraphics[width = 0.9\textwidth]{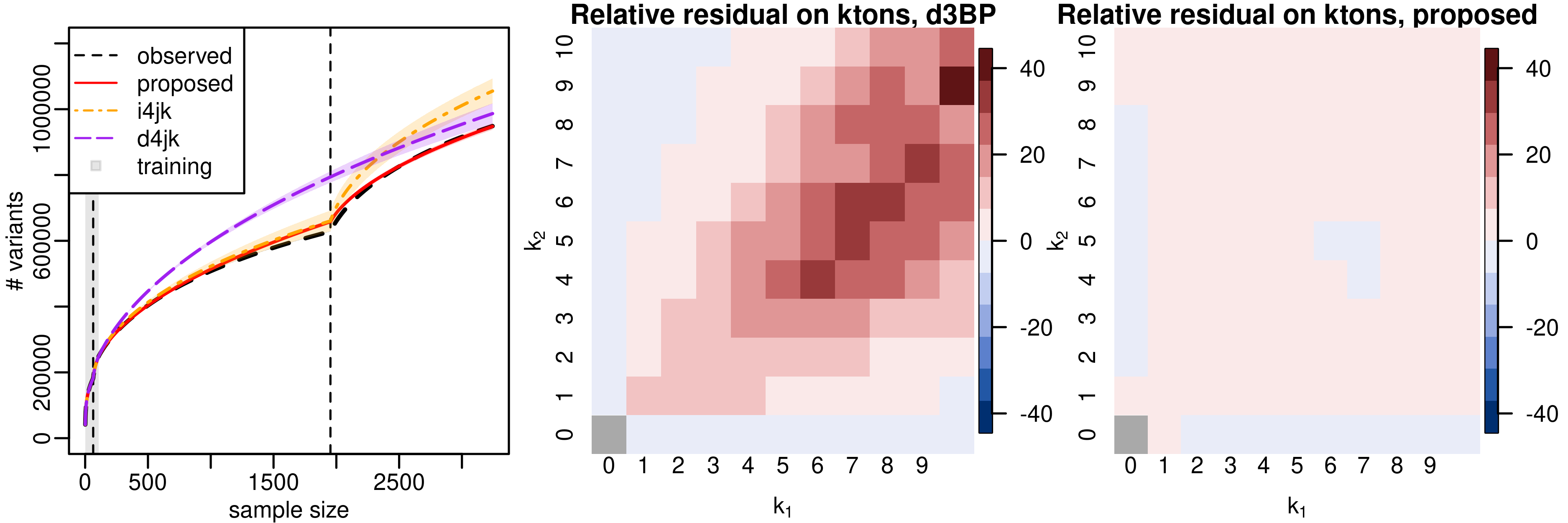}
    \includegraphics[width = 0.9\textwidth]{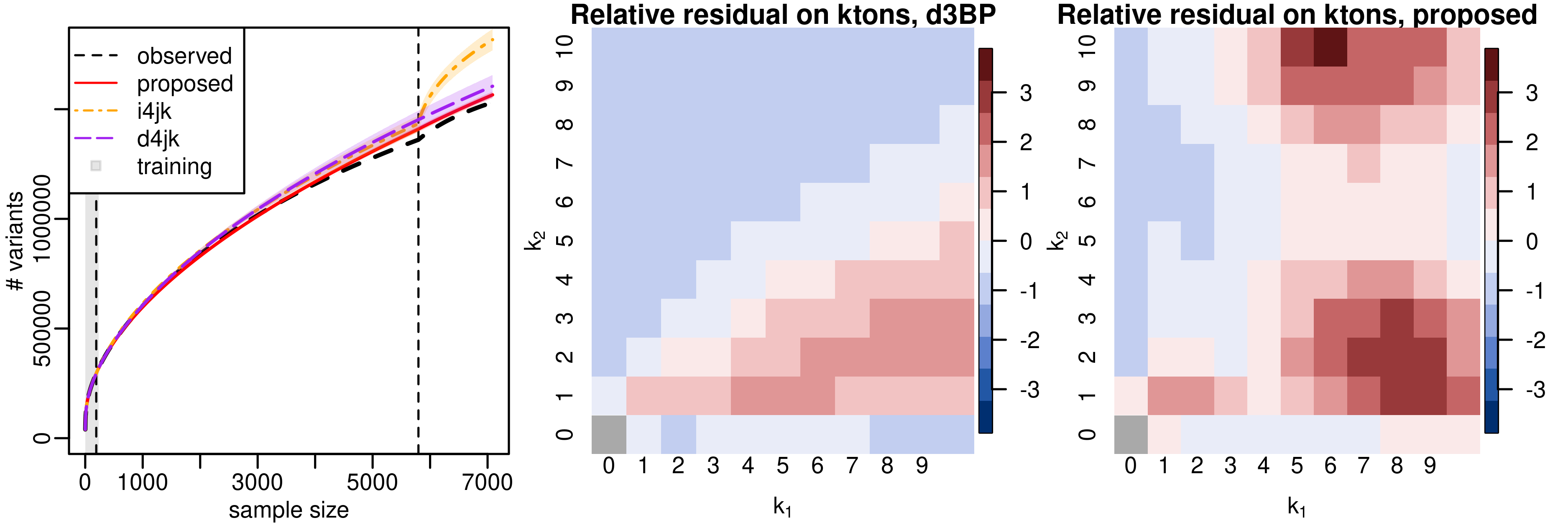}
    \caption{Prediction for the GnomAD dataset, Korean--Bulgarian and Southern European--Bulgarian respectively. The trimming here are set to 15. The results are the same as trimming at 10 in the main text \cref{fig:pop_gen}.}
    \label{fig:seu_bgr_15}
\end{figure}

\begin{figure}[H]
    \centering
    \includegraphics[width = 0.9\textwidth]{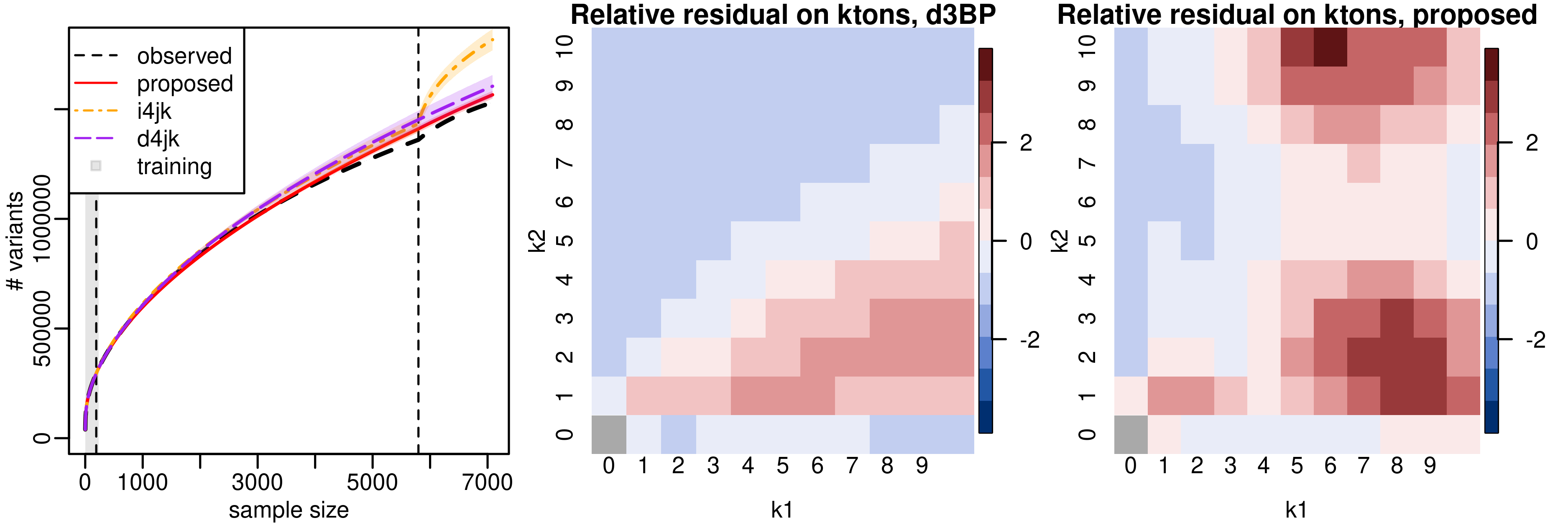}
    \caption{Prediction for the GnomAD dataset Southern European--Bulgarian respectively. The trimming here are set to 20. The results are the same as trimming at 10 in the main text \cref{fig:pop_gen}. The Korean population does not have enough training sample to allow a 20 trimming.}
    \label{fig:seu_bgr_20}
\end{figure}

\begin{figure}[H]
    \centering
    \includegraphics[width = 0.9\textwidth]{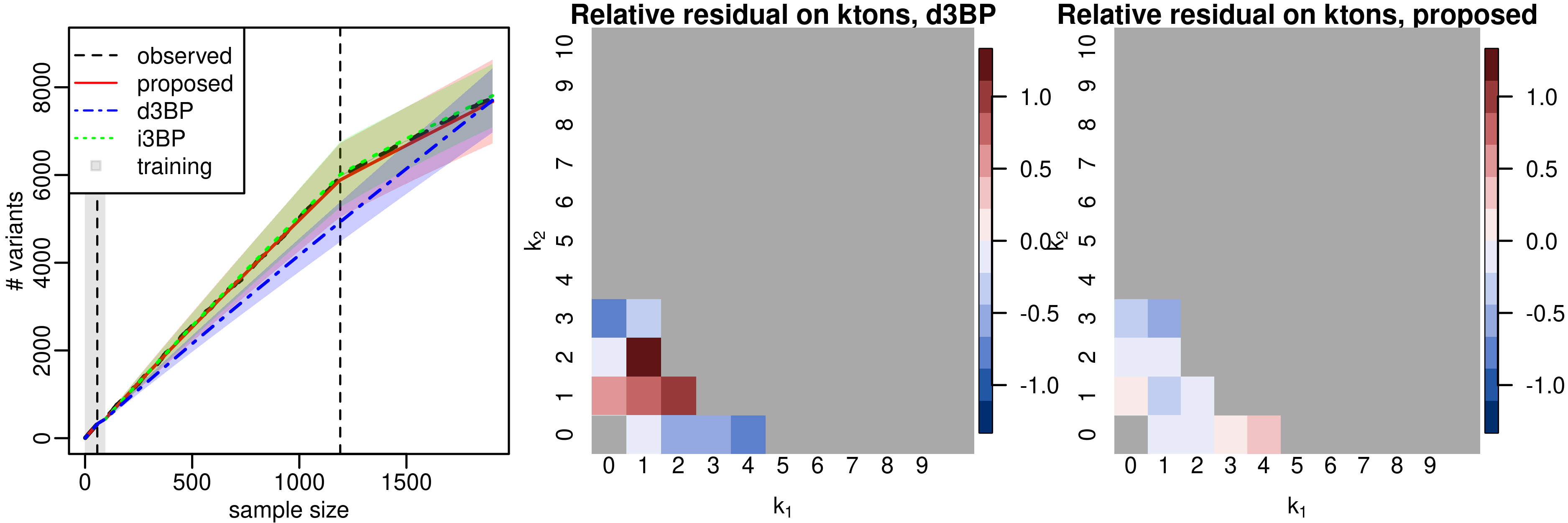}
    \caption{Prediction for the variant counts in, breast and lung cancer. For  MSK-IMPACT dataset. This experiment used trimming of 15 during training and the results are not influenced. }
    \label{fig:bidc_la2_15}
\end{figure}

\section{Power law behavior}
\label{app:power_law}

\subsection{Power law bounds in projection scheme}
\label{sec:projection_scheme_proof}

In this section we provide details and proofs on the asymptotic properties of number of variants seen.
Here the asymptotics are obtained as more samples are being collected, as introduced in~\Cref{sec:projection_powerlaw}. 
Specifically we prove that the expected number of variants seen can be both upper and lower bounded by (deterministic) functions whose asymptotic behavior is that of a power-law, i.e.\ behave like some function $K(N)$ which, as the sample size $N$ increases, satisfies $K(N) \sim CN^\sigma$ for a constant $C$ and $\alpha\in (0,1)$, where we are using the notation $f(t)\sim g(t) \iff \lim_{t\to \infty} f(t)/g(t) = 1$.
 
The main tool we use in our proof is an existing result for power-law asymptotics, developed for the case of single-population models \citep[Proposition 6.1]{Broderick2012}. 
We here first restate \citet[Proposition 6.1]{Broderick2012}. 
This result explains, under the assumption that we have a single population, and the data is drawn from a Bayesian nonparametric model like the one in \Cref{eq:d3BP}, the relationship between the rate of growth of the total number of observed variants and the tail behavior of the underlying Poisson point process rate measure. 
In what follows, we use the following asymptotic relation notation:  $f(t)\sim g(t) \iff \lim_{t\to \infty} f(t)/g(t) = 1$.

A useful tool in our asymptotic analysis is the Laplace transformation of the rate measure's tail, or ``Poissonized'' expectation $\Phi(t)$. In particular, Proposition 6.1, Lemma 6.3 and Lemma 6.4 in~\citet{Broderick2012} (i) established the asymptotic behavior of the (deterministic) Laplace transform $\Phi$, and (ii) linked this asymptotic behavior to the asymptotic number of variants observed from the model. These results will play a crucial role in our analysis, so we here re-state them for completeness.

\Cref{lemma:broderick12_6.1} connects tail of the rate measure to ``Poissionized'' expectation 
\begin{myLemma}[Proposition 6.1 of~\citet{Broderick2012}]
    \label{lemma:broderick12_6.1}
    Let $\nu(\de \theta)$ be a (univariate) rate measure such that for $x>0$ there exists $\alpha\in (0,1)$ such that:
    	\begin{equation}
            \label{eq:integral_req}
            \int_0^x \nu(\de\theta)\sim \frac{\alpha}{1-\alpha} x^{1-\alpha}l\left(\frac{1}{x}\right),
        \end{equation}
    		
	for a slowly varying function $l(\cdot)$. Define $\bar{\nu}(x)=\int_x^\infty \nu(\de\theta)$ and $\Phi(t)=t\int_0^\infty e^{-tx}\bar{\nu}(\de x)$. Then it holds
    \begin{align*}
        \Phi(t) \sim \Gamma(1-\alpha)t^\alpha l(t).
    \end{align*}
\end{myLemma}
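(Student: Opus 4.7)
I recognize this statement as a specialization of Karamata's Tauberian theorem for Laplace transforms, and I would prove it in three steps, essentially unpacking that classical result.

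First, I would put $\Phi$ into a Tauberian-friendly form. By Fubini's theorem,
\begin{equation*}
\Phi(t)=t\int_0^\infty e^{-tx}\bar{\nu}(x)\,\de x=\int_0^\infty\bigl(1-e^{-tx}\bigr)\nu(\de x),
\end{equation*}
which makes transparent that the large-$t$ asymptotics of $\Phi$ are governed by the behavior of $\nu$ near the origin. The quantity $\Phi(t)/t$ is the Laplace transform of the monotone function $\bar{\nu}$, placing us squarely in the Karamata framework.

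Second, I would translate the hypothesis on the cumulative $N(x):=\int_0^x \nu(\de\theta)$ into a companion near-$0$ asymptotic for the tail $\bar{\nu}$. Since $N$ is monotone and regularly varying at $0$ with index $1-\alpha$, the monotone density theorem combined with a direct Karamata-type calculation yields $\bar{\nu}(x)\sim x^{-\alpha}\,l(1/x)$ as $x\to 0^+$. This is the form to which the Abelian half of Karamata's theorem applies directly.

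Third, I would make the substitution $u=tx$ in the Laplace-transform form of $\Phi$,
\begin{equation*}
\Phi(t)=\int_0^\infty e^{-u}\,\bar{\nu}(u/t)\,\de u,
\end{equation*}
plug in the expansion $\bar{\nu}(u/t)\sim (u/t)^{-\alpha}l(t/u)=t^\alpha u^{-\alpha}l(t/u)$, and use slow variation of $l$ (so that $l(t/u)/l(t)\to 1$ pointwise in $u$) to conclude
\begin{equation*}
\Phi(t)\sim t^\alpha\,l(t)\int_0^\infty e^{-u}\,u^{-\alpha}\,\de u=\Gamma(1-\alpha)\,t^\alpha\,l(t).
\end{equation*}

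The main obstacle is justifying the passage to the limit inside the integral in the last step: slow variation only guarantees pointwise convergence of $l(t/u)/l(t)$, while the integrand has a mild singularity at $u=0$ (because $\alpha\in(0,1)$) and needs control at $u=\infty$ as well. The standard way around this is Potter's inequality, which provides two-sided polynomial bounds on $l(t/u)/l(t)$ that are uniform in $t$ and integrable against $e^{-u}u^{-\alpha}$ over all of $(0,\infty)$, enabling dominated convergence. Alternatively, one may simply invoke Karamata's Tauberian theorem as a black box applied to the monotone function $\bar{\nu}$ and then identify the Gamma factor by a direct computation on the model function $x\mapsto x^{-\alpha}$.
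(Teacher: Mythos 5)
The paper contains no proof of this lemma to compare against: it is imported verbatim from \citet{Broderick2012} (Proposition 6.1), which in turn rests on the Karamata-type equivalences in Gnedin, Hansen and Pitman (2007, Prop.~13). Your reconstruction follows exactly that standard route --- rewrite $\Phi(t)=\int_0^\infty(1-e^{-tx})\,\nu(\de x)$ by Fubini, transfer the regular variation of the cumulative to the tail $\bar{\nu}$, substitute $u=tx$, and justify the limit under the integral with Potter's bounds --- and steps 1 and 3 are sound as written, including your (correct) reading of ``$\bar{\nu}(\de x)$'' as $\bar{\nu}(x)\,\de x$, which matches how $\poissonize$ is defined later in \Cref{sec:projection_scheme_proof}.

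The one step that does not survive is step 2, and the fault lies in the statement you were handed rather than in your framework. If $N(x):=\int_0^x\nu(\de\theta)$ literally means $\nu([0,x])$, then $N(x)\to 0$ as $x\to 0^+$, so the monotone density theorem gives a $\nu$-density $\sim \alpha x^{-\alpha}l(1/x)$ whose integral over $(0,x)$ \emph{converges}; hence $\bar{\nu}$ stays bounded near $0$, you cannot conclude $\bar{\nu}(x)\sim x^{-\alpha}l(1/x)$, and indeed $\Phi(t)\le\nu([0,\infty))$ would be bounded, contradicting the asserted conclusion. The hypothesis must be read as $\int_0^x\theta\,\nu(\de\theta)\sim\frac{\alpha}{1-\alpha}x^{1-\alpha}l(1/x)$: this is the only reading consistent with the constant $\frac{\alpha}{1-\alpha}$, since integration by parts gives
\begin{equation*}
\int_0^x\theta\,\nu(\de\theta)=-x\bar{\nu}(x)+\int_0^x\bar{\nu}(\theta)\,\de\theta\sim\Bigl(\tfrac{1}{1-\alpha}-1\Bigr)x^{1-\alpha}l(1/x)
\end{equation*}
when $\bar{\nu}(x)\sim x^{-\alpha}l(1/x)$, and it is how the paper itself verifies the condition in \Cref{sec:projection_scheme_proof}, where a factor $\variantfreqperpop{\popidx}$ always appears inside the integral. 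With that correction, your step 2 becomes the monotone density theorem applied to the truncated first moment (density $\sim\alpha\theta^{-1-\alpha}l(1/\theta)$, hence $\bar{\nu}(x)\sim x^{-\alpha}l(1/x)$), and the rest of your argument goes through.
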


\Cref{lemma:broderick12_6.3} established that ``Poissionization'' error vanishes as number of samples grows 
\begin{myLemma}[Lemma 6.3 of~\citet{Broderick2012}]
    \label{lemma:broderick12_6.3}
    Assume variant frequencies are from PPP with rate measure $\nu(\de\theta)$ such that $\int \nu(\de\theta)=\infty$, $\int \theta \nu(\de\theta)<\infty$, and variants are independent Bernoulli random variables with variant frequencies. Define $\bar{\nu}(x)=\int_x^\infty \nu(\de\theta)$ and $\Phi(t)=t\int_0^\infty e^{-tx}\bar{\nu}(\de x)$. Denote the number of variants seen with sample size $N$ as $U_0^N$, then for $N\to\infty$
    \begin{align*}
        \left|\E U_0^N- \Phi(N) \right|\to 0
    \end{align*}
\end{myLemma}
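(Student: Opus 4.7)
My plan is to obtain closed-form integral representations for both $\E U_0^N$ and $\Phi(N)$ against the rate measure $\nu$, and then to control their difference by a dominated-convergence argument. Because $\{\theta_\ell\}$ is a PPP with intensity $\nu$, and a variant of frequency $\theta$ appears in at least one of $N$ i.i.d.\ Bernoulli$(\theta)$ trials with probability $1-(1-\theta)^N$, Campbell's mean-measure formula yields
\begin{equation*}
\E U_0^N=\int_0^1 \bigl[1-(1-\theta)^N\bigr]\,\nu(\de\theta).
\end{equation*}
A Fubini swap applied to the definition of $\Phi$, justified by $1-e^{-N\theta}\le N\theta$ and $\int \theta\,\nu(\de\theta)<\infty$, shows that
\begin{equation*}
\Phi(N)=\int_0^1 \bigl[1-e^{-N\theta}\bigr]\,\nu(\de\theta),
\end{equation*}
i.e.\ exactly the Poissonized analogue of $\E U_0^N$ in which each Bernoulli miss probability $(1-\theta)^N$ is replaced by its Poisson surrogate $e^{-N\theta}$.

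Because $(1-\theta)^N\le e^{-N\theta}$ on $[0,1]$, the difference is already non-negative,
\begin{equation*}
0\le \E U_0^N-\Phi(N)=\int_0^1\bigl[e^{-N\theta}-(1-\theta)^N\bigr]\,\nu(\de\theta).
\end{equation*}
Writing $(1-\theta)^N=e^{-N\theta}\bigl[e^\theta(1-\theta)\bigr]^N$ and combining the elementary inequalities $1-x^N\le N(1-x)$ for $x\in[0,1]$ and $1-e^\theta(1-\theta)=\int_0^\theta t e^t\,\de t\le (e/2)\theta^2$ on $[0,1]$ yields
\begin{equation*}
0\;\le\;e^{-N\theta}-(1-\theta)^N\;\le\;\tfrac{e}{2}\,(N\theta\,e^{-N\theta})\,\theta\;\le\;\tfrac{\theta}{2},
\end{equation*}
where the last step uses $xe^{-x}\le 1/e$. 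For each fixed $\theta>0$ the integrand tends to $0$ as $N\to\infty$ (since $N\theta\,e^{-N\theta}\to 0$), and the dominating function $\theta/2$ is $\nu$-integrable by hypothesis, so dominated convergence gives $|\E U_0^N-\Phi(N)|\to 0$.

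The delicate step is the pointwise bound above. Because $\nu$ has infinite total mass, crude estimates such as $e^{-N\theta}-(1-\theta)^N\le 1$ are useless; one must quantitatively compare the Bernoulli and Poisson miss probabilities closely enough to peel off a factor of $\theta$ (to invoke the assumption $\int\theta\,\nu(\de\theta)<\infty$) while still retaining a factor that vanishes pointwise in $N$. The product $(N\theta\,e^{-N\theta})\cdot\theta$ accomplishes exactly this, being bounded above by $\theta/e$ and tending to $0$ at every fixed $\theta>0$, which is what makes the dominated convergence step go through uniformly in $N$ despite both $\E U_0^N$ and $\Phi(N)$ themselves diverging.
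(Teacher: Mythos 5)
Your proof is correct. Note that the paper does not actually prove this statement: it is imported verbatim as Lemma 6.3 of \citet{Broderick2012} and used as a black box, so there is no in-paper argument to compare against. Your blind proof is a valid, self-contained derivation in the standard Poissonization style: the Campbell-formula identity $\E U_0^N=\int[1-(1-\theta)^N]\,\nu(\de\theta)$, the Fubini identity $\Phi(N)=\int[1-e^{-N\theta}]\,\nu(\de\theta)$ (which is the intended reading of the paper's $t\int_0^\infty e^{-tx}\bar\nu(x)\,\de x$; the ``$\bar\nu(\de x)$'' in the statement is a typo), and the pointwise bound $0\le e^{-N\theta}-(1-\theta)^N\le \tfrac{e}{2}\,\theta\,(N\theta e^{-N\theta})\le \theta/2$ are all right, and the domination by the $\nu$-integrable function $\theta/2$ together with pointwise vanishing legitimately closes the argument via dominated convergence. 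Your closing remark correctly identifies the one delicate point --- the need to extract a factor of $\theta$ from the Bernoulli-versus-Poisson miss-probability gap because $\nu$ has infinite total mass --- which is exactly the step that makes the lemma nontrivial.
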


\Cref{lemma:broderick12_6.4} showed that number of variants seen concentrates around the expectation
\begin{myLemma}[Lemma 6.4 of~\citet{Broderick2012}]
    \label{lemma:broderick12_6.4}
    For the same assumption as~\cref{lemma:broderick12_6.3}, we have 
    \begin{align*}
        U_0^N \sim \E U_0^N \text{  a.s.}
    \end{align*}
\end{myLemma}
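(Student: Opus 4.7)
Since this lemma is a verbatim restatement of Lemma~6.4 of~\citet{Broderick2012}, the plan is simply to cite that source. If I were to reproduce the argument from scratch using the ingredients already assembled in the excerpt, the goal is to show $U_0^N/\E U_0^N \to 1$ almost surely. By~\cref{lemma:broderick12_6.3} we have $\E U_0^N \sim \Phi(N)$ and under the assumption $\int \nu(\de\theta)=\infty$ one has $\Phi(N)\to\infty$, so it suffices to establish $U_0^N/\Phi(N)\to 1$ almost surely, and the natural route is Poissonization followed by de-Poissonization.

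First I would replace the deterministic sample size $N$ by $\tilde N_t\sim\Pois(t)$ drawn independently of the feature-sampling randomness, and let $\hat U(t)$ denote the resulting number of distinct features. Conditionally on the frequencies $\{\theta_\ell\}\sim\mathrm{PPP}(\nu)$, the number of samples containing feature $\ell$ is an independent $\Pois(t\theta_\ell)$, so feature $\ell$ is observed with conditional probability $1-e^{-t\theta_\ell}$ independently across $\ell$. Standard Poisson functional calculus on the marked PPP of (frequency, indicator-seen) pairs then gives $\hat U(t)\sim\Pois(\Phi(t))$ with $\Phi(t)=\int(1-e^{-t\theta})\,\nu(\de\theta)$. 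A Chernoff bound yields $\mathbb{P}(|\hat U(t)-\Phi(t)|>\epsilon\Phi(t))\le 2\exp(-c_\epsilon\Phi(t))$; along a geometric subsequence $t_k=\rho^k$ this is summable, Borel--Cantelli gives $\hat U(t_k)/\Phi(t_k)\to 1$ a.s., and monotonicity of $\hat U(t)$ in $t$ sandwiches the gaps to give $\hat U(t)/\Phi(t)\to 1$ a.s.\ as $t\to\infty$.

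The hard part is de-Poissonization, i.e.\ transferring the a.s.\ limit from $\hat U(N)$ back to $U_0^N$. The key observation is that each additional sample adds at most one new feature, so $|U_0^N-\hat U(N)|\le |N-\tilde N_N|$, and standard Poisson tail bounds give $|N-\tilde N_N|=O(\sqrt{N\log N})$ a.s. Combined with~\cref{lemma:broderick12_6.1}, which gives $\Phi(N)\asymp N^{\alpha}\ell(N)$ for $\alpha\in(0,1)$ and $\ell$ slowly varying, this error is $o(\Phi(N))$ whenever $\alpha>1/2$. For $\alpha\le 1/2$ the direct approach bypasses Poissonization: the Poisson functional variance identity $\mathrm{Var}(U_0^N)=\int(1-\theta)^{N}\bigl(1-(1-\theta)^{N}\bigr)\,\nu(\de\theta)\le \E U_0^N$, together with Chebyshev and Borel--Cantelli along a polynomial subsequence $N_k=k^{r}$ with $r$ chosen so that $\sum_k 1/\Phi(N_k)<\infty$, gives convergence along that subsequence, and monotonicity of $N\mapsto U_0^N$ then interpolates to all $N\to\infty$.
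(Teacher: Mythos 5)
The paper offers no proof of this lemma at all---it is imported verbatim from \citet{Broderick2012}---so your primary plan of simply citing that source matches what the paper does. (For reference, the argument behind the cited result, which this paper reuses explicitly in its proof of the proportional-scheme concentration in Section S7.2, is more direct than your sketch: conditionally on the PPP the count $U_0^N$ is a sum of independent Bernoulli indicators with success probabilities $1-(1-\theta_\ell)^N$, Freedman's exponential inequality \citep{Freedman1973} gives $\P(|U_0^N/\E U_0^N-1|\ge\epsilon)\le c e^{-\epsilon' \E U_0^N}$, and Borel--Cantelli finishes.)

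Your from-scratch sketch, however, has a genuine error in the de-Poissonization step. The claim that ``each additional sample adds at most one new feature,'' and hence $|U_0^N-\hat U(N)|\le|N-\tilde N_N|$, is false in the feature-allocation setting: each sample $X_{p,n}$ is itself a Bernoulli process exhibiting a $\Pois\left(\int\theta\,\nu(\de\theta)\right)$-distributed number of features, arbitrarily many of which may be new. (The ``one new cluster per observation'' bound belongs to species-sampling models, not feature models.) The step is repairable---bound the discrepancy by the total number of features appearing in the $|N-\tilde N_N|$ intervening samples, which has mean $|N-\tilde N_N|\int\theta\,\nu(\de\theta)$---but as written it fails. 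Two smaller points: your variance formula gives only the conditional-variance term; since the unseen-then-seen features form a thinned PPP, $U_0^N$ is exactly Poisson and $\mathrm{Var}(U_0^N)=\E U_0^N$, so your inequality survives. And that direct Chebyshev-plus-subsequence argument works for every $\alpha\in(0,1)$, not just $\alpha\le 1/2$, so the entire Poissonization/de-Poissonization detour is unnecessary; running the Chebyshev route alone (or Freedman's bound, as in the source) gives the lemma cleanly.
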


We are now ready to prove \Cref{prop:power_law_proj}.
Recall that  \Cref{prop:power_law_proj} deals with the ``projection scheme'', in which we are sampling from one population only. 
In this case, the problem can be reduced to the corresponding single population case: we can indeed bound the rate of growth of the number of variants with the corresponding rate of growth arising from the population from which we are sampling.
This can be accomplished by adopting known techniques developed for the the single-population setting.

\begin{proof}[Proof of \Cref{prop:power_law_proj}]

\textbf{Reduction to one population.} We start by considering the ``projection scheme'', in which we sample only from one population. 
In this case, given a two dimensional rate measure $\nu(\de \variantfreqperpop{1} \de \variantfreqperpop{2})$, the variants' frequencies when sampling e.g.\ only from population $1$ are obtained using the mapping theorem, by ``projecting'' $\nu$ onto the first axis indexed by $\variantfreqperpop{1}$ by integrating the measure $\nu$ with respect to $\variantfreqperpop{2}$ (from which the name ``projection'' scheme).

Following the approach of \citet{Broderick2012}, we define the ``projected'' single population rate measures, i.e., 
\begin{align*}
    \levync_{1}(\de \variantfreqperpop{1}) = \int_{s=0}^{s=1}\ratemeasproposed( \de\variantfreqperpop{1}\de s),\quad\text{and}\quad
    \levync_{2}(\de \variantfreqperpop{2}) = \int_{s=0}^{s=1}\ratemeasproposed( \de s \de\variantfreqperpop{2}).
\end{align*}

These measures are responsible for the generation of variant frequencies for each population respectively. 
In principle, we could directly apply~\Cref{lemma:broderick12_6.1} to evaluate the growth of the number of variants.
One practical difficulty is that directly checking~\cref{eq:integral_req} in conditions of~\cref{lemma:broderick12_6.1} is difficult.
Therefore, due to this difficulty, instead of directly proving that the process leads to a power law behavior for the number of variants under the projection scheme, we will provide a bound for the expected number of variants seen.
We accomplish so by bounding the rate measure and showing that the expected number of variants arising when adopting such bounds grow like a power law in the asymptotic regime.


To proceed, we first connect the expected number of variants in the projection scheme with the underlying projected rate measure. 
Intuitively, variants observed for the first time when the sample grows arbitrarily large have to be arbitrarily rare (otherwise they would have been seen in pervious samples). 
Therefore, it is intuitive to link the growth in the number of (rare new) variants with the tail of the (projected) rate measure. To formalize this, define:
\begin{align}
\begin{split}
    \bar\levync_{1}(x) := \int_x^1\int_{\variantfreqperpop{2} = 0}^{\variantfreqperpop{2} = 1}\ratemeasproposed( \de\variantfreqperpop{1}\de \variantfreqperpop{2}), \quad \text{and} \quad
    \bar\levync_{2}(x) := \int_x^1\int_{\variantfreqperpop{1}=0}^{\variantfreqperpop{1}=1}\ratemeasproposed( \de\variantfreqperpop{1}\de \variantfreqperpop{2}).
\end{split} \label{eq:tail_function}
\end{align}

Denote with $\poissonize((t_1,0))$ and $\poissonize((0,t_2))$ 
the Laplace transforms of the tail function introduced in \Cref{eq:tail_function}:
\begin{equation}
    \label{eq:laplace}
    \begin{aligned}
        \poissonize((t_1,0))&=t_1\int_{0}^1 e^{-t_1x} \bar\levync_{1}(x)\de x, \quad \text{and} \quad
        \poissonize((0,t_2))&=t_2\int_{0}^1 e^{-t_2x} \bar\levync_{2}(x) \de x.
    \end{aligned}
\end{equation}
By applying \cref{lemma:broderick12_6.3}, it follows that $\poissonize((t_1,0))$ and $\poissonize((0,t_2))$ coincide with the asymptotic mean of the number of variants seen when total sample size is $0$ in one population, and $t_\popidx$ in the other population $\popidx$:
\begin{equation}
    \label{eq:poissonized_error}
        \lim_{N_1 \to \infty} \left|\poissonize((N_1,0))-\E \news{\zerovec}{(N_1,0)} \right| =  0,\quad\text{and}\quad
         \lim_{N_2 \to \infty} \left|\poissonize((0,N_2))-\E \news{\zerovec}{(0, N_2)} \right| =  0.
\end{equation}
Then applying~\Cref{lemma:broderick12_6.4}, we have $ \news{\zerovec}{\vecsamplesize} \sim \E \news{\zerovec}{\vecsamplesize}$ a.s..



With~\Cref{eq:poissonized_error}, we can study the expected number of variants $\E \news{\zerovec}{(N_1,0)}$ by studying the asymptotic behavior of the so-called ``Poissonized'' version $\poissonize((t_1,0))$, $\poissonize((0,t_2))$. By using the Laplace transformation form in~\Cref{eq:laplace}, we can then apply~\Cref{lemma:broderick12_6.1} to connect the asymptotic rate of growth of the number of new variants to the tail of the rate measures. 

\paragraph{Power law upper bound, population $1$}
We here derive the asymptotic upper bound $\upperboundproject (\vecsamplesize)$ given in \Cref{eq:power-law_projection-upperbound}.
In particular, $\upperboundproject (\vecsamplesize)$ is an upper bound for $\poissonize((t_1,0))$ as $t_1 \to \infty$. 
Since $\poissonize((t_1,0))$ is what determines the rate of growth of rare variants in the projection scheme (when growing only population $1$, as showed in \citet[Lemma 6.3]{Broderick2012}), by bounding $\poissonize((t_1,0))$ we effectively bound the expectation $\E \news{\zerovec}{(N_1,0)}$, which is our stated goal (see \Cref{eq:projection_bounds}).

The idea underlying our proof is similar to the one used to prove that the proposed prior satisfies the integral requirements in \Cref{req:finite_mean,req:inf_mass}. 
Namely, we upper bound the rate measure using the two following observations:
\begin{align}
	\variantfreqperpop{1}+\variantfreqperpop{2}^{\rate{2}/\rate{1}}\ge \max\left\{ \variantfreqperpop{1}, \variantfreqperpop{2}^{\rate{2}/\rate{1}} \right\}
	\quad\text{and}\quad
	\variantfreqperpop{1}+\variantfreqperpop{2}\ge \max\left\{ \variantfreqperpop{1}, \variantfreqperpop{2} \right\}.
\label{eq:bounds_ineq}
\end{align}
In light of these two simple observations, for a positive $\delta>0$ it holds
\begin{align*}
    \frac{(\variantfreqperpop{1}+\variantfreqperpop{2}^{\rate{2}/\rate{1}})^{-\rate{1}}}{(\variantfreqperpop{1}+\variantfreqperpop{2})^{\corr{1}+\corr{2}}}&=(\variantfreqperpop{1}+\variantfreqperpop{2}^{\rate{2}/\rate{1}})^{-\rate{1}}(\variantfreqperpop{1}+\variantfreqperpop{2})^{-\corr{1}-\smallpositive}(\variantfreqperpop{1}+\variantfreqperpop{2})^{-\corr{2}+\smallpositive}\\
    &\le \variantfreqperpop{1}^{-\rate{1}}\variantfreqperpop{1}^{-\corr{1}-\smallpositive}\variantfreqperpop{2}^{-\corr{2}+\smallpositive}.
\end{align*}
Thus, letting
\[
	\mu_{\mathrm{upper}, 1}(\de\vecvariantfreq{}):= 
	\frac{\mass}{B(\corr{1}, \conc{1})B(\corr{2}, \conc{2})}
	\variantfreqperpop{1}^{-\rate{1}-\corr{1}-\smallpositive} \variantfreqperpop{1}^{\corr{1}-1}\variantfreqperpop{2}^{-\corr{2}+\smallpositive}\variantfreqperpop{2}^{\corr{2}-1}(1-\variantfreqperpop{1})^{\conc{1}-1}(1-\variantfreqperpop{2})^{\conc{2}-1} \de \btheta,
\]
it holds
\begin{align*}
    \ratemeasproposed(\de\vecvariantfreq{})\le\mu_{\mathrm{upper}, 1}(\de\vecvariantfreq{}).
\end{align*}

Further letting 
\[
	\bar\mu_{\mathrm{upper}, 1}(x):=\int_{x}^1\int_{0}^1 \mu_{\mathrm{upper}, 1}(\de\vecvariantfreq{}),
\]
it holds that 
\[
	\bar\nu_1(\de\vecvariantfreq{})\le \bar\mu_{\mathrm{upper}, 1}.
\]
Thus we have
\begin{align*}
    \poissonize((t_1,0))\le \upperbound((t_1,0)):= t_1\int_{0}^1 e^{-t_1x} \bar\mu_{\mathrm{upper}, 1}(x) \de x.
\end{align*}

The asymptotic behavior of right hand side as $t_1\to\infty$ can be connected to the tail of $\mu_{\mathrm{upper}, 1}$ using~\cref{lemma:broderick12_6.1}. We check the conditions of~\cref{lemma:broderick12_6.1}  for $\mu_{\mathrm{upper}, 1}(\de\vecvariantfreq{})$. 

\begin{align*}
    &\int_{0}^{x} \variantfreqperpop{1} \int_{0}^1 \mu_{\mathrm{upper}, 1}(\de\variantfreqperpop{1}\de \variantfreqperpop{2})  \\
    =&\frac{\mass}{B(\corr{1}, \conc{1})B(\corr{2}, \conc{2})}\int_{0}^x \variantfreqperpop{1} \variantfreqperpop{1}^{-\rate{1}-\corr{1}-\smallpositive} \variantfreqperpop{1}^{\corr{1}-1} (1-\variantfreqperpop{1})^{\conc{1}-1}\de\variantfreqperpop{1} \int_0^1 \variantfreqperpop{2}^{-\corr{2}+\smallpositive}\variantfreqperpop{2}^{\corr{2}-1}(1-\variantfreqperpop{2})^{\conc{2}-1} \de\variantfreqperpop{2}\\
    =&\frac{\mass B(\smallpositive, \conc{2})}{B(\corr{1}, \conc{1})B(\corr{2}, \conc{2})}\int_{0}^x \variantfreqperpop{1}^{-\rate{1}-\smallpositive} (1-\variantfreqperpop{1})^{\conc{1}-1}\de\variantfreqperpop{1} \\
    \sim& \frac{\mass B(\smallpositive, \conc{2})}{B(\corr{1}, \conc{1})B(\corr{2}, \conc{2})} \frac{1}{(1-\rate{1}-\smallpositive)} x^{1-\rate{1}-\smallpositive}
\end{align*}



Thus we have by~\cref{lemma:broderick12_6.1} that for all $\smallpositive\le \min\{\corr{1}, \corr{2}\}/2$, there exists a constant $\constantupper{1}>0$, independent of $t_1$, for which
\begin{align*}
    \poissonize((t_1,0))\le \upperbound((t_1,0))\sim \constantupper{1} t_1^{\rate{1}+\smallpositive} 
\end{align*}
We have by~\cref{eq:poissonized_error} the approximation error of $\poissonize((\samplesize{1},0))$ to $\E\left[\news{\zerovec}{(\samplesize{1},0)} \right]$ goes to 0, and in turn
\begin{align*}
    \E\left[\news{\zerovec}{(\samplesize{1},0)} \right] \le \upperbound((\samplesize{1},0))\sim \constantupper{1} \samplesize{1}^{\rate{1}+\smallpositive} 
\end{align*}

\paragraph{Power law upper bound, population $2$}
Now we derive the upper bound for population $2$, again under the projection scheme. To do so, we construct a bound for $\poissonize((0, t_2))$ using a similar argument to the one used for $\poissonize((t_1, 0))$.

For $\smallpositive>0$, in light of the bounds of \Cref{eq:bounds_ineq},

\begin{align*}
    \frac{(\variantfreqperpop{1}+\variantfreqperpop{2}^{\rate{2}/\rate{1}})^{-\rate{1}}}{(\variantfreqperpop{1}+\variantfreqperpop{2})^{\corr{1}+\corr{2}}}&=(\variantfreqperpop{1}+\variantfreqperpop{2}^{\rate{2}/\rate{1}})^{-\rate{1}}(\variantfreqperpop{1}+\variantfreqperpop{2})^{-\corr{1}-\smallpositive}(\variantfreqperpop{1}+\variantfreqperpop{2})^{-\corr{2}+\smallpositive}\\
    &\le \variantfreqperpop{2}^{-\rate{2}}\variantfreqperpop{2}^{-\corr{2}-\smallpositive}\variantfreqperpop{1}^{-\corr{1}+\smallpositive},
\end{align*}
and in turn letting 
\[
	\mu_{\mathrm{upper}, 2}(\de\vecvariantfreq{}):= \frac{\mass}{B(\corr{1}, \conc{1})B(\corr{2}, \conc{2})}\variantfreqperpop{1}^{-\corr{1}+\smallpositive} \variantfreqperpop{1}^{\corr{1}-1}\variantfreqperpop{2}^{-\rate{2}-\corr{2}-\smallpositive}\variantfreqperpop{2}^{\corr{2}-1}(1-\variantfreqperpop{1})^{\conc{1}-1}(1-\variantfreqperpop{2})^{\conc{2}-1} \de \btheta,
\]
it holds that
\begin{align*}
    \ratemeasproposed(\de\vecvariantfreq{})\le \mu_{\mathrm{upper}, 2}(\de\vecvariantfreq{}).
\end{align*}

Denote with $\bar\mu_{\mathrm{upper}, 2}(x)=\int_{x}^1\int_{0}^1 \mu_{\mathrm{upper}, 2}(\de\vecvariantfreq{}) $, we have $\bar\nu_2(\de\vecvariantfreq{})\le \bar\mu_{\mathrm{upper}, 2} $. 
Thus we have
\begin{align*}
    \poissonize((0, t_2))\le \upperbound((0, t_2)):= t_2\int_{0}^1 e^{-t_2x} \bar\mu_{\mathrm{upper}, 2}(x)\de x.
\end{align*}
We again check that the conditions of~\cref{lemma:broderick12_6.1} hold for $\mu_{\mathrm{upper}, 2}(\de\vecvariantfreq{})$. 
\begin{align*}
    &\int_{0}^{x} \variantfreqperpop{2} \int_{0}^1 \mu_{\mathrm{upper}, 2}(\de\variantfreqperpop{1}\de \variantfreqperpop{2})  \\
    =&\frac{\mass}{B(\corr{1}, \conc{1})B(\corr{2}, \conc{2})}\int_{0}^x \variantfreqperpop{2} \variantfreqperpop{2}^{-\rate{2}-\corr{2}-\smallpositive} \variantfreqperpop{2}^{\corr{2}-1} (1-\variantfreqperpop{2})^{\conc{2}-1}\de\variantfreqperpop{2} \int_0^1 \variantfreqperpop{1}^{-\corr{1}+\smallpositive}\variantfreqperpop{1}^{\corr{1}-1}(1-\variantfreqperpop{1})^{\conc{1}-1} \de\variantfreqperpop{1}\\
    =&\frac{\mass B(\smallpositive, \conc{2})}{B(\corr{1}, \conc{1})B(\corr{2}, \conc{2})}\int_{0}^x \variantfreqperpop{2}^{-\rate{2}-\smallpositive} (1-\variantfreqperpop{2})^{\conc{2}-1}\de\variantfreqperpop{2} \\
    \sim& \frac{\mass B(\smallpositive, \conc{2})}{B(\corr{1}, \conc{1})B(\corr{2}, \conc{2})} \frac{1}{(1-\rate{2}-\smallpositive)} x^{1-\rate{2}-\smallpositive}
\end{align*}

Thus for all $\smallpositive$, there exists a constant $\constantupper{2}$ which does not depend on $t_2$ such that:
\begin{align*}
    \poissonize((0, t_2))\le \upperbound((0,t_2))\sim \constantupper{2} t_2^{\rate{2}+\smallpositive}.
\end{align*}
In turn,
\begin{align*}
    \E\left[\news{\zerovec}{(0,\samplesize{2})} \right] \le \lowerbound((0,\samplesize{2}))\sim \constantupper{2} \samplesize{2}^{\rate{2}+\smallpositive}.
\end{align*}

\textbf{Power law lower bound}

To derive the asymptotic lower bound for the expected  number of new variants we adopt a similar idea to the one used to derive the upper bound.
Specifically, we find a lower bound for the model's rate measure, check that the lower bound satisfies the conditions in~\cref{lemma:broderick12_6.1}, and use that rate as a lower bound. 

To find this lower bound, we leverage the observation that 
\begin{equation}
	\variantfreqperpop{1}+\variantfreqperpop{2}\le 2\max\{ \variantfreqperpop{1},\variantfreqperpop{2} \}
	\quad\text{and}\quad
	\variantfreqperpop{1}+\variantfreqperpop{2}^{\rate{2}/\rate{1}}\le 2\max\{ \variantfreqperpop{1},\variantfreqperpop{2}^{\rate{2}/\rate{1}} \}.
\end{equation}


We start with $\rate{2}/\rate{1}\ge 1$. For $\poissonize((t_1,0))$, we use the fact that $\variantfreqperpop{1}+\variantfreqperpop{2}\le 2\max\{\variantfreqperpop{1}, \variantfreqperpop{2}\}$. We first consider subset of the unit square that $1/2 \ge \variantfreqperpop{1}\ge \variantfreqperpop{2}$, where we also have $\variantfreqperpop{1}\ge \variantfreqperpop{2}^{\rate{2}/\rate{1}}$ and $(1-\variantfreqperpop{2})\le \min\{1, 2^{1-\conc{2}}\}$. We have in this region, $\variantfreqperpop{1}=\max\{\variantfreqperpop{1},\variantfreqperpop{2}\}=\max\{\variantfreqperpop{1}, \variantfreqperpop{2}^{\rate{2}/\rate{1}}\}$. To construct a lower bound of the rate measure, we could use a rate measure whose density is 0 outside this range and within the range we used the bounds found, concretely we have:
\begin{align*}
        \frac{(\variantfreqperpop{1}+\variantfreqperpop{2}^{\rate{2}/\rate{1}})^{-\rate{1}}}{(\variantfreqperpop{1}+\variantfreqperpop{2})^{\corr{1}+\corr{2}}}&=(\variantfreqperpop{1}+\variantfreqperpop{2}^{\rate{2}/\rate{1}})^{-\rate{1}}(\variantfreqperpop{1}+\variantfreqperpop{2})^{-\corr{1}-\corr{2}}\\
        &\ge \left\{(2\variantfreqperpop{1})^{-\rate{1}} \right\} \left\{ (2\variantfreqperpop{1})^{-\corr{1}-\corr{2}} \right\}\bm{1}_{1/2 \ge \variantfreqperpop{1}\ge \variantfreqperpop{2}}
\end{align*}
To slightly simplify the integral we have when $\variantfreqperpop{2}\le 1/2$ $,(1-\variantfreqperpop{2})^{\conc{2}-1}\ge \min\{1, 2^{1-\conc{2}}\}$, combine the bounds, we can bound our proposed rate measure by 
\begin{align*}
    \ratemeasproposed(\de\vecvariantfreq{})\ge\mu_{s,1}(\de\vecvariantfreq{}):= \frac{\mass 2^{-\rate{1}-\corr{1}-\corr{2}}\min\{1, 2^{1-\conc{2}}\}}{B(\corr{1}, \conc{1})B(\corr{2}, \conc{2})}\variantfreqperpop{1}^{-\rate{1}-\corr{1}-\corr{2}} \variantfreqperpop{1}^{\corr{1}-1}\variantfreqperpop{2}^{\corr{2}-1}(1-\variantfreqperpop{1})^{\conc{1}-1} \bm{1}_{1/2 \ge \variantfreqperpop{1}\ge \variantfreqperpop{2}}
\end{align*}
Use the same arguments as before, it remains to check $\mu_{s,1}(\de\vecvariantfreq{})$ satisfy requirements of~\cref{lemma:integral_requirements}. Since we take $x\to 0$ we can focus on $x<1/2$. 
\begin{align*}
    &\int_0^x \variantfreqperpop{1} \int_0^1 \mu_{s,1}(\de\vecvariantfreq{})\\
    =& \frac{\mass 2^{-\rate{1}-\corr{1}-\corr{2}}\min\{1, 2^{1-\conc{2}}\}}{B(\corr{1}, \conc{1})B(\corr{2}, \conc{2})}\int_0^x \variantfreqperpop{1} \int_0^{\variantfreqperpop{1}} \variantfreqperpop{1}^{-\rate{1}-\corr{1}-\corr{2}} \variantfreqperpop{1}^{\corr{1}-1}\variantfreqperpop{2}^{\corr{2}-1}(1-\variantfreqperpop{1})^{\conc{1}-1}\de\vecvariantfreq{}\\
    =& \frac{\mass 2^{-\rate{1}-\corr{1}-\corr{2}}\min\{1, 2^{1-\conc{2}}\}}{B(\corr{1}, \conc{1})B(\corr{2}, \conc{2})}\int_0^x \variantfreqperpop{1}^{-\rate{1}-\corr{1}-\corr{2}} \variantfreqperpop{1}^{\corr{1}} (1-\variantfreqperpop{1})^{\conc{1}-1} \int_0^{\variantfreqperpop{1}} \variantfreqperpop{2}^{\corr{2}-1}\de\vecvariantfreq{}\\
    =&\frac{\mass 2^{-\rate{1}-\corr{1}-\corr{2}}\min\{1, 2^{1-\conc{2}}\}}{B(\corr{1}, \conc{1})B(\corr{2}, \conc{2})\corr{2}}\int_0^x \variantfreqperpop{1}^{-\rate{1}-\corr{1}-\corr{2}} \variantfreqperpop{1}^{\corr{1}} (1-\variantfreqperpop{1})^{\conc{1}-1} \variantfreqperpop{1}^{\corr{2}}\de\variantfreqperpop{1}\\
    \sim & \frac{\mass 2^{-\rate{1}-\corr{1}-\corr{2}}\min\{1, 2^{1-\conc{2}}\}}{B(\corr{1}, \conc{1})B(\corr{2}, \conc{2})\corr{2}}\int_0^x \variantfreqperpop{1}^{-\rate{1}} \de\variantfreqperpop{1}\\
    =& \frac{\mass 2^{-\rate{1}-\corr{1}-\corr{2}}\min\{1, 2^{1-\conc{2}}\}}{B(\corr{1}, \conc{1})B(\corr{2}, \conc{2})\corr{2}} \frac{1}{1-\rate{1}}x^{1-\rate{1}}
\end{align*}
we have then by~\cref{lemma:broderick12_6.1} and~\cref{eq:poissonized_error}, similar to upper bounds
\begin{align*}
   \E\left[ \news{\zerovec}{(\samplesize{1}, 0)} \right] \ge \lowerbound((\samplesize{1}, 0))\sim \constantlower{1} \samplesize{1}^{\rate{1}} 
\end{align*}

For $\poissonize((0, t_2))$, first consider the region that $1/2 \ge \variantfreqperpop{2} \ge \variantfreqperpop{2}^{\rate{2}/\rate{1}}\ge \variantfreqperpop{1}  $, where we also have $(1-\variantfreqperpop{1})\le \min\{1, 2^{1-\conc{1}}\}$. In this region, $\variantfreqperpop{2}=\max\{\variantfreqperpop{1},\variantfreqperpop{2}\}$, $\variantfreqperpop{2}^{\rate{2}/\rate{1}}=\max\{\variantfreqperpop{1}, \variantfreqperpop{2}^{\rate{2}/\rate{1}}\}$. Same as before, we lower bound our proposed rate measure with a rate whose density is 0 outside this region, concretely we have
\begin{align*}
        \frac{(\variantfreqperpop{1}+\variantfreqperpop{2}^{\rate{2}/\rate{1}})^{-\rate{1}}}{(\variantfreqperpop{1}+\variantfreqperpop{2})^{\corr{1}+\corr{2}}}&=(\variantfreqperpop{1}+\variantfreqperpop{2}^{\rate{2}/\rate{1}})^{-\rate{1}}(\variantfreqperpop{1}+\variantfreqperpop{2})^{-\corr{1}-\corr{2}}\\
        &\ge \left\{ (2\variantfreqperpop{2}^{\rate{2}/\rate{1}} )^{-\rate{1}} \right\} \left\{(2\variantfreqperpop{2})^{-\corr{1} - \corr{2}} \right\}\bm{1}_{1/2 \ge \variantfreqperpop{2} \ge \variantfreqperpop{2}^{\rate{2}/\rate{1}}\ge \variantfreqperpop{1}} \\
        &= 2^{-\rate{1}-\corr{1}-\corr{2}} \variantfreqperpop{2}^{-\rate{2}-\corr{1}-\corr{2}}\bm{1}_{1/2 \ge \variantfreqperpop{2} \ge \variantfreqperpop{2}^{\rate{2}/\rate{1}}\ge \variantfreqperpop{1}}.
\end{align*}
Similarly we have when $\variantfreqperpop{1}\le 1/2$ $,(1-\variantfreqperpop{1})^{\conc{1}-1}\ge \min\{1, 2^{1-\conc{1}}\}$, combine the bounds, we have 
\begin{align*}
    \ratemeasproposed(\de\vecvariantfreq{})\ge\mu_{s,2}(\de\vecvariantfreq{}):= \frac{\mass 2^{-\rate{1}-\corr{1}-\corr{2}}\min\{1, 2^{1-\conc{1}}\}}{B(\corr{1}, \conc{1})B(\corr{2}, \conc{2})}\variantfreqperpop{2}^{-\rate{2}-\corr{1}-\corr{2}} \variantfreqperpop{2}^{\corr{2}-1}\variantfreqperpop{1}^{\corr{1}-1}(1-\variantfreqperpop{2})^{\conc{2}-1} \bm{1}_{1/2 \ge \variantfreqperpop{2} \ge \variantfreqperpop{2}^{\rate{2}/\rate{1}}\ge \variantfreqperpop{1}}
\end{align*}
Use the same arguments as before, it remains to check conditions of~\cref{lemma:broderick12_6.1} for $\mu_{s,2}(\de\vecvariantfreq{})$. We can focus on $x\le 1/2$ since we take $x\to 0$. 
\begin{align*}
    &\int_0^x \variantfreqperpop{2} \int_0^1 \mu_{s,2}(\de\vecvariantfreq{})\\
    =& \frac{\mass 2^{-\rate{1}-\corr{1}-\corr{2}}\min\{1, 2^{1-\conc{1}}\}}{B(\corr{1}, \conc{1})B(\corr{2}, \conc{2})}\int_0^x \variantfreqperpop{2} \int_0^{\variantfreqperpop{2}^{\rate{2}/\rate{1}}} \variantfreqperpop{2}^{-\rate{2}-\corr{1}-\corr{2}} \variantfreqperpop{2}^{\corr{2}-1}\variantfreqperpop{1}^{\corr{1}-1}(1-\variantfreqperpop{2})^{\conc{2}-1}\de\vecvariantfreq{}\\
    =& \frac{\mass 2^{-\rate{1}-\corr{1}-\corr{2}}\min\{1, 2^{1-\conc{1}}\}}{B(\corr{1}, \conc{1})B(\corr{2}, \conc{2})}\int_0^x \variantfreqperpop{2}^{-\rate{2}-\corr{1}-\corr{2}} \variantfreqperpop{2}^{\corr{2}} (1-\variantfreqperpop{2})^{\conc{2}-1} \int_0^{\variantfreqperpop{2}^{\rate{2}/\rate{1}}} \variantfreqperpop{1}^{\corr{1}-1}\de\vecvariantfreq{}\\
    =&\frac{\mass 2^{-\rate{1}-\corr{1}-\corr{2}}\min\{1, 2^{1-\conc{1}}\}}{B(\corr{1}, \conc{1})B(\corr{2}, \conc{2})\corr{1}}\int_0^x \variantfreqperpop{2}^{-\rate{2}-\corr{1}-\corr{2}} \variantfreqperpop{2}^{\corr{2}} (1-\variantfreqperpop{2})^{\conc{2}-1} \variantfreqperpop{2}^{\corr{1}\frac{\rate{2}}{\rate{1}}}\de\variantfreqperpop{2}\\
    \sim & \frac{\mass 2^{-\rate{1}-\corr{1}-\corr{2}}\min\{1, 2^{1-\conc{1}}\}}{B(\corr{1}, \conc{1})B(\corr{2}, \conc{2})\corr{1}}\int_0^x \variantfreqperpop{2}^{-(\rate{2}-\corr{1}(\frac{\rate{2}}{\rate{1}}-1))} \de\variantfreqperpop{2}\\
    =& \frac{\mass 2^{-\rate{1}-\corr{1}-\corr{2}}\min\{1, 2^{1-\conc{1}}\}}{B(\corr{1}, \conc{1})B(\corr{2}, \conc{2})\corr{1}} \frac{1}{1-(\rate{2}-\corr{1}(\frac{\rate{2}}{\rate{1}}-1))}x^{1-(\rate{2}-\corr{1}(\frac{\rate{2}}{\rate{1}}-1))}
\end{align*}
we have by~\cref{lemma:broderick12_6.1} and~\cref{eq:poissonized_error}
\begin{align*}
    \E \news{\zerovec}{(0, \samplesize{2})}\ge \lowerbound((0, \samplesize{2}))\sim \constantlowerwithrateprime{1} \samplesize{2}^{\rate{2}-\corr{1}(\frac{\rate{2}}{\rate{1}}-1)} 
\end{align*}

Alternatively we observe that $(\variantfreqperpop{1}+\variantfreqperpop{2}^{\rate{2}/\rate{1}})^{-\rate{1}}\ge (\variantfreqperpop{1}+\variantfreqperpop{2})^{-\rate{1}}$ in the range of $\variantfreqperpop{2}\ge \variantfreqperpop{1}$ and 
\begin{align*}
    \frac{(\variantfreqperpop{1}+\variantfreqperpop{2}^{\rate{2}/\rate{1}})^{-\rate{1}}}{(\variantfreqperpop{1}+\variantfreqperpop{2})^{\corr{1}+\corr{2}}}&=(\variantfreqperpop{1}+\variantfreqperpop{2}^{\rate{2}/\rate{1}})^{-\rate{1}}(\variantfreqperpop{1}+\variantfreqperpop{2})^{-\corr{1}-\corr{2}}\\
    &\ge (\variantfreqperpop{1}+\variantfreqperpop{2})^{-\corr{1}-\corr{2}-\rate{1}}\bm{1}_{1/2 \ge \variantfreqperpop{2} \ge \variantfreqperpop{1}}\\
    &\ge \variantfreqperpop{2}^{-\corr{1}-\corr{2}-\rate{1}}\bm{1}_{1/2 \ge \variantfreqperpop{2} \ge \variantfreqperpop{1}}
\end{align*}
Similarly we use that when $\variantfreqperpop{1}\le 1/2$ $,(1-\variantfreqperpop{1})^{\conc{1}-1}\ge \min\{1, 2^{1-\conc{1}}\}$, combine the bounds, we have 
\begin{align*}
    \ratemeasproposed(\de\vecvariantfreq{})\ge\mu'_{s,2}(\de\vecvariantfreq{}):= \frac{\mass 2^{-\rate{1}-\corr{1}-\corr{2}}\min\{1, 2^{1-\conc{1}}\}}{B(\corr{1}, \conc{1})B(\corr{2}, \conc{2})}\variantfreqperpop{2}^{-\rate{1}-\corr{1}-\corr{2}} \variantfreqperpop{2}^{\corr{2}-1}\variantfreqperpop{1}^{\corr{1}-1}(1-\variantfreqperpop{2})^{\conc{2}-1} \bm{1}_{1/2 \ge \variantfreqperpop{2} \ge \variantfreqperpop{1}}
\end{align*}
We check the conditions in~\cref{lemma:broderick12_6.1}
\begin{align*}
    &\int_0^x \variantfreqperpop{2} \int_0^1 \mu'_{s,2}(\de\vecvariantfreq{})\\
    =& \frac{\mass 2^{-\rate{1}-\corr{1}-\corr{2}}\min\{1, 2^{1-\conc{1}}\}}{B(\corr{1}, \conc{1})B(\corr{2}, \conc{2})}\int_0^x \variantfreqperpop{2} \int_0^{\variantfreqperpop{2}} \variantfreqperpop{2}^{-\rate{1}-\corr{1}-\corr{2}} \variantfreqperpop{2}^{\corr{2}-1}\variantfreqperpop{1}^{\corr{1}-1}(1-\variantfreqperpop{2})^{\conc{2}-1}\de\vecvariantfreq{}\\
    =& \frac{\mass 2^{-\rate{1}-\corr{1}-\corr{2}}\min\{1, 2^{1-\conc{1}}\}}{B(\corr{1}, \conc{1})B(\corr{2}, \conc{2})}\int_0^x \variantfreqperpop{2}^{-\rate{1}-\corr{1}-\corr{2}} \variantfreqperpop{2}^{\corr{2}} (1-\variantfreqperpop{2})^{\conc{2}-1} \int_0^{\variantfreqperpop{2}} \variantfreqperpop{1}^{\corr{1}-1}\de\vecvariantfreq{}\\
    =&\frac{\mass 2^{-\rate{1}-\corr{1}-\corr{2}}\min\{1, 2^{1-\conc{1}}\}}{B(\corr{1}, \conc{1})B(\corr{2}, \conc{2})\corr{1}}\int_0^x \variantfreqperpop{2}^{-\rate{1}-\corr{1}-\corr{2}} \variantfreqperpop{2}^{\corr{2}} (1-\variantfreqperpop{2})^{\conc{2}-1} \variantfreqperpop{2}^{\corr{1}}\de\variantfreqperpop{2}\\
    \sim & \frac{\mass 2^{-\rate{1}-\corr{1}-\corr{2}}\min\{1, 2^{1-\conc{1}}\}}{B(\corr{1}, \conc{1})B(\corr{2}, \conc{2})\corr{1}}\int_0^x \variantfreqperpop{2}^{-\rate{1}} \de\variantfreqperpop{2}\\
    =& \frac{\mass 2^{-\rate{1}-\corr{1}-\corr{2}}\min\{1, 2^{1-\conc{1}}\}}{B(\corr{1}, \conc{1})B(\corr{2}, \conc{2})\corr{1}} \frac{1}{1-\rate{1}}x^{1-\rate{1}}
\end{align*}
we have by~\cref{lemma:broderick12_6.1} and~\cref{eq:poissonized_error}
\begin{align*}
    \E\left[\news{\zerovec}{(0, \samplesize{2})} \right] \ge \lowerbound((0, \samplesize{2}))\sim \constantlowerwithotherrate{1} \samplesize{2}^{\rate{1}} 
\end{align*}
Combining the two bounds we can also conclude that 
\begin{align*}
   \E \left[\news{\zerovec}{(0, \samplesize{2})}\right]\ge \lowerbound((0, \samplesize{2},0))\sim \max\{\constantlowerwithotherrate{1}\samplesize{2}^{\rate{1}}, \constantlowerwithrateprime{1}\samplesize{2}^{\rate{2}'} \}
\end{align*}

When $\rate{2}/\rate{1}\le 1$ the proof essentially switched the role of population 1 and population 2 in previous proofs. 

We start with $\poissonize((t_1,0))$. 

Consider the range $1/2 \ge \variantfreqperpop{1} \ge \variantfreqperpop{2}^{\rate{2}/\rate{1}}\ge \variantfreqperpop{2}  $, where $\variantfreqperpop{1}+\variantfreqperpop{2}^{\rate{2}/\rate{1}}\le 2\variantfreqperpop{1}$ and $\variantfreqperpop{1}+\variantfreqperpop{2}\le 2\variantfreqperpop{1}$

We have 
\begin{align*}
    \frac{(\variantfreqperpop{1}+\variantfreqperpop{2}^{\rate{2}/\rate{1}})^{-\rate{1}}}{(\variantfreqperpop{1}+\variantfreqperpop{2})^{\corr{1}+\corr{2}}}&=(\variantfreqperpop{1}+\variantfreqperpop{2}^{\rate{2}/\rate{1}})^{-\rate{1}}(\variantfreqperpop{1}+\variantfreqperpop{2})^{-\corr{1}-\corr{2}}\\
    &\ge \left\{ (2\variantfreqperpop{1} )^{-\rate{1}} \right\} \left\{(2\variantfreqperpop{1})^{-\corr{1} - \corr{2}} \right\}\bm{1}_{1/2 \ge \variantfreqperpop{1} \ge \variantfreqperpop{2}^{\rate{2}/\rate{1}}\ge \variantfreqperpop{2}  } \\
    &= 2^{-\rate{1}-\corr{1}-\corr{2}} \variantfreqperpop{1}^{-\rate{1}-\corr{1}-\corr{2}}\bm{1}_{1/2 \ge \variantfreqperpop{1} \ge \variantfreqperpop{2}^{\rate{2}/\rate{1}}\ge \variantfreqperpop{2}  }.
\end{align*}
Combining with  $(1-\variantfreqperpop{2})^{\conc{2}-1}\le \min\{1, 2^{1-\conc{2}}\}$ when $\variantfreqperpop{2}\le 1/2$ we have
\begin{align*}
    \ratemeasproposed(\de\vecvariantfreq{})\ge\mu_{s,1}(\de\vecvariantfreq{}):= \frac{\mass 2^{-\rate{1}-\corr{1}-\corr{2}}\min\{1, 2^{1-\conc{2}}\}}{B(\corr{1}, \conc{1})B(\corr{2}, \conc{2})}\variantfreqperpop{1}^{-\rate{1}-\corr{1}-\corr{2}} \variantfreqperpop{2}^{\corr{2}-1}\variantfreqperpop{1}^{\corr{1}-1}(1-\variantfreqperpop{1})^{\conc{1}-1} \bm{1}_{1/2 \ge \variantfreqperpop{1} \ge \variantfreqperpop{2}^{\rate{2}/\rate{1}}\ge \variantfreqperpop{2}}
\end{align*}
Use the same arguments as before, it remains to check whether $\mu_{s,1}(\de\vecvariantfreq{})$ satisfy conditions in~\cref{lemma:broderick12_6.1}. We can focus on $x\le 1/2$ since we will want $x\to 0$. 
\begin{align*}
    &\int_0^x \variantfreqperpop{1} \int_0^1 \mu_{s,2}(\de\vecvariantfreq{})\\
    =& \frac{\mass 2^{-\rate{1}-\corr{1}-\corr{2}}\min\{1, 2^{1-\conc{2}}\}}{B(\corr{1}, \conc{1})B(\corr{2}, \conc{2})}\int_0^x \variantfreqperpop{1} \int_0^{\variantfreqperpop{1}^{\rate{1}/\rate{2}}} \variantfreqperpop{1}^{-\rate{1}-\corr{1}-\corr{2}} \variantfreqperpop{1}^{\corr{2}-1}\variantfreqperpop{2}^{\corr{1}-1}(1-\variantfreqperpop{1})^{\conc{1}-1}\de\vecvariantfreq{}\\
    =& \frac{\mass 2^{-\rate{1}-\corr{1}-\corr{2}}\min\{1, 2^{1-\conc{2}}\}}{B(\corr{1}, \conc{1})B(\corr{2}, \conc{2})}\int_0^x \variantfreqperpop{1}^{-\rate{1}-\corr{1}-\corr{2}} \variantfreqperpop{1}^{\corr{2}} (1-\variantfreqperpop{1})^{\conc{1}-1} \int_0^{\variantfreqperpop{1}^{\rate{1}/\rate{2}}} \variantfreqperpop{2}^{\corr{2}-1}\de\vecvariantfreq{}\\
    =&\frac{\mass 2^{-\rate{1}-\corr{1}-\corr{2}}\min\{1, 2^{1-\conc{2}}\}}{B(\corr{1}, \conc{1})B(\corr{2}, \conc{2})\corr{2}}\int_0^x \variantfreqperpop{1}^{-\rate{1}-\corr{1}-\corr{2}} \variantfreqperpop{1}^{\corr{2}} (1-\variantfreqperpop{1})^{\conc{1}-1} \variantfreqperpop{1}^{\corr{2}\frac{\rate{1}}{\rate{2}}}\de\variantfreqperpop{1}\\
    \sim & \frac{\mass 2^{-\rate{1}-\corr{1}-\corr{2}}\min\{1, 2^{1-\conc{2}}\}}{B(\corr{1}, \conc{1})B(\corr{2}, \conc{2})\corr{2}}\int_0^x \variantfreqperpop{1}^{-(\rate{1}-\corr{2}(\frac{\rate{1}}{\rate{2}}-1))} \de\variantfreqperpop{1}\\
    =& \frac{\mass 2^{-\rate{1}-\corr{1}-\corr{2}}\min\{1, 2^{1-\conc{2}}\}}{B(\corr{1}, \conc{1})B(\corr{2}, \conc{2})\corr{2}} \frac{1}{1-(\rate{1}-\corr{2}(\frac{\rate{1}}{\rate{2}}-1))}x^{1-(\rate{1}-\corr{2}(\frac{\rate{1}}{\rate{2}}-1))}
\end{align*}
we have by~\cref{lemma:broderick12_6.1} and~\cref{eq:poissonized_error}
\begin{align*}
    \E \left[\news{\zerovec}{(\samplesize{1}, 0)}\right]\ge \lowerbound((\samplesize{1},0))\sim \constantlowerwithrateprime{2} \samplesize{1}^{\rate{1}-\corr{2}(\frac{\rate{1}}{\rate{2}}-1)} 
\end{align*}

Alternatively we observe that $(\variantfreqperpop{1}+\variantfreqperpop{2}^{\rate{2}/\rate{1}})^{-\rate{1}}\ge (\variantfreqperpop{1}^{\rate{2}/\rate{1}}+\variantfreqperpop{2}^{\rate{2}/\rate{1}})^{-\rate{1}}$ in the range of $1/2 \ge \variantfreqperpop{1}\ge \variantfreqperpop{2}$ and $\variantfreqperpop{1}^{\rate{2}/\rate{1}}+\variantfreqperpop{2}^{\rate{2}/\rate{1}} \le 2 \variantfreqperpop{1}^{\rate{2}/\rate{1}}$

We have 
\begin{align*}
    \frac{(\variantfreqperpop{1}+\variantfreqperpop{2}^{\rate{2}/\rate{1}})^{-\rate{1}}}{(\variantfreqperpop{1}+\variantfreqperpop{2})^{\corr{1}+\corr{2}}}&=(\variantfreqperpop{1}+\variantfreqperpop{2}^{\rate{2}/\rate{1}})^{-\rate{1}}(\variantfreqperpop{1}+\variantfreqperpop{2})^{-\corr{1}-\corr{2}}\\
    &\ge (\variantfreqperpop{1}^{\rate{2}/\rate{1}}+\variantfreqperpop{2}^{\rate{2}/\rate{1}})^{-\rate{1}}(\variantfreqperpop{1}+\variantfreqperpop{2})^{-\corr{1}-\corr{2}}\bm{1}_{1/2 \ge \variantfreqperpop{1}\ge \variantfreqperpop{2} }\\
    &\ge \left\{ (2\variantfreqperpop{1}^{\rate{2}/\rate{1}} )^{-\rate{1}} \right\} \left\{(2\variantfreqperpop{1})^{-\corr{1} - \corr{2}} \right\}\bm{1}_{1/2 \ge \variantfreqperpop{1}\ge \variantfreqperpop{2} } \\
    &= 2^{-\rate{1}-\corr{1}-\corr{2}} \variantfreqperpop{1}^{-\rate{2}-\corr{1}-\corr{2}}\bm{1}_{1/2 \ge \variantfreqperpop{1}\ge \variantfreqperpop{2} }.
\end{align*}
Combining with $(1-\variantfreqperpop{2})^{\conc{2}-1}\le \min\{1, 2^{1-\conc{2}}\}$ as $\variantfreqperpop{2}\le 1/2$, we have 
\begin{align*}
    \ratemeasproposed(\de\vecvariantfreq{})\ge\mu'_{s,1}(\de\vecvariantfreq{}):= \frac{\mass 2^{-\rate{2}-\corr{1}-\corr{2}}\min\{1, 2^{1-\conc{2}}\}}{B(\corr{1}, \conc{1})B(\corr{2}, \conc{2})}\variantfreqperpop{1}^{-\rate{2}-\corr{1}-\corr{2}} \variantfreqperpop{1}^{\corr{2}-1}\variantfreqperpop{2}^{\corr{2}-1}(1-\variantfreqperpop{1})^{\conc{1}-1} \bm{1}_{1/2 \ge \variantfreqperpop{1} \ge \variantfreqperpop{2}}
\end{align*} 

We check the conditions in~\cref{lemma:broderick12_6.1}
\begin{align*}
    &\int_0^x \variantfreqperpop{2} \int_0^1 \mu'_{s,1}(\de\vecvariantfreq{})\\
    =& \frac{\mass 2^{-\rate{2}-\corr{1}-\corr{2}}\min\{1, 2^{1-\conc{2}}\}}{B(\corr{1}, \conc{1})B(\corr{2}, \conc{2})}\int_0^x \variantfreqperpop{1} \int_0^{\variantfreqperpop{1}} \variantfreqperpop{1}^{-\rate{2}-\corr{1}-\corr{2}} \variantfreqperpop{1}^{\corr{1}-1}\variantfreqperpop{2}^{\corr{2}-1}(1-\variantfreqperpop{1})^{\conc{1}-1}\de\vecvariantfreq{}\\
    =& \frac{\mass 2^{-\rate{2}-\corr{1}-\corr{2}}\min\{1, 2^{1-\conc{2}}\}}{B(\corr{1}, \conc{1})B(\corr{2}, \conc{2})}\int_0^x \variantfreqperpop{1}^{-\rate{2}-\corr{1}-\corr{2}} \variantfreqperpop{1}^{\corr{1}} (1-\variantfreqperpop{1})^{\conc{1}-1} \int_0^{\variantfreqperpop{1}} \variantfreqperpop{2}^{\corr{2}-1}\de\vecvariantfreq{}\\
    =&\frac{\mass 2^{-\rate{2}-\corr{1}-\corr{2}}\min\{1, 2^{1-\conc{2}}\}}{B(\corr{1}, \conc{1})B(\corr{2}, \conc{2})\corr{2}}\int_0^x \variantfreqperpop{1}^{-\rate{2}-\corr{1}-\corr{2}} \variantfreqperpop{1}^{\corr{1}} (1-\variantfreqperpop{1})^{\conc{1}-1} \variantfreqperpop{1}^{\corr{2}}\de\variantfreqperpop{1}\\
    \sim & \frac{\mass 2^{-\rate{2}-\corr{1}-\corr{2}}\min\{1, 2^{1-\conc{2}}\}}{B(\corr{1}, \conc{1})B(\corr{2}, \conc{2})\corr{2}}\int_0^x \variantfreqperpop{1}^{-\rate{2}} \de\variantfreqperpop{1}\\
    =& \frac{\mass 2^{-\rate{2}-\corr{1}-\corr{2}}\min\{1, 2^{1-\conc{2}}\}}{B(\corr{1}, \conc{1})B(\corr{2}, \conc{2})\corr{2}} \frac{1}{1-\rate{2}}x^{1-\rate{2}}
\end{align*}

we have by~\cref{lemma:broderick12_6.1} and~\cref{eq:poissonized_error}
\begin{align*}
    \E\left[\news{\zerovec}{(\samplesize{1}, 0)}\right]\ge \altlowerbound(( \samplesize{1},0))\sim \constantlowerwithotherrate{2} \samplesize{1}^{\rate{2}} 
\end{align*}

Combine the two bounds we can also conclude that 
\begin{align*}
    \E\left[\news{\zerovec}{(\samplesize{1}, 0)}\right]\ge \lowerbound(( \samplesize{1},0))\sim \max\{\constantlowerwithotherrate{2}\samplesize{1}^{\rate{2}}, \constantlowerwithrateprime{2}\samplesize{1}^{\rate{1}'} \}
\end{align*}

For $\poissonize((0, t_2))$, we consider the range $1/2 \ge \variantfreqperpop{2}\ge \variantfreqperpop{1}$, where we also have $\variantfreqperpop{2}^{\rate{2}/\rate{1}}\ge \variantfreqperpop{1}$, therefore we have 
\begin{align*}
    \frac{(\variantfreqperpop{1}+\variantfreqperpop{2}^{\rate{2}/\rate{1}})^{-\rate{1}}}{(\variantfreqperpop{1}+\variantfreqperpop{2})^{\corr{1}+\corr{2}}}&=(\variantfreqperpop{1}+\variantfreqperpop{2}^{\rate{2}/\rate{1}})^{-\rate{1}}(\variantfreqperpop{1}+\variantfreqperpop{2})^{-\corr{1}-\corr{2}}\\
    &\ge \left\{ (2\variantfreqperpop{2}^{\rate{2}/\rate{1}} )^{-\rate{1}} \right\} \left\{(2\variantfreqperpop{2})^{-\corr{1} - \corr{2}} \right\}\bm{1}_{1/2 \ge \variantfreqperpop{2}\ge \variantfreqperpop{1} } \\
    &= 2^{-\rate{1}-\corr{1}-\corr{2}} \variantfreqperpop{2}^{-\rate{2}-\corr{1}-\corr{2}}\bm{1}_{1/2 \ge \variantfreqperpop{2}\ge \variantfreqperpop{1} }.
\end{align*}

Combining with $(1-\variantfreqperpop{1})^{\conc{1}}\le \min\{1, 2^{1-\conc{1}}\}$ when $\variantfreqperpop{1}\le 1/2$ We have
\begin{align*}
    \ratemeasproposed(\de\vecvariantfreq{})\ge\mu_{s,1}(\de\vecvariantfreq{}):= \frac{\mass 2^{-\rate{1}-\corr{1}-\corr{2}}\min\{1, 2^{1-\conc{1}}\}}{B(\corr{1}, \conc{1})B(\corr{2}, \conc{2})}\variantfreqperpop{2}^{-\rate{2}-\corr{1}-\corr{2}} \variantfreqperpop{2}^{\corr{2}-1}\variantfreqperpop{1}^{\corr{1}-1}(1-\variantfreqperpop{2})^{\conc{2}-1} \bm{1}_{1/2 \ge \variantfreqperpop{2}\ge \variantfreqperpop{1}}
\end{align*}

We then check the condition of \cref{lemma:broderick12_6.1}. We can focus on $x\le 1/2$ since we will want $x\to 0$. 
\begin{align*}
    &\int_0^x \variantfreqperpop{1} \int_0^1 \mu_{s,1}(\de\vecvariantfreq{})\\
    =& \frac{\mass 2^{-\rate{1}-\corr{1}-\corr{2}}\min\{1, 2^{1-\conc{1}}\}}{B(\corr{1}, \conc{1})B(\corr{2}, \conc{2})}\int_0^x \variantfreqperpop{2} \int_0^{\variantfreqperpop{2}} \variantfreqperpop{2}^{-\rate{2}-\corr{1}-\corr{2}} \variantfreqperpop{2}^{\corr{2}-1}\variantfreqperpop{1}^{\corr{1}-1}(1-\variantfreqperpop{2})^{\conc{2}-1}\de\vecvariantfreq{}\\
    =& \frac{\mass 2^{-\rate{1}-\corr{1}-\corr{2}}\min\{1, 2^{1-\conc{1}}\}}{B(\corr{1}, \conc{1})B(\corr{2}, \conc{2})}\int_0^x \variantfreqperpop{2}^{-\rate{2}-\corr{1}-\corr{2}} \variantfreqperpop{2}^{\corr{2}} (1-\variantfreqperpop{2})^{\conc{2}-1} \int_0^{\variantfreqperpop{2}} \variantfreqperpop{1}^{\corr{1}-1}\de\vecvariantfreq{}\\
    =&\frac{\mass 2^{-\rate{1}-\corr{1}-\corr{2}}\min\{1, 2^{1-\conc{1}}\}}{B(\corr{1}, \conc{1})B(\corr{2}, \conc{2})\corr{1}}\int_0^x \variantfreqperpop{2}^{-\rate{2}-\corr{1}-\corr{2}} \variantfreqperpop{2}^{\corr{2}} (1-\variantfreqperpop{2})^{\conc{2}-1} \variantfreqperpop{2}^{\corr{1}}\de\variantfreqperpop{2}\\
    \sim & \frac{\mass 2^{-\rate{1}-\corr{1}-\corr{2}}\min\{1, 2^{1-\conc{1}}\}}{B(\corr{1}, \conc{1})B(\corr{2}, \conc{2})\corr{1}}\int_0^x \variantfreqperpop{2}^{-\rate{2}} \de\variantfreqperpop{2}\\
    =& \frac{\mass 2^{-\rate{1}-\corr{1}-\corr{2}}\min\{1, 2^{1-\conc{1}}\}}{B(\corr{1}, \conc{1})B(\corr{2}, \conc{2})\corr{1}} \frac{1}{1-\rate{2}}x^{1-\rate{2}}
\end{align*}

we have by \cref{lemma:broderick12_6.1} and \cref{eq:poissonized_error}
\begin{align*}
    \E \left[ \news{\zerovec}{(0,\samplesize{2})} \right] \ge \lowerbound((0,\samplesize{2}))\sim \constantlower{2} \samplesize{2}^{\rate{2}}.
\end{align*}

Combine the two bounds, we get the expectation part of the theorem.

Law of large number results are established by directly apply Proposition 6.4 of~\citet{Broderick2012}. We have $\news{\zerovec}{(0,\samplesize{2})}\sim \E \left[ \news{\zerovec}{(0,\samplesize{2})} \right]\ a.s. $ and $\news{\zerovec}{(\samplesize{1},0)}\sim \E \left[ \news{\zerovec}{(\samplesize{1},0)} \right]\ a.s. $.

\end{proof}

\subsection{Power law bounds in proportional scheme}
\label{sec:proportion_scheme_proof}
Here we give the proof of~\Cref{coro:proportional}. The key observation is that total number of variants seen can be upper bounded by the sum of variants seen in each population (the ``double counting'') and lower bounded by variants seen in either population. 

\begin{proof}
For a proportional sample $(\samplesize{1},\samplesize{2})=(\samplesize{},\lceil\propt\samplesize{}\rceil)$ with $\propt>0$ and $\samplesize{\popidx}\to\infty$ being integers. We first observe that by assumptions on sample size, we have $\samplesize{2}/\samplesize{1}\to\propt$

We can bound the number of new variants $\news{\bm{0}}\samplesize{}$ by
\begin{align*}
    \max\{\news{\bm{0}}{(\samplesize{},0)},\news{\bm{0}}{0,\frac{\samplesize{2}}{\samplesize{1}}\samplesize{}}\}\le \news{\bm{0}}{(\samplesize{},a\samplesize{})}\le \news{\bm{0}}{(\samplesize{},0)}+\news{\bm{0}}{(0,a\samplesize{})}\le 2\max\{\news{\bm{0}}{(\samplesize{},0)},\news{\bm{0}}{0,a\samplesize{}}\}
\end{align*}
Use the result in projection scheme. For $\rate{2}/\rate{1}\ge 1$, we have
\begin{align*}
    \E \left[\news{\bm{0}}{(\samplesize{},\frac{\samplesize{2}}{\samplesize{1}}\samplesize{})}\right]& \ge\max\{\lowerbound((\samplesize{1},0)),\lowerbound((0,\samplesize{2}))\} \\
    &\sim \max\{\constantlower{1}\samplesize{1}^{\rate{1}}, \constantlowerwithotherrate{1}\samplesize{2}^{\rate{2}-\corr{1}(\frac{\rate{2}}{\rate{1}}-1)}, \constantlowerwithotherrate{1}\samplesize{2}^{\rate{1}}\} \\
    &=\max\{\constantlower{1}(1+\frac{\samplesize{2}}{\samplesize{1}})^{\rate{1}}(\samplesize{1}+\samplesize{2})^{\rate{1}}, \\
    &~~~~~~\constantlowerwithotherrate{1}[(1+\frac{\samplesize{2}}{\samplesize{1}})/\frac{\samplesize{2}}{\samplesize{1}}]^{\rate{2}-\corr{1}(\frac{\rate{2}}{\rate{1}}-1)}(\samplesize{1}+\samplesize{2})^{\rate{2}-\corr{1}(\frac{\rate{2}}{\rate{1}}-1)}, \\
    &~~~~~~\constantlowerwithotherrate{1}[(1+\frac{\samplesize{2}}{\samplesize{1}})/\frac{\samplesize{2}}{\samplesize{1}}]^{\rate{1}}(\samplesize{1}+\samplesize{2})^{\rate{1}}\} \\
    &\ge \min\{\constantlower{1}(1+\frac{\samplesize{2}}{\samplesize{1}})^{\rate{1}}, \constantlowerwithotherrate{1}[(1+\frac{\samplesize{2}}{\samplesize{1}})/\frac{\samplesize{2}}{\samplesize{1}}]^{\rate{2}-\corr{1}(\frac{\rate{2}}{\rate{1}}-1)}, \constantlowerwithotherrate{1}[(1+\frac{\samplesize{2}}{\samplesize{1}})/\frac{\samplesize{2}}{\samplesize{1}}]^{\rate{1}}\} \times \\
    &~~~~~~(\samplesize{1}+\samplesize{2})^{\max\{\rate{1}, \rate{2}-\corr{1}(\frac{\rate{2}}{\rate{1}}-1) \}}\\
    &\to \min\{\constantlower{1}(1+\propt)^{\rate{1}}, \constantlowerwithotherrate{1}[(1+\propt)/\propt]^{\rate{2}-\corr{1}(\frac{\rate{2}}{\rate{1}}-1)}, \constantlowerwithotherrate{1}[(1+\propt)/\propt]^{\rate{1}}\} \times \\
    &~~~~~~(\samplesize{1}+\samplesize{2})^{\max\{\rate{1}, \rate{2}-\corr{1}(\frac{\rate{2}}{\rate{1}}-1) \}}
\end{align*}

For $\rate{2}/\rate{1}\le 1$, we have
\begin{align*}
    \E\left[\news{\bm{0}}{(\samplesize{},\frac{\samplesize{2}}{\samplesize{1}}\samplesize{})}\right]& \ge\max\{\lowerbound((\samplesize{1},0)),\lowerbound((0,\samplesize{2}))\} \\
    &\sim \max\{\constantlowerwithrateprime{2}\samplesize{1}^{\rate{2}}, \constantlower{2}\samplesize{2}^{\rate{2}}, \constantlowerwithotherrate{2}\samplesize{1}^{\rate{2}}\} \\
    &=\max\{\constantlowerwithrateprime{2}(1+\frac{\samplesize{2}}{\samplesize{1}})^{\rate{2}}(\samplesize{1}+\samplesize{2})^{\rate{2}}, \\
    &~~~~~~\constantlower{2}[(1+\frac{\samplesize{2}}{\samplesize{1}})/\frac{\samplesize{2}}{\samplesize{1}}]^{\rate{2}}(\samplesize{1}+\samplesize{2})^{\rate{2}}, \\
    &~~~~~~\constantlowerwithotherrate{2}(1+\frac{\samplesize{2}}{\samplesize{1}})^{\rate{1}-\corr{2}(\frac{\rate{1}}{\rate{2}}-1)}(\samplesize{1}+\samplesize{2})^{\rate{1}-\corr{2}(\frac{\rate{1}}{\rate{2}}-1)}\} \\
    &\ge \min\{\constantlowerwithrateprime{2}(1+\frac{\samplesize{2}}{\samplesize{1}})^{\rate{2}}, \constantlower{2}[(1+\frac{\samplesize{2}}{\samplesize{1}})/\frac{\samplesize{2}}{\samplesize{1}}]^{\rate{2}}, \constantlowerwithotherrate{2}(1+\frac{\samplesize{2}}{\samplesize{1}})^{\rate{1}-\corr{2}(\frac{\rate{1}}{\rate{2}}-1)}\} \times \\
    &~~~~~~(\samplesize{1}+\samplesize{2})^{\max\{\rate{2}, \rate{1}-\corr{2}(\frac{\rate{1}}{\rate{2}}-1) \}}\\
    &\to \min\{\constantlowerwithrateprime{2}(1+\propt)^{\rate{2}}, \constantlower{2}[(1+\propt)/\propt]^{\rate{2}}, \constantlowerwithotherrate{2}(1+\propt)^{\rate{1}-\corr{2}(\frac{\rate{1}}{\rate{2}}-1)}\} \times \\
    &~~~~~~(\samplesize{1}+\samplesize{2})^{\max\{\rate{2}, \rate{1}-\corr{2}(\frac{\rate{1}}{\rate{2}}-1) \}}
\end{align*}

Now we can have the upper bound, for any $\smallpositive>0$, we have
\begin{align*}
    \E\left[\news{\bm{0}}{(\samplesize{},\frac{\samplesize{2}}{\samplesize{1}}\samplesize{})}\right]& \le2\max\{\upperbound((\samplesize{1},0)),\lowerbound((0,\samplesize{2}))\} \\
    &\sim 2\max\{ \constantupper{1} \samplesize{1}^{\rate{1}+\smallpositive}, \constantupper{2}\samplesize{2}^{\rate{2}+\smallpositive}  \}\\
    &=2\max\{  \constantupper{1}(1+\frac{\samplesize{2}}{\samplesize{1}})^{\rate{1}+\smallpositive} (\samplesize{1}+\samplesize{2})^{\rate{1}+\smallpositive}, \constantupper{2}[(1+\frac{\samplesize{2}}{\samplesize{1}})/\frac{\samplesize{2}}{\samplesize{1}}]^{\rate{2}+\smallpositive}(\samplesize{1}+\samplesize{2})^{\rate{2}+\smallpositive} \}\\
    &\le 2\max\{ \constantupper{1}(1+\frac{\samplesize{2}}{\samplesize{1}})^{\rate{1}+\smallpositive}, \constantupper{2}[(1+\frac{\samplesize{2}}{\samplesize{1}})/\frac{\samplesize{2}}{\samplesize{1}}]^{\rate{2}+\smallpositive} \}(\samplesize{1}+\samplesize{2})^{\max\{\rate{1},\rate{2}\}+\smallpositive}\\
    &\to 2\max\{ \constantupper{1}(1+\propt)^{\rate{1}+\smallpositive}, \constantupper{2}[(1+\propt)/\propt]^{\rate{2}+\smallpositive} \}(\samplesize{1}+\samplesize{2})^{\max\{\rate{1},\rate{2}\}+\smallpositive}
\end{align*}

Law of large number results are established by same method of \citet{Broderick2012} and \citet{Gnedin2007}, by applying the large deviation bounds in \citet[pp.911, Theorem 4]{Freedman1973}, i.e., we have for any $\epsilon>0$, there exists an $\epsilon'$ only depends on $\epsilon$ and some $c$ such that
\begin{align*}
    P\left( \left| \frac{\news{\bm{0}}{(\samplesize{},\frac{\samplesize{2}}{\samplesize{1}}\samplesize{})}}{\E\left[\news{\bm{0}}{(\samplesize{},\frac{\samplesize{2}}{\samplesize{1}}\samplesize{})}\right]}-1\right|\ge \epsilon \right)<ce^{-\epsilon'\E\left[\news{\bm{0}}{(\samplesize{},\frac{\samplesize{2}}{\samplesize{1}}\samplesize{})}\right]}
\end{align*}
Using the derived upper bound on $\E\left[\news{\bm{0}}{(\samplesize{},\frac{\samplesize{2}}{\samplesize{1}}\samplesize{})}\right]$, the right hand side is summable, thus by Borrel Cantelli we have $\news{\bm{0}}{(\samplesize{},\frac{\samplesize{2}}{\samplesize{1}}\samplesize{})}\sim \E\left[\news{\bm{0}}{(\samplesize{},\frac{\samplesize{2}}{\samplesize{1}}\samplesize{})}\right]$ a.s..

\end{proof}

\subsection{Power law bounds when $\rate{1}=\rate{2}=\rate{}$}
\label{sec:two_rate_the_same}
We give proof of~\Cref{coro:proportional_power-law-rate-align} when $\rate{1}=\rate{2}=\rate{}$.

\begin{proof}
    We use the same bound from proportional scheme with $\rate{1}=\rate{2}=\rate{}$
    \begin{align*}
        \max\{ \E \news{\bm{0}}{\samplesize{1},0},\E \news{\bm{0}}{0,\samplesize{2}}\} \le \E \news{\bm{0}}{\samplesize{1},\samplesize{2}}\le \E\news{\bm{0}}{\samplesize{1},0}+\E\news{\bm{0}}{0,\samplesize{2}}\le  2\max\{ \E\news{\bm{0}}{\samplesize{1},0},\E\news{\bm{0}}{0,\samplesize{2}}\}
    \end{align*}
    We can lower bound the left hand side, with any $\min\{\corr{1},\corr{2}\}>\smallpositive>0$ by
    \begin{align*}
        \max\{ \E\news{\bm{0}}{\samplesize{1},0},\E\news{\bm{0}}{0,\samplesize{2}}\}&\ge \max\{ \constantlower{1}\samplesize{1}^{\rate{}}, \constantlowerwithotherrate{1}\samplesize{2}^{\rate{}} \}\\
        &\ge \min\{\constantlower{1},\constantlowerwithotherrate{1}\}\max\{\samplesize{1},\samplesize{2}\}^{\rate{}}\\
        &\ge 2^{-\rate{}}\min\{\constantlower{1},\constantlowerwithotherrate{1}\}(\samplesize{1}+\samplesize{2})^{\rate{}}
    \end{align*}
    And the right hand side can be upper bounded in a similar manner, for any $\smallpositive>0$:
    \begin{align*}
        2\max\{ \E\news{\bm{0}}{\samplesize{1},0},\E\news{\bm{0}}{0,\samplesize{2}}\}&\le 2\max\{\constantupper{1},\constantupper{2}\}\max\{\samplesize{1},\samplesize{2}\}^\rate{}\\
        &\le 2\max\{\constantupper{1},\constantupper{2}\}(\samplesize{1}+\samplesize{2})^{\rate{}+\smallpositive}
    \end{align*}

\end{proof}

\begin{figure}
    \centering
    \includegraphics[width = 0.6\textwidth]{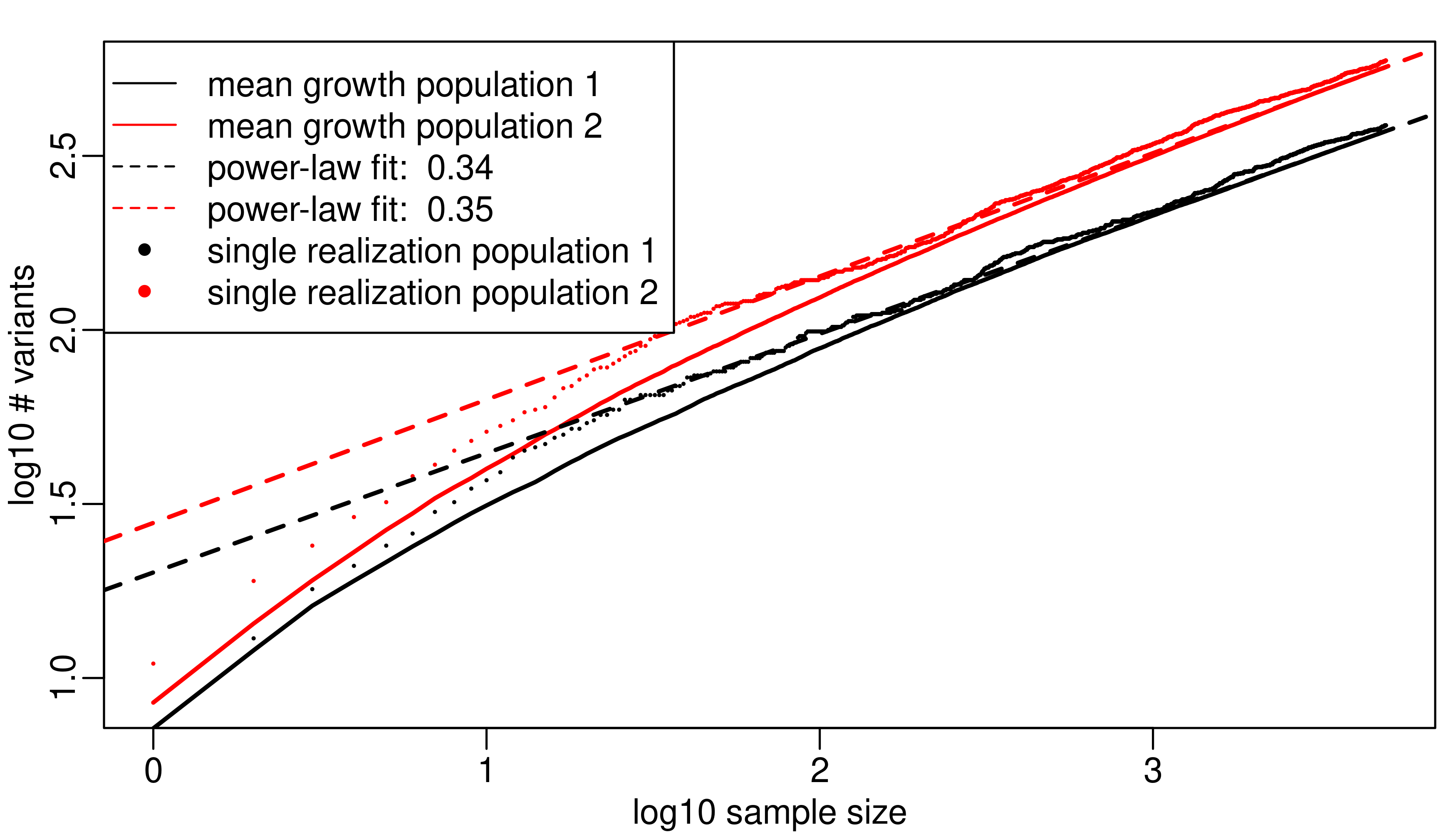}
    \caption{Simulated sampling curve with $\mass = 10, \rate{1}=0.3, \rate{2}=0.6, \conc{1}=\conc{2}=1, \corr{i}=\corr{2}=0.5$ in the projection scheme, i.e. sampling population 1 or population 2 only. The mean growth curves (shown in lines) were calculated using 100 independent realizations and a single realization is shown as dots. The bound for population 2 is 0.3 and 0.6 and the predicted power law rate for population 1 is 0.3. We fit the line using the last 1/3 of the mean growth curve. The fitted powerlaw align well with our theoretical prediction.}
\end{figure}

\end{document}